\DeclareMathOperator*{\argmin}{argmin}
\DeclareMathOperator*{\argmax}{argmax}
\newtheorem{definition}{Definition}[section]
\newtheorem{theorem}{Theorem}[section]
\newtheorem{proposition}[theorem]{Proposition}
\newtheorem{corollary}[theorem]{Corollary}
\newtheorem{claim}[theorem]{Claim}
\newtheorem{lemma}[theorem]{Lemma}
\newcommand{\naturals}{{\mathbb{N}}}
\newcommand{\pth}[1]{\left( #1 \right)}
\newcommand{\calE}{{\mathcal{E}}}
\newcommand{\calF}{{\mathcal{F}}}
\newcommand{\calL}{{\mathcal{L}}}
\newcommand{\calM}{{\mathcal{M}}}
\newcommand{\calP}{{\mathcal{P}}}
\newcommand{\calS}{{\mathcal{S}}}
\newcommand{\calT}{{\mathcal{T}}}
\newcommand{\calV}{{\mathcal{V}}}
\title{Reaching Approximate Byzantine Consensus with Multi-hop Communication\thanks{This research is supported in part by National Science Foundation awards NSF 1329681.
Any opinions, findings, and conclusions or recommendations expressed here are those of the authors
and do not necessarily reflect the views of the funding agencies or the U.S. government.}
}
\author{Lili Su\thanks{Address: 459 Coordinated Science Lab MC 228, 1308 W. Main St.
Urbana Illinois 61801, USA. \\
Email: lilisu3@illinois.edu. \\
Tel.: +12178985405
}}
\author{Nitin Vaidya}
\affil{ Department of Electrical and Computer Engineering\\ University of Illinois at Urbana-Champaign}
\begin{document}
\date{~}

\maketitle

%\centerline{November, 2014 }

~
\begin{abstract}
We address the problem of reaching consensus in the presence of Byzantine faults. In particular, we are interested in investigating the impact of messages relay on the network connectivity for a correct iterative approximate Byzantine consensus algorithm to exist.
The network is modeled by a simple directed graph. We assume a node can send messages to another node that is up to $l$ hops away via forwarding by the intermediate nodes on the routes,
where $l\in \naturals$ is a natural number. % and $l^*$ is the length of a longest path in the network.
%
%This arises in synchronous systems, when each node update its state every $l$ time intervals.
We characterize the necessary and sufficient topological conditions on the network structure. The tight conditions we found are consistent with the tight conditions identified in \cite{Vaidya2012IABC} for $l=1$, where only local communication is allowed, and are strictly weaker
for $l>1$. Let $l^*$ denote the length of a longest path in the given network.  For $l\ge l^*$ and undirected graphs, our conditions hold if and only if  $n\ge 3f+1$ and the node-connectivity of the given graph is at least $2f+1$ , where $n$ is the total number of nodes and $f$ is the maximal number of Byzantine nodes; and for $l\ge l^*$ and directed graphs, our conditions is equivalent to the tight condition found in \cite{Tseng2014}, wherein exact Byzantine consensus is considered.

Our sufficiency is shown by constructing a correct algorithm, wherein the trim function is constructed based on investigating a newly introduced minimal messages cover property.
%The trim functions proposed is a generalization of the previous intensively adopted trimming functions \cite{Vaidya2012IABC, Tseng}, and
The trim function proposed also works over multi-graphs.

%The existing literature either assumes full forwarding communication, where a node can send messages to another node as long as there is a path; or assumes no forwarding communication, where a node can only send messages to its outgoing neighbors. Our work provides a bridge between these two streams of literature by encompassing these two models as special cases, when $l=l^*$ and $l=1$, respectively.
%Since our analysis and algorithm are both based on messages, our results also apply to multi-graphs network.
\end{abstract}

%\begin{center}
%Regular Submission\\
%\end{center}

%\begin{center}
%To be considered for the best student paper award
%\end{center}

\newpage
\section{Introduction}
\label{sec:intro}

%Consensus is fundamental to diverse applications such as data aggregation \cite{KDG03}, distributed estimation \cite{SMG2009}, distributed optimization \cite{Tsitsiklis1986}, distributed classification \cite{Forero:2010:CDS:1756006.1859906}, and flocking \cite{Jadbabaie2003}.

%A Byzantine fault is an arbitrary fault that encompasses both omission faults (e.g., crash faults, failing to receive a request, or failing to send a response) and commission failures (e.g., processing a request incorrectly, corrupting local state, and/or sending an incorrect or inconsistent response to a request).
%Dolev et al. \cite{AA_Dolev_1986} introduced the notion of
%{\em approximate Byzantine consensus} by relaxing the requirement
%of {\em exact} consensus \cite{AA_nancy}.
%The goal in approximate consensus is to allow the fault-free nodes to agree on values that are approximately equal to each other (and not necessarily
%exactly identical).
%While {\em exact} consensus
%is impossible in {\em asynchronous} systems \cite{FLP_one_crash} in presence of Byzantine faults, approximate
%consensus is achievable \cite{AA_Dolev_1986}.
%The notion of approximate consensus is of interest in {\em synchronous}
%systems as well, since approximate consensus can be achieved using
%distributed algorithms that do not require complete knowledge of
%the network topology \cite{AA_convergence_markov}.
%The discussion in this paper applies to synchronous systems. However, analogous results can be obtained for an asychronous system too.

 Reaching consensus resiliently in the presence of Byzantine faults has been studied extensively in distributed computing \cite{Lamport1982, Rabin1983, Ben-Or1983, AA_Fekete_aoptimal,Ben-Or2010}. %communication networks \cite{Jaggi2007}, and mobile robotics \cite{Agmon2004}.
Messages relay is the relaying of a message from its source toward its ultimate destination through intermediate nodes.
We say a messages relay is bounded if each source node can only reliably/noislessly send messages to a destination node that is up to $l$ hops away, where $l\in \naturals$, termed as relay depth. Our focus is on investigating the tradeoff between the relay depth $l$ and the network connectivity for a correct iterative approximate Byzantine consensus algorithm to exist. Let $l^*$ be the length of a longest path in the network. The two special cases, $l\ge l^*$ and $l=1$, respectively, have already been well studied.

%
%The bounded packet forwarding phenomenon arises when the channel between two remotely located (i.e., greater than $l$ hops away ) is too noisy such that the destination node cannot recover the transmitted message. It may caused by some specific algorithmic mechanism, such as in iterative algorithms, computing agents update their states every $l$ time intervals and ignore the packets sent in previous iterations. Our focus is on the impact of the bounded packet forwarding on reaching approximate consensus resiliently.
Under the full forwarding model, i.e., $l\ge l^*$, a node is able to reliably send messages to another node via every possible route in the network. Let $n$ be the total number of nodes in the network, it has been shown that given $f$ Byzantine nodes, if the network node-connectivity is at least $2f+1$ and $n\ge 3f+1$, there exist algorithmic solutions for the fault-free nodes to reach consensus over all possible inputs. Conversely, if the network node-connectivity is strictly less than $2f+1$ or $n<3f+1$, then reaching consensus is not guaranteed \cite{impossible_proof_lynch}. Thus $2f+1$ node-connectivity and $n\ge 3f+1$ are both necessary and sufficient.
However,  as a result of this communication assumption, the proposed algorithms require fault-free nodes to keep track of the \emph{entire} network topology, leading to huge consumption of both memory resource and computation power.
In contrast, iterative algorithms are typically characterized by local communication (among neighbors, or near-neighbors), simple computations performed repeatedly, and a small amount of state per node.
The purely local communication model (i.e., $l=1$), where a node can only send messages to its neighbors and no message forwarding is allowed, has also attracted extensive attention among researchers \cite{Benezit, leblanc_HiCoNs,jadbabaie_concensus,Vaidya2012IABC,Vaidyamatrix, vaidyaII}.
%\cite{AA_Dolev_1986,AA_nancy} present
%iterative approximate Byzantine consensus (IABC) algorithms that work correctly in fully connected graphs.
%Fekete \cite{AA_Fekete_aoptimal} studies the convergence rate
%of approximate consensus algorithms. There have been attempts at achieving approximate
%consensus iteratively in {\em partially} connected graphs. Kieckhafer and Azadmanesh
%examined the necessary conditions in order to achieve ``local'' convergence in
%synchronous \cite{AA_PCN_Local} and asynchronous \cite{AA_async_PCN} systems.
%\cite{AA_PFCN} presents a specific class of networks in which convergence
%condition can be satisfied using iterative algorithms.
%\cite{IBA_broadcast_Sundaram, LeBlanc}
%consider a restricted fault model in which the faulty nodes are restricted
%to sending identical messages to their neighbors.
It has been shown that a correct iterative approximate Byzantine algorithm exists if and only if for any node partition $L, C, R, F$ of a graph such that $L\not=\O$, $R\not=\O$ and $|F|\le f$, either there exists a node $i\in L$ such that $|N_i^{-}\cap (R\cup C)|\ge f+1$ or there exists a node $i\in R$ such that $|N_i^{-}\cap (L\cup C)|\ge f+1$, where $N_i^{-}$ is the collection of incoming neighbors of node $i$.

 Our main contribution is to provide a family of tight sufficient and necessary conditions on the network topology for a correct iterative consensus algorithm to exist. Our sufficiency is proved by constructing a new simple iterative algorithm, whose trim function is based on investigating a newly introduced minimal messages cover property.
% We also show that for undirected graph and $l=l^*$, our results are equivalent to the classic results for unconstraint algorithms, i.e., a node is able to store any information collection so far, is able to send messages to a node as long as there is a path between them, and each node has unbounded memory and local computation power.
 Our results bridge the existing aforementioned two streams of work, i.e., when $l\ge l^*$ and $l=1$, respectively, and fill the gap between these two models.% The second main contribution of this paper is to introduce the notion of messages cover, which plays a key role in the proof of the correctness of the proposed algorithm.

The rest of the paper is  organized as follows. Section \ref{sec: model} presents our models and the structure of iterative algorithms of interest. Our necessary condition is demonstrated in Section \ref{sec:necessary}, whose sufficiency is proved constructively in Section \ref{sec:sufficiency}. We shown in Section \ref{sec: extension} that our results are equivalent to the $2f+1$ node-connectivity and $n\ge 3f+1$ conditions for undirected graph when $l\ge l^*$. Section \ref{sec:conclusion} discusses possible relaxations of our fault model and concludes the paper.

\section{Problem setup and structure of iterative algorithms}
\label{sec: model}
%
%In the first part of this section, we provide a formal statement of the communication model and the fault model. In the second part of this section, we formally describe the structure of iterative algorithm of interest. With a slight abuse of terminology, we will use the terms {\em node} and {\em vertex} interchangeably in our presentation.
%
%\subsection{Problem setup}
\paragraph{Communication model}
The system is assumed to be {\em synchronous}.
The communication network is modeled as a simple {\em directed} graph $G$. Define two functions $\calV(\cdot)$ and $\calE(\cdot)$ over a graph $G$ as follows:  $\calV(G)=\{1,\dots,n\}$ returns the set of $n$ nodes, where $n\ge 2$, and $\calE(G)$ returns the set of directed edges between nodes in $\calV(G)$. %We assume that $n\geq 2$, since the consensus problem for $n=1$ is trivial.
Node $i$ can send messages to node $j$ if and only if there exists an $i,j$--path of length at most $l$ in $G$, where $l\in \naturals$ is a natural number.
In addition, we assume each node can send messages to itself as well, i.e., $(i,i)\in\calE(G)$ for all $i\in\calV(G)$. %Thus node $i$ can send messages to node $j$ if and only if there exists an $i,j$--walk of length $l$.
For each node $i$, let $N_i^{l-}$ be the set of nodes that can reach node $i$ via at most $l$ hops.
Similarly, denote the set of nodes that are reachable from node $i$ via at most $l$ hops by $N_i^{l+}$. Due to the existence of self-loops, $i\in N_i^{l-}$ and $i\in N_i^{l+}$. When $l=1$, we write $N_i^{1-}$ and $N_i^{1+}$ as $N_i^{-}$ and $N_i^{+}$, respectively, for simplicity.
Note that node $i$ may send a message to node $j$ via different $i, j$--paths. To capture this distinction in transmission routes, we represent a message as a tuple $m=(w, P)$, where $w\in \mathbb{R}$ and $P$ indicates the path via which message $m$ should be transmitted. Four functions are defined over $m$. Let function $\mathsf{value}$ be $\mathsf{value}(m)=w$ and let  $\mathsf{path}$ be $\mathsf{path}(m)=P$, whose images are the first entry and the second entry, respectively, of message $m$. %Similarly, define $\mathsf{path}$ by $\mathsf{path}(m)=p$, which returns the second entry of $m$.
In addition, functions $\mathsf{source}$ and $\mathsf{destination}$ are defined by  $\mathsf{source}(m)=i$ and $\mathsf{destination}(m)=j$ if  $P$ is an $i, j$--path, i.e., message $m$ is sent from node $i$ to node $j$.

 % wheresource $i$ is the ending node/sink node of path $p$.

%Since we exclude self-loops from $\calE$, $i\not\in N_i^{l-}\cup N_i^{(1)+}$.
%Note that it is possible that $i\in N_i^{l-}\cup N_i^{(2)+}$ and that $N_i^{l-}\cap N_i^{l-}\not=\O$ as well as $N_i^{(1)+}\cap N_i^{(2)+}\not=\O$.

 %The generalization our results to different {\em asynchronous} network models is briefly discussed in Section \ref{sec:async}.

\paragraph{Fault model}
Let $\calF\subseteq \calV(G)$ be the collection of faulty nodes in the system. We consider the Byzantine fault model with up to $f$ nodes becoming faulty, i.e., $|\calF|\le f$.  A faulty node may {\em misbehave} arbitrarily. Possible misbehavior includes sending incorrect and mismatching (or inconsistent) messages to different neighbors. In addition, a faulty node $k\in \calF$ may tamper message $m$ if it is in the transmission path, i.e., $k\in \calV(\mathsf{path}(m))$\footnote{Recall that $\calV(\cdot)$ is the vertex set of a given graph and $\calV(\mathsf{path}(m))$ denotes the collection of vertices along the route of message $m$, including the source and the

  destination.}.
However, faulty nodes are only able to tamper $\mathsf{value}(m)$, leaving $\mathsf{path}(m)$ unchanged. This assumption is placed for ease of exposition, later in Section \ref{sec:conclusion} we relax this assumption by considering the possibilities that faulty nodes may also tamper messages paths or even fake and transmit non-existing messages.
%\todo[inline]{If faulty nodes can even fake the paths, it seems that both the necessary condition and sufficient condition remain to be the same.}
%
 Faulty nodes are also assumed to have complete knowledge of the execution of the algorithm, including the states of all nodes,
contents of messages the other nodes send to each other, and
the algorithm specification, so that they may potentially collaborate with each other adaptively.

%We relax the assumption in Section \ref{sec:conclusion}. In Section \ref{sec:conclusion}, we also take into account the possibilities that in addition to tampering messages values, faulty nodes may also change messages paths or even fake and transmit non-existing messages.

\paragraph{Iterative approximate Byzantine consensus (IABC) algorithms}
\label{sec:iabc}

The iterative algorithms considered in this paper should have the following structure:
Each node $i$ maintains state $v_i$, with $v_i[t]$ denoting the state
of node $i$ at the {\em end}\, of the $t$-th iteration of the algorithm.
Initial state of node $i$,
$v_i[0]$, is equal to the initial {\em input}\, provided to node $i$.
At the {\em start} of the $t$-th iteration ($t>0$), the state of
node $i$ is $v_i[t-1]$.
The IABC algorithms of interest will require each node $i$
to perform the following three steps in iteration $t$, where $t>0$.
Note that the faulty nodes may deviate from this specification.

\begin{enumerate}
\item {\em Transmit step:} Transmit messages of the form $(v_i[t-1], \cdot)$ to nodes in $N_i^{l+}$, i.e., the nodes that are reachable from node $i$ via at most $l$ hops. If node $i$ is an intermediate node on the route of some message, then node $i$ forwards that message as instructed by the message path.
% $N_i^{(1)+}\cup N_i^{(2)+}$, i.e., the first and the second order outgoing neighbors of node $i$.
\item {\em Receive step:} Receive messages from $N_i^{l-}$, i.e., the nodes that can reach node $i$ via at most $l$ hops.
Denote by $\calM_i[t]$ the set of messages that node $i$ received
at iteration $t$. %The size of set $\calM_i[t]$ is $\sum_{j\in N_i^{l-}}|N_j^{-}|$.
\item \textit{Update step:} Node $i$ updates its state using a transition function $Z_i$, where $Z_i$ is a part of the specification of the algorithm, and takes as input the set $\calM_i[t]$.% Note that $\calM_i[t]$ contains $v_i[t-1]$ because $i\in N_i^{l-}$.
\begin{eqnarray}
v_i[t]=Z_i(\calM_i[t]).
\label{eq:Z_i}
\end{eqnarray}
% Since the inputs are real numbers, the state of each node in each iteration is also viewed as a real number.
\end{enumerate}
Note that at the $t$--th iteration, between step two and step three, by sending message to itself node $i$ is able to memorize its state in the immediate preceding iteration, i.e,. $v_i[t-1]$. However, at the end of update step, except for its updated state $v_i[t]$, no other information collected in current iteration or any of the previous iteration will be kept by node $i$. In step three, in general, $Z_i$ is some trim function over the received messages collection $\calM_i[t]$. The trimming strategy may depends on message values, message routes, or both. In addition, different nodes are allowed to have different trimming strategies.

%Let $U[t]$ and $\mu[t]$ be defined as follows.
%assuming that $\calF$
%is the set of Byzantine faulty nodes, with the nodes
%in $\calV-\calF$ being fault-free.%\footnote{\normalsize For sets $X$ and $Y$, $X-Y$ contains elements that are in $X$ but not in $Y$. That is, $X-Y=\{i~|~ i\in X,~i\not\in Y\}$.}
Let $U[t]$ be the largest state among the fault-free nodes at the end of the $t$-th iteration, i.e., $U[t] = \max_{i\in\calV-\calF}\,v_i[t]$.
Since the initial state of each node is equal to its input,
$U[0]$ is equal to the maximum value of the initial input at the fault-free nodes. Similarly, we define $\mu[t]$ to be the smallest state at the $t$--th iteration and $\mu[0]$ to be the smallest initial input.
%\begin{itemize}
%% \item Let $\calF$ denote the set of Byzantine faulty nodes, where $|\calF|\leq f$. Thus, the set of fault-free nodes is $\calV-\calF$.
%
%
%\item $U[t] = \max_{i\in\calV-\calF}\,v_i[t]$. $U[t]$ is the largest state among the fault-free nodes at the end of the $t$-th iteration.
%Since the initial state of each node is equal to its input,
%$U[0]$ is equal to the maximum value of the initial input at the fault-free nodes.
%
%\item $\mu[t] = \min_{i\in\calV-\calF}\,v_i[t]$. $\mu[t]$ is the smallest state among the fault-free nodes at the end of the $t$-th iteration.
%$\mu[0]$ is equal to the minimum value of the initial input at the
%fault-free nodes.
%\end{itemize}
For an IABC algorithm to be correct, the following two conditions must be satisfied:
\begin{itemize}
\item {\em Validity:} $\forall~~ t>0,
~~\mu[t]\ge \mu[0]
~\mbox{~~and~~}~
~U[t]\le U[0]$

\item {\em Convergence:} $\lim_{\,t\rightarrow\infty} ~ U[t]-\mu[t] = 0$
\end{itemize}
%The validity requires that at each iteration, the states range of fault-free nodes $\left[\mu[t], U[t]\right]$ lies within the initial range $\left[\mu[0], U[0]\right]$. However, in our proposed algorithm: an even stronger notion of validity is guaranteed: In our algorithm, $\mu[t]\ge \mu[t-1]$ and $U[t]\le U[t-1]$, that is, the range $\left[\mu[t], U[t]\right]$ is monotonically diminishing over time.

Our focus is to identify the necessary and sufficient
conditions for the existence of a {\em correct} IABC algorithm (i.e.,
an algorithm satisfying the above validity and convergence conditions)
for a given $G$ and a given $l$.% Without messages forwarding, i.e., $l=1$, this problem has already been studied in \cite{Vaidya2012IABC}. We restate the conditions here for completeness.
%\begin{theorem}
%\label{withoutRelay}
%Given a graph $G$, without forwarding, a correct IABC algorithm exists if and only if for any node partition $L, C, R, F$ of $G$ such that $L\not=\O$, $R\not=\O$ and $|F|\le f$, either there exists a node $i\in L$ such that $|N_i^{-}\cap (R\cup C)|\ge f+1$ or there exists a node $i\in R$ such that $|N_i^{-}\cap (L\cup C)|\ge f+1$.
%\end{theorem}
%
%The objective of this paper is to characterize the tradeoff between the depth of forwarding $l$ and the size of the family of graphs over which correct IABC algorithms exist.

%When $l=1$, our necessary and sufficient condition coincides with the one provided in \cite{Vaidya2012IABC}, which states that:
% For any node partition $L, C, R, F$ of $G$ such that $L\not=\O$, $R\not=\O$ and $|F|\le f$, either there exists a node $i\in L$ such that $|N_i^{-}\cap (R\cup C)|\ge f+1$ or there exists a node $i\in R$ such that $|N_i^{-}\cap (L\cup C)|\ge f+1$.

\section{Necessary Conditions}
\label{sec:necessary}

For a correct IABC algorithm to exist, the underlying network $G$ must satisfy the conditions presented in this section. A couple of definitions are needed before we are able to formally state our necessary conditions.

\begin{definition}
Let $W$ be a set of vertices in $G$ and $x$ be a vertex in $G$ such that $x\notin W$. A $W, x$--path is a path from some vertex $w\in W$ to vertex $x$. A set $S$ of vertices such that $x\notin S$ is a $W,x$--\emph{vertex cut} if every $W,x$--path contains a vertex in $S$. The minimum size of a $W,x$--vertex cut is called the $W,x$--connectivity and is denoted by $\kappa(W,x)$. Similarly, a set $S_l$ of vertices is an $l$--\emph{restricted vertex cut} if the deletion of $S_l$ destroys all $W,x$--paths of length at most $l$. Let $\kappa_l(W,x)$ be the minimum size of such restricted vertex cut in $G$.
\end{definition}
The first part of the above definition is the classical definition of node connectivity in graph theory. However, this definition is a global notion. In our communication model, we implicitly assume that each fault-free node only knows the local network topology up to its $l$--th neighborhood. We adapt node connectivity to our model by restricting the length of the paths under consideration. Note that $\kappa_l(W,x)=\kappa(W,x)$ for all $l\ge l^*$, and that a $1$--restricted vertex cut of $(W, x)$ is the number of node $x$'s incoming neighbors in $W$, i.e., $\kappa_1(W,x)=\left|W\cap N_{x}^{-}\right|$.

 %We now define relations $\Rightarrow_l$ and $\not\Rightarrow_l$ that are used frequently in our subsequent discussion.
\begin{definition}
For non-empty disjoint sets of nodes $A$ and $B$ in $G$, we say
 $A\Rightarrow_l B$ if and only if
 there exists a node $i\in B$ such that $\kappa_{l}(A, i)\ge f+1$;
$A\nRightarrow_l B$ otherwise.
\end{definition}

Let $F\subseteq \calV(G)$ be a set of vertices in $G$, denote the induced subgraph\footnote{An induced subgraph of $G$, induced by vertex set $S\subseteq \calV(G)$, is the subgraph $H$ with vertex set $S$ such that $\calE(H)=\{(u,v)\in \calE(G): ~ u, v\in S\}$. Recall that $\calV(\cdot)$ and $\calE(\cdot)$ are the vertex set and edge set, respectively, of a given graph.} of $G$ induced by vertex set $\calV(G)-F$ by $G_F$. We describe the necessary and sufficient condition below, whose necessity is proved in Theorem \ref{thm:nc} and sufficiency is shown constructively in Section \ref{sec:sufficiency}. For ease of future reference, we termed the condition as \emph{Condition NC}.

\noindent\textbf{Condition NC}:
For any node partition $L, C, R, F$ of $G$ such that $L\not=\O, R\not=\O$ and $|F|\le f$, in the induced subgraph $G_F$,
at least one of the two conditions below must be true: (i) $R\cup C\Rightarrow_l L$; (ii) $L\cup C\Rightarrow_l R$. %We refer to this condition as \em{Condition NC}.

Intuitively, Condition NC requires that either the set of nodes in $R\cup C$ are able to collectively influence the state of a node in $L$ or vice versa. Note that when $l=1$, Condition NC becomes\\
\emph{``
For any node partition $L, C, R, F$ of $G$ such that $L\not=\O, R\not=\O$ and $|F|\le f$, in the induced subgraph $G_F$,
at least one of the two conditions below must be true: (i) there exists a node $i\in L$ such that $\left |\pth{R\cup C}\cap N_i^{-}\right |\ge f+1$; (ii) there exists a node $j\in R$ such that $\left |\pth{L\cup C}\cap N_j^{-}\right |\ge f+1$."}, which is shown to be both necessary and sufficient without message relay in \cite{Vaidya2012IABC}.

\begin{theorem}
\label{thm:nc}
Suppose that a correct IABC algorithm exists for $G$. Then $G$ satisfies Condition NC.
\end{theorem}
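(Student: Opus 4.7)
My plan is to argue by contradiction: assume a correct IABC algorithm $\calA$ exists while Condition NC fails, and exhibit an execution in which convergence must fail. Failure of NC yields a partition $L, C, R, F$ of $\calV(G)$ with $L, R\neq\emptyset$ and $|F|\leq f$ such that in $G_F$ every $i\in L$ admits an $l$-restricted $(R\cup C, i)$-vertex cut $S_i$ with $|S_i|\leq f$, and every $j\in R$ admits an analogous $(L\cup C, j)$-cut $S_j$ with $|S_j|\leq f$.

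The bad execution $\gamma$ I intend to build takes $F$ as the Byzantine set, assigns fault-free inputs $v_i[0]=0$ for $i\in L\cup C$ and $v_j[0]=1$ for $j\in R$, and lets the adversary tamper the value field of every $F$-originated or $F$-relayed message to $0$ when the destination lies in $L$ and to $1$ when it lies in $R$. To show that each $i\in L$ is frozen at $0$ in $\gamma$, I will pair $\gamma$ with an auxiliary execution $\alpha_i$ whose Byzantine set is $S_i$ (size at most $f$, disjoint from $F$ since $S_i\subseteq \calV(G_F)$), whose fault-free inputs are all $0$, and whose adversary in $S_i$ may tamper freely. Validity applied to $\alpha_i$ then pins $v_i^{\alpha_i}[t]=0$ for all $t$; determinism of $\calA$ propagates this to $\gamma$ provided $i$'s entire received-message sequence agrees in the two executions. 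A symmetric construction using $S_j$ and an all-ones fault-free assignment forces $v_j^{\gamma}[t]=1$ for every $j\in R$, and combining the two with $L, R\neq\emptyset$ contradicts convergence.

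The step I expect to require the most care is verifying that $i$'s view in $\alpha_i$ can be made identical to its view in $\gamma$. I plan to do this by splitting into four cases for each incoming message, according to whether its path $p$ meets $F$ and whether $p$ meets $S_i$. When $p$ meets $F$, the $\gamma$-value is tampered to $0$, matching the all-zero fault-free value transmitted in $\alpha_i$; when $p$ meets $S_i$, the adversary in $\alpha_i$ is free to choose a tampered value identical to whatever $\gamma$ delivers. The critical case is $p$ meeting neither: then $p$ lies entirely in $G_F\setminus S_i$ and has length at most $l$, so the cut property of $S_i$ forbids the source $s$ from lying in $R\cup C$, forcing $s\in L$ and hence $v_s^\gamma = 0 = v_s^{\alpha_i}$. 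This is exactly where the full $l$-restricted cut structure of $S_i$ is used; a mere size bound on $|S_i|$ would not rule out a short fault-free path from $R$ carrying the value $1$ undisturbed to $i$.
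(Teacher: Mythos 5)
Your proposal is correct and follows essentially the same route as the paper's proof: a contradiction built from a partition violating Condition NC, an adversary in $F$ that pushes values toward the $L$-extreme on $L$-destined messages and toward the $R$-extreme on $R$-destined ones, and the $l$-restricted cut $S_i$ serving as the alternative plausible fault set whose cut property blocks every short untampered path from $R\cup C$. The only cosmetic difference is that you package the indistinguishability as a single auxiliary execution with all-equal fault-free inputs (so validity alone pins $v_i$), whereas the paper argues directly from node $i$'s perspective via two scenarios whose validity constraints squeeze $v_i[1]$ from both sides; both arguments turn on exactly the same use of the restricted vertex cut.
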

We prove this theorem in Appendix \ref{app:necessary}. Our proof shares the same proof structure of Theorem 1 in \cite{Vaidya2012IABC}.
The basic idea is as follows: Suppose there exists a correct IABC algorithm, then we are able to find a node partition satisfying the conditions as listed in Condition NC,
such that under some Byzantine layout, and for some specific initial inputs, convergence condition will be violated.

The above necessary condition is in general stronger than the necessary condition derived under single-hop message transmission model (i.e., $l=1$) \cite{Vaidya2012IABC}.
Consider the system depicted in Fig. \ref{example}.  The topology of this system does not satisfy the necessary condition derived in \cite{Vaidya2012IABC}. Since in the node partition $L=\{p_1, p_4\}, R=\{p_2, p_3\}, C=\O$ and $F=\{p_5\}$, neither $L\cup C\Rightarrow_l R$ nor $R\cup C\Rightarrow_l L$ holds for $l=1$. However, via enumeration it can be seen that the above graph (depicted in Fig. \ref{example}) satisfies Condition NC when $l\ge 2$. Nevertheless, increasing relay depth does not always admit more graph structures. For instance, for $n=4$, $f=1$ and any $l$, the only graph that satisfy Condition NC is the complete graph.

It follows from the definition of Condition NC that if a graph $G$ satisfies Condition NC for $l\in \{1,\ldots, n-1\}$, then $G$ also satisfies Condition NC for all $l^{\prime}\ge l$. Let $l_0$ be the smallest integer for which $G$ satisfies Condition NC, where $l_0=n$ by convention if $G$ does not satisfy Condition NC for any $l\in \{1,\ldots, n-1\}$. We observe that in general given a graph $G$, the diameter of $G$ can be arbitrarily smaller than $l_0$. For instance, the diameter of the graph depicted in Fig. \ref{example2} is two. However, for the depicted graph, $l_0=\frac{n+1}{4}$ when $\frac{n-1}{2}$ is odd. So $l_0$ is much larger than two for large $n$.

 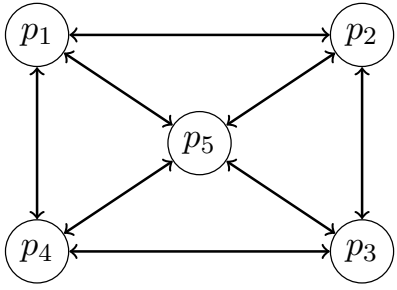
\begin{figure}

 \centering
    \scalebox{1.2}{

    \begin{tikzpicture}[auto, scale=1.2]

     %\node[draw,circle,minimum size=1cm,inner sep=0pt]
     \node[draw,circle,minimum size=0.7cm,inner sep=0pt] (A) at (0,0)  {$p_1$ };
     \node [draw,circle,minimum size=0.7cm,inner sep=0pt] (D) at (0,-2) {$p_4$};
      \node [draw,circle,minimum size=0.7cm,inner sep=0pt] (B) at (3,0) {$p_2$};
       \node [draw,circle,minimum size=0.7cm,inner sep=0pt] (C) at (3,-2) {$p_3$};
       \node [draw,circle,minimum size=0.7cm,inner sep=0pt] (E) at (1.5,-1) {$p_5$};

      \draw[<->, thick] (A)--(B);
      \draw[<->, thick] (B) -- (C);
      \draw[<->, thick] (D) -- (C);
      \draw[<->, thick] (A) -- (D);

      \draw[<->, thick] (A) -- (E);
      \draw[<->, thick] (B) -- (E);
      \draw[<->, thick] (C) -- (E);
      \draw[<->, thick] (D) -- (E);
    \end{tikzpicture}
    }
    \caption{In this system, there are five processors $p_1, p_2, p_3, p_4$ and $p_5$; all communication links are bi-directional; and at most one processor can be adversarial, i.e., $f=1$. }
    \label{example}
    \end{figure}

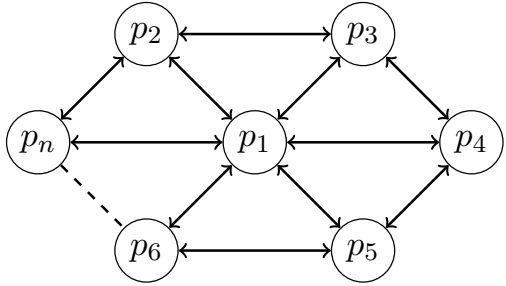
\begin{figure}

 \centering
    \scalebox{1.2}{

    \begin{tikzpicture}[auto, scale=1.2]

     %\node[draw,circle,minimum size=1cm,inner sep=0pt]
     \node[draw,circle,minimum size=0.7cm,inner sep=0pt] (A) at (0,0)  {$p_2$ };
     \node [draw,circle,minimum size=0.7cm,inner sep=0pt] (B) at (-1,-1) {$p_n$};
     \node [draw,circle,minimum size=0.7cm,inner sep=0pt] (C) at (0,-2) {$p_6$};

      \node [draw,circle,minimum size=0.7cm,inner sep=0pt] (D) at (2,-2) {$p_5$};
       \node [draw,circle,minimum size=0.7cm,inner sep=0pt] (E) at (3,-1) {$p_4$};
        \node[draw,circle,minimum size=0.7cm,inner sep=0pt] (F) at (2,0)  {$p_3$ };
    % \node[draw,circle,minimum size=0.7cm,inner sep=0pt] (B) at (3,0)  {$p_1$ };
%      \node [draw,circle,minimum size=0.7cm,inner sep=0pt] (B) at (4.5,0){$p_1$ };
\node [draw,circle,minimum size=0.7cm,inner sep=0pt] (G) at (1,-1) {$p_1$};
      \draw[<->, thick] (A)--(B);
      \draw[-, dashed, thick] (B) -- (C);
      \draw[<->, thick] (D) -- (C);
      \draw[<->, thick] (D) -- (E);
      \draw[<->, thick] (E) -- (F);
      \draw[<->, thick] (F) -- (A);
      \draw[<->, thick] (A) -- (G);
      \draw[<->, thick] (B) -- (G);
      \draw[<->, thick] (C) -- (G);
      \draw[<->, thick] (D) -- (G);
      \draw[<->, thick] (E) -- (G);
      \draw[<->, thick] (F) -- (G);
    \end{tikzpicture}
    }
    \caption{In this system, there are $n$ processors $p_1, \ldots, p_n$; all communication links are bi-directional; and at most one processor can be adversarial, i.e., $f=1$. Nodes $p_2, \ldots, p_n$ form a cycle of length $n-1$ and these nodes are all connected to node $p_1$. }
    \label{example2}
    \end{figure}

Similar to \cite{Vaidya2012IABC}, as stated in our next corollary, our Condition NC for general $l$ also implies a lower bound on both the graph size $n$ and the incoming degree of each node. Moreover, this lower bound is independent of $l$.
\begin{corollary}
\label{cdegree}
If $G$ satisfies Condition NC, then
 $n$ must be at least $3f+1$, and each node must have at least $2f+1$ incoming neighbors other than itself, i.e., $|N_i^{-}-\{i\}|\ge 2f+1$.
\end{corollary}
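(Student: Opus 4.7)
My plan is to prove both inequalities by contraposition: in each case I will assume the claimed bound is violated and exhibit a partition $(L, C, R, F)$ of $\calV(G)$ that falsifies Condition NC. The only combinatorial fact I rely on is that for any set $W$ and any vertex $x \notin W$, the set $W$ itself is a valid $W,x$--vertex cut, because every $W,x$--path contains its source, which lies in $W$; hence $\kappa_l(W,x) \leq |W|$ for every $l$. This is consistent with the identity $\kappa_1(W,x) = |W \cap N_x^{-}|$ recorded above, and is essentially the only tool the argument needs.

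For $n \geq 3f+1$, I would assume $n \leq 3f$, set $C = \emptyset$, and partition $\calV(G) = L \cup R \cup F$ into three pairwise disjoint parts of sizes at most $f$ each with $L, R \neq \emptyset$. Such a partition exists under $n \leq 3f$ by a direct counting argument, e.g., by taking $|L| = \lceil n/3 \rceil$ and splitting the remainder between $R$ and $F$ as evenly as possible; all three sizes are then at most $\lceil n/3 \rceil \leq f$. In $G_F$, for every $i \in L$ the observation above gives $\kappa_l(R, i) \leq |R| \leq f$, so condition (i) of NC fails, and symmetrically condition (ii) fails, contradicting NC.

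For the in-degree bound, suppose some node $j$ has $|A| \leq 2f$, where $A = N_j^{-} - \{j\}$. I would pick $F \subseteq A$ with $|F| = \min(|A|, f)$, so that $|F| \leq f$ and $|A - F| \leq \max(0, |A| - f) \leq f$. Set $L = \{j\}$, $C = \emptyset$, and $R = \calV(G) - \{j\} - F$. The already-established bound $n \geq 3f+1$ gives $|R| \geq 2f \geq 1$, so $R$ is nonempty. In $G_F$ the only in-neighbors of $j$ besides $j$ itself lie in $A - F$, so removing $A - F$ destroys every $R, j$--path and yields $\kappa_l(R, j) \leq |A - F| \leq f$; condition (i) fails. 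For (ii), every $i \in R$ satisfies $\kappa_l(\{j\}, i) \leq 1 \leq f$ because $\{j\}$ is itself a valid $\{j\}, i$--cut by the opening observation. Both parts of NC fail, contradicting the hypothesis.

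I do not anticipate a real obstacle: the proof is essentially bookkeeping once one notices that, under this paper's convention, cuts are allowed to contain source vertices. The one point to be careful about is ordering -- the in-degree step invokes $n \geq 3f+1$ to ensure $R \neq \emptyset$, so the two halves of the corollary must be presented in the sequence above; the degenerate case $f = 0$ is vacuous.
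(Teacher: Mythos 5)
Your proof is correct and follows essentially the same route as the paper's: for $n\ge 3f+1$ a three-way partition into sets of size at most $f$ with $C=\emptyset$, and for the degree bound the partition $L=\{i\}$ with part of $N_i^{-}-\{i\}$ absorbed into $F$ so that the remainder is a restricted cut of size at most $f$, both halves resting on the observation that $\kappa_l(W,x)\le |W|$. The only blemish is your claim that $f=0$ is ``vacuous'' (it is not: the statement then asserts each node has at least one in-neighbor besides itself, and your step $\kappa_l(\{j\},i)\le 1\le f$ fails there), but the paper's own proof has the identical gap at the step $|L|=1<f+1$, so this is not a defect relative to the paper.
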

The proof of Corollary \ref{cdegree} can be found in Appendix \ref{app:cdegree}. Note that Corollary \ref{cdegree} also characterizes a lower bound on the density of $G$, that is $\left |\calE(G)\right |\ge n(2f+2)$, including self-loops, which is independent of the relay depth $l$ as well. Proposition \ref{density} says that for $f=1$, communication over multi-hop does not imply the existence of a sparser graph for which Condition NC holds than that with communication over single-hop. For $f>1$ whether there exists a graph satisfying Condition NC with $N_i^{-}=2f+2$ and $(2f+2)n$ edges or not for any $l$ is still open.

\begin{proposition}
\label{density}
For $f=1$ and $l_0=1$, there exists a graph $G$ for any $n\ge 3f+1=4$ such that
(i) $|N_i^{-}|=4$ for all $i\in \calV(G)$; and
(ii) $|\calE(G)|=4n$.
\end{proposition}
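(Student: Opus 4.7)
The plan is to construct, for each $n \ge 4$, an explicit directed graph $G$ that meets (i), (ii) and satisfies Condition NC at $l=1$ (so that $l_0 = 1$). Concretely, I take the directed circulant graph $G$ on vertex set $\{0, 1, \ldots, n-1\}$ with edge relation $(j, i) \in \calE(G)$ iff $i - j \pmod{n} \in \{0, 1, 2, 3\}$; equivalently, $N_i^- = \{i, i-1, i-2, i-3\} \pmod{n}$ for each $i$. Properties (i) and (ii) are immediate from the construction: each vertex has exactly $4$ cyclic in-neighbors (including the self-loop), and summing over $n$ vertices gives $|\calE(G)| = 4n$. For $n = 4$ the construction coincides with $K_4$ augmented with self-loops, for which Condition NC at $l=1$ is trivial; it remains to verify the condition for $n \ge 5$.

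The substantive step is a structural argument for $n \ge 5$. Suppose for contradiction that some partition $L, C, R, F$ of $\calV(G)$ with $L, R \ne \emptyset$ and $|F| \le 1$ violates Condition NC on both sides in $G_F$. Then every $i \in L$ has at least two of $\{i-1, i-2, i-3\}$ in $L \cup F$, and the symmetric statement holds for every $j \in R$. Pick a ``left-boundary'' vertex $i^* \in L$ with $i^* - 1 \notin L$, which exists because $L$ and its complement are both nonempty on the cycle. The NC-failure hypothesis at $i^*$ forces either $i^* - 1 \in F$ (consuming the only fault) or $\{i^* - 2, i^* - 3\} \subseteq L \cup F$. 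Either way $L$ cannot be a single arc, and more generally $L$ must be a union of cyclic arcs separated by gaps of length at most $2$.

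Symmetrically, $R$ must be a union of cyclic arcs separated by gaps of length at most $2$. The remaining task is to show these two structural constraints, together with $|F| \le 1$, are mutually inconsistent: the gaps between $L$-arcs are so narrow that any $R$-vertex sitting in one of them has at least two of its three cyclic predecessors inside the adjacent $L$-arc, contradicting NC-failure on the $R$ side. The main obstacle I expect is the bookkeeping around the single fault vertex, since $F$ can ``shield'' exactly one cyclic transition between $L$ and its complement but cannot shield all of them simultaneously. I plan to organize the case analysis by the length of the shortest $L$-arc and the length of its adjacent gap (each in $\{1, 2\}$), verifying in each case that some $R$-vertex witnesses NC.
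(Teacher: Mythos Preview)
Your approach is viable but takes a genuinely different route from the paper's. The paper proves the proposition by a short induction on $n$: the base case is $K_4$ with self-loops, and for the inductive step one adjoins a single new vertex $x$ with a self-loop and three incoming edges from arbitrary vertices of the old graph $G$, and \emph{no} outgoing edges besides the self-loop. Verifying Condition NC for the new graph $G'$ is then almost immediate: if $L=\{x\}$ or $R=\{x\}$ the in-degree bound handles it, and otherwise deleting $x$ from the partition yields a partition of $G$ to which the inductive hypothesis applies (crucially, $x$ has no outgoing edges into $G$, so removing it cannot destroy any witness in the old graph).

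By contrast, you build a single explicit circulant and then attempt a direct combinatorial verification of Condition NC via a cyclic-arc/gap analysis. This buys you a nicer graph (vertex-transitive, and out-regular as well as in-regular, which the paper's construction is not), but the price is exactly the ``bookkeeping around the single fault vertex'' that you flag: your structural claims (all $L$-gaps have length $\le 2$, symmetrically for $R$) are correct in spirit, but the interaction between the single $F$-vertex and the left-boundary vertices of $L$ and of $R$ simultaneously requires a careful case split that you have only outlined, not carried out. There is no fundamental obstruction and the analysis can be completed, but the paper's sink-vertex induction sidesteps all of it in a few lines. If your goal is just to establish the proposition, the inductive construction is strictly easier; if you specifically want a circulant example, your plan is reasonable but you should expect several cases when the fault sits at distance $1$, $2$, or $3$ behind an $L$-boundary and an $R$-boundary at once.
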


\subsection{Equivalent Characterization of Condition NC}
Informally speaking, Condition NC describes the information propagation property in terms of four sets partitions. In this subsection, an equivalent condition of Condition NC is proposed, which is based on characterizing the structure of the special subgraphs, termed as reduced graph, of the power graph $G^l$. The new condition suggests that all fault-free nodes will be influence by a collection of common fault-free nodes.

\begin{definition}
\label{def:decompose}
{\bf Meta-graph of SCCs:}
Let $K_1,K_2,\ldots,K_k$ be the strongly connected components (i.e., SCCs) of $G$. The graph of SCCs, denoted by $G^{SCC}$, is defined by\\
(i) Nodes are $K_1,K_2,\ldots,K_k$; and\\
(ii) there is an edge $(K_i, K_j)$ if there is some $u\in K_i$ and $v\in K_j$ such that $(u, v)$ is an edge in $G$.\\
%\begin{itemize}
%\item Nodes are $K_1,K_2,\ldots,K_k$;
%\item There is an edge $(K_i, K_j)$ if there is some $u\in K_i$ and $v\in K_j$ such that $(u, v)$ is an edge in $G$.
%\end{itemize}
Strongly connected component $K_h$ is said to be a {\em source component}
if the corresponding node in $G^{SCC}$ is \underline{not} reachable from any
other node in $G^{SCC}$.
\end{definition}
It is known that the $G^{SCC}$ is a directed acyclic graph ( i.e., DAG ) \cite{dag_decomposition}, which contains no directed cycles. It can be easily checked that due to the absence of directed cycles and finiteness, there exists one node in $G^{SCC}$ that is not reachable from any other node. That is, a graph $G$ has at least one source component.
\begin{definition}
The $l$--th power of a graph $G$, denoted by $G^l$,  is a graph with the same set of vertices as $G$ and a directed edge between two vertices $u, v$ if and only if there is a path of length $l$ from $u$ to $v$ in $G$.
\end{definition}

A path of length one between vertices $u$ and $v$ in $G$ exists if $(u, v)$ is an edge in $G$. And a path of length two between vertices $u$ and $v$ in $G$ exists for every vertex $w$ such that $(u,w)$ and $(w,v)$ are edges in $G$.
Then for a given graph $G$ with self-loop at each node,
the $(u,v)^{th}$ element in the square of the adjacency matrix of $G$ counts the number of paths of length at most two in $G$. Similarly, the $(u,v)^{th}$ element in the $l$--th power of the adjacency matrix of $G$ gives the number of paths of length at most $l$ between vertices $u$ and $v$ in $G$.
The power graph $G^{l}$ is a multigraph\footnote{A multigraph (or pseudograph) is a graph which is permitted to have multiple edges between each vertex pair, that is, edges that have the same end nodes. Thus two vertices may be connected by more than one edge.} and there is a one-to-one correspondence between an edge $e$ in $G^{l}$ and a path of length at most $l$ in $G$. Let $e$ be an edge in $G^l$, and let $P(e)$ be the corresponding path in $G$, we say an edge $e$ in $G^{l}$ is covered by node set $S$, if $\calV(P(e))\cap S\not=\O$, i.e., path $P(e)$ passes through a node in $S$.

%With these notations at hand, we are ready to introduce the notion of \emph{reduced graph}.

\begin{definition}[Reduced Graph]
\label{def:reduced} %{\bf Reduced Graph:}

For a given graph $G$ and $F\subseteq\calV(G)$, let $E=\{e\in \calE(G^{l}):~\calV(P(e))\cap F\not=\O\}$ be the set of edges in $G^{l}$ that are covered by node set $F$. For each node $i\in \calV(G)-F$, choose $C_i\subseteq N_i^{l-}-\{i\}$ such that $|C_i|\le f$. Let $$E_i=\{e\in \calE(G^{l}): e~\text{is an incoming edge of node $i$ in}~G^{l}~~ \text{and}~ \calV(P(e))\cap C_i\not=\O\}$$ be the set of incoming edges of node $i$ in $G^{l}$ that are covered by node set $C_i$.
  A reduced graph of $G^l$, denoted by $\widetilde{G^l}_F$, is a subgraph of $G^l$ whose node set and edge set are defined by (i)~ $\calV(\widetilde{G^l_F})=\calV(G)-F$; and
(ii) $\calE(\widetilde{G^l_F})=\calE(G^{l})-E-\cup_{i\in \calV(G)-F}E_i$, respectively.
\end{definition}

Note that for a given $G$ and a given $F$,
multiple reduced graphs may exist.
Let us define set $R_F$ to be the collection of all reduced graph of $G^l$ for a given $F$, i.e.,
\begin{eqnarray}
R_F & = & \{ \widetilde{G^l}_F: ~ \widetilde{G^l}_F ~~\text{is a reduced graph of}~~ G^l \}.
\end{eqnarray}

%Thus, $\calV-\calF$ is the set of nodes in each graph in $R_\calF$.
% Thus,
% each graph in $R_\calF$ contains $n-\phi$ nodes, since $\phi$ nodes
% are faulty.
% and each graph in $R_\calF$ is a reduced graph of $G$
% when $F$ in Definition \ref{d} is chosen to be the set of faulty node $\calF$.
%Let $\tau\triangleq|R_\calF|$.
Since $G_F^l$, the $l$--th power of the induced subgraph $G_F$, itself is a reduced graph of $G^l$, where we choose $C_i=\O$ for each $i\in \calV(G)-F$, thus $R_F$ is nonempty.
 In addition, $|R_F|$ is finite since the graph $G$ is finite,
%$\tau$ depends on $\calF$ and the underlying network, and it is finite.
%In particular, if a graph $G$ satisfies condition NC, then $R_F\not=\O$.  This is because condition NC implies that each node has incoming degree at least $2f+1$.

\begin{theorem}
\label{thm:nc2}
Graph $G$ satisfies Condition NC if and only if %for any $F \subseteq \calV(G)$ such that $|F| \leq f$,
every reduced graph $\widetilde{G^l}_F$ obtained as per Definition \ref{def:reduced}
must contain exactly one {\em source component}.
\end{theorem}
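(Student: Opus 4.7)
The plan is to prove both directions by contradiction, leveraging the correspondence between $l$-restricted $(W,x)$-vertex cuts in $G_F$ and the surviving incoming edges of $x$ in $\widetilde{G^l}_F$ as prescribed by Definition~\ref{def:reduced}.

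\emph{Condition NC implies exactly one source component.} Fix any reduced graph $\widetilde{G^l}_F$. Its meta-graph of SCCs is a finite DAG and therefore has at least one source component, so only uniqueness has content. Suppose for contradiction that $K_1$ and $K_2$ are distinct source components. Partition $\calV(G)$ as $L := K_1$, $R := K_2$, $C := \calV(G) - F - K_1 - K_2$; then $L, R \neq \emptyset$ and $|F| \le f$. Because $K_1$ is a source in $\widetilde{G^l}_F$, for each $i \in K_1$ no surviving incoming edge of $i$ comes from $R \cup C$; by Definition~\ref{def:reduced} this means every $u,i$-path in $G$ of length at most $l$ with $u \in R \cup C$ passes through $F \cup C_i$. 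Restricting attention to $G_F$, where $F$ is absent, $C_i$ must intersect every such path, so $C_i$ is an $l$-restricted $(R \cup C, i)$-vertex cut in $G_F$ with $|C_i| \le f$. Hence $\kappa_l(R \cup C, i) \le f$ for every $i \in L$, negating clause~(i). The same argument applied to $K_2$ negates clause~(ii), contradicting Condition NC.

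\emph{Uniqueness of the source component implies Condition NC.} Suppose Condition NC fails via some partition $L, C, R, F$ with $L, R \neq \emptyset$ and $|F| \le f$; I will exhibit a reduced graph with two source components. The negation of (i) supplies, for each $i \in L$, an $l$-restricted $(R \cup C, i)$-vertex cut $S_i$ in $G_F$ of size at most $f$, and symmetrically an $S_j$ for each $j \in R$. A short cleanup discards any vertex of $S_i$ that lies on no short $(R \cup C, i)$-path; the surviving cut vertices are distinct from $i$ and reach $i$ within $l$ hops in $G$, so we may assume $S_i \subseteq N_i^{l-} - \{i\}$ (likewise for $S_j$). Set $C_i := S_i$ on $L$, $C_j := S_j$ on $R$, and $C_k := \emptyset$ on $C$; these are admissible choices in Definition~\ref{def:reduced}. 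In the resulting $\widetilde{G^l}_F$, every $u,i$-path of length $\le l$ in $G_F$ with $u \in R \cup C$ and $i \in L$ is cut by $C_i$, so no edge enters $L$ from $R \cup C$; analogously no edge enters $R$ from $L \cup C$. Every SCC of $\widetilde{G^l}_F$ therefore sits entirely within one of $L$, $R$, or $C$, and each of the induced subgraphs on the nonempty sets $L$ and $R$ contributes at least one source component of the full meta-graph, yielding two source components and contradicting the hypothesis.

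\emph{Main obstacle.} The delicate point is the cleanup step in the sufficiency direction: I must check that discarding "useless" cut vertices preserves the $l$-restricted cut property and that the surviving vertices genuinely lie in $N_i^{l-} - \{i\}$, so that $C_i$ is a legitimate choice in Definition~\ref{def:reduced}. The accompanying care point, used in both directions, is the translation between "no surviving edge of $\widetilde{G^l}_F$ from $u$ to $i$" and "every $u,i$-path of length $\le l$ in $G$ meets $F \cup C_i$"; this relies on $G^l$ being a multigraph so that distinct short $u,i$-paths yield distinct edges, each of which must be killed individually by either the $F$-cover or the $C_i$-cover.
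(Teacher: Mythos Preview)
Your proposal is correct and follows essentially the same strategy as the paper: in both directions you partition the vertices using the putative source components (or the failing $L,R,C,F$) and translate between ``no surviving incoming edge in $\widetilde{G^l}_F$'' and ``$C_i$ is an $l$-restricted cut of size $\le f$ in $G_F$.'' Your reverse direction is in fact more careful than the paper's own write-up---the paper asserts, somewhat loosely, that $\not\Rightarrow_l$ already gives ``no path of length at most $l$'' and never explicitly names the $C_i$'s, whereas you correctly build the specific reduced graph by setting $C_i$ equal to the (cleaned-up) minimum $l$-restricted cut and verify the constraint $C_i\subseteq N_i^{l-}-\{i\}$ required by Definition~\ref{def:reduced}.
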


\section{Sufficiency: Algorithm 1}
\label{sec:sufficiency}

As aforementioned, for each node $i$, the collection of received messages $\calM_i[t]$ may contains bogus messages and/or tampered messages due to the existence of Byzantine nodes, thus $Z_i(\cdot)$ is in general a trimming function. In this section we propose an algorithm, termed Algorithm 1, using a novel update/trimming strategy and show its correctness. First we introduce the definition of message cover that will be used frequently in this section. %It is closely related to the notion of path cover that we defined before.

\begin{definition}

  For a communication graph $G$, let $\calM$ be a set of messages, and let $\calP(\calM)$ be the set of paths corresponding to all the messages in $\calM$, i.e., $\calP(\calM)= \{\mathsf{path}(m)|m\in \calM\}$. A message cover of $\calM$ is a set of nodes $\calT(\calM)\subseteq \calV(G)$, such that for each path $P\in \calP$, we have $\calV(P)\cap \calT(\calM) \not=\O$. In particular, a minimum message cover is defined by
\begin{align*}
\calT^*(\calM)\in \argmin_{\calT(\calM)\subseteq\calV(G):~~ \calT(\calM)~ \text{is a cover of}~~ \calM} |\calT(\calM)|.
\end{align*}

 Conversely, given a set of messages $\calM_0$ and a set of nodes $\calT\subseteq \calV(G)$, a maximal set of messages $\calM\subseteq\calM_0$ that are covered by $\calT$ is defined by,
 \begin{align*}
 \calM^*\in \argmax_{\calM\subseteq\calM_0:~~\calT~\text{is a cover of}~\calM} |\calM|.
 \end{align*}
\end{definition}

We further need the following two definitions before we are able to proceed to the description of our algorithm. Recall that $\calM_i[t]$ is the collection of messages received by node $i$ at iteration $t$. Let $\calM_i^{\prime}[t]=\calM_i[t]-\{(v_i[t-1], (i,i))\}.$
Sort messages in $\calM_i^{\prime}[t]$ in an increasing order, according to their message values, i.e., $\mathsf{value}(m)$ for $m\in \calM_i^{\prime}[t]$.
Let $\calM_{is}[t]\subseteq \calM_i^{\prime}[t]$ such that
(i) for all $m\in \calM_i^{\prime}[t]-\calM_{is}[t]$ and $m^{\prime}\in \calM_{is}[t]$ we have $\mathsf{value}(m)\ge \mathsf{value}(m^{\prime})$;
and (ii) the cardinality of a minimum cover of $\calM_{is}[t]$ is exactly $f$, i.e., $|\calT^*(\calM_{is}[t])|=f$.
%and (iii) any message with the smallest value among messages in $\calM_i^{\prime}[t]-\calM_{is}[t]$ is not covered by $\calT^*(\calM_{is})$.
Similarly, we define $\calM_{il}[t]\subseteq \calM_i^{\prime}[t]$ as follows:
(i) for all $m\in \calM_i^{\prime}[t]-\calM_{il}[t]$ and $m^{\prime\prime}\in \calM_{il}[t]$ we have $\mathsf{value}(m)\le\mathsf{value}(m^{\prime\prime})$;
and (ii) the cardinality of a minimum cover of $\calM_{il}[t]$ is exactly $f$, i.e., $|\calT^*(\calM_{il}[t])|=f$.
%and (iii) any message with the largest value among messages in $\calM_i^{\prime}[t]-\calM_{il}[t]$ is not covered by $\calT^*(\calM_{il})$.
In addition, define $\calM_{i}^{*}[t]=\calM_i^{\prime}[t]-\calM_{is}[t]-\calM_{il}[t]$.

\begin{theorem}
\label{trimming}
Suppose that graph $G$ satisfies Condition NC, then the sets of messages $\calM_{is}[t]$, $\calM_{il}[t]$ are well-defined and  $\calM_{i}^{*}[t]$ is nonempty.
\end{theorem}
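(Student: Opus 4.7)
The plan is to deduce all three conclusions from the single lower bound $|\calT^*(\calM_i^{\prime}[t])|\ge 2f+1$, where covers are understood to be drawn from $\calV(G)-\{i\}$ (otherwise $\{i\}$ trivially covers every path ending at $i$). Since $i$ is fault-free and $\mathsf{path}(m)$ cannot be tampered with, for every path $P$ in $G$ of length at most $l$ from some $j\ne i$ to $i$ the set $\calM_i^{\prime}[t]$ contains a message with $\mathsf{path}(m)=P$ (the protocol delivers one message per admissible path, only the value possibly being corrupted). A subset of $\calV(G)-\{i\}$ therefore covers $\calM_i^{\prime}[t]$ exactly when it is an $l$-restricted $(\calV(G)-\{i\},\,i)$-vertex-cut in $G$, giving $|\calT^*(\calM_i^{\prime}[t])|=\kappa_l(\calV(G)-\{i\},\,i)$.

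The central step is to prove $\kappa_l(\calV(G)-\{i\},\,i)\ge 2f+1$ using the reduced-graph characterization in Theorem~\ref{thm:nc2}. Suppose for contradiction that an $l$-restricted cut $S\subseteq \calV(G)-\{i\}$ with $|S|\le 2f$ exists, and partition $S=S_1\sqcup S_2$ with $|S_1|,|S_2|\le f$. Form the reduced graph $\widetilde{G^l}_F$ of Definition~\ref{def:reduced} by taking $F=S_1$, $C_i=S_2$ (permissible because cut nodes lie on $l$-restricted paths ending at $i$, so $S_2\subseteq N_i^{l-}-\{i\}$), and $C_j=\emptyset$ for every other $j\in\calV(G)-S_1$. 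An incoming edge of $i$ in $G^l$ that survives the trimming must correspond to a path in $G$ of length at most $l$ avoiding both $S_1$ and $S_2$, hence avoiding $S$; but $S$ is a cut, so no such path exists. Therefore $\{i\}$ becomes a source component of the reduced graph, and since by Corollary~\ref{cdegree} $|\calV(G)-S_1-\{i\}|\ge n-f-1\ge 2f\ge 1$, the remainder contains at least one additional source component. Having two or more source components contradicts Theorem~\ref{thm:nc2}.

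Given this bound, the remaining conclusions are immediate. Appending one message to a set raises its minimum-cover size by at most one (the last vertex of the new path extends any existing cover); growing $\calM_i^{\prime}[t]$ from $\emptyset$ as a prefix in non-decreasing value order, the minimum-cover size thus passes through every integer from $0$ up to at least $2f+1$, and in particular through $f$. The corresponding prefix (including the full block of tied values) serves as $\calM_{is}[t]$, and a symmetric construction produces $\calM_{il}[t]$. Finally, non-emptiness of $\calM_i^*[t]$ follows because $\calM_i^{\prime}[t]=\calM_{is}[t]\cup\calM_{il}[t]$ would yield a cover of $\calM_i^{\prime}[t]$ of size at most $2f$, contradicting $|\calT^*(\calM_i^{\prime}[t])|\ge 2f+1$. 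The chief obstacle is the reduced-graph construction itself: one must verify that the trimming by $F$ and $C_i$ removes all non-self incoming edges of $i$, through the correspondence between edges in the multigraph $G^l$ and paths of length at most $l$ in $G$.
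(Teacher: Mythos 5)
Your overall strategy---deriving all three conclusions from the single bound $|\calT^*(\calM_i^{\prime}[t])|=\kappa_l(\calV(G)-\{i\},i)\ge 2f+1$---is sound and arguably cleaner than the paper's, which instead runs a greedy prefix-growing procedure and justifies its termination by two separate applications of Condition NC (one with $F=\emptyset$ to obtain $\calM_{is}[t]$, one with $F=\calT^*(\calM_{is}[t])$ to obtain $\calM_{il}[t]$), and gets non-emptiness of $\calM_i^*[t]$ from the degree bound of Corollary~\ref{cdegree}. Your identification of the minimum message cover with the $l$-restricted connectivity, the ``plus one per message'' intermediate-value argument, and the $2f$-cover contradiction for non-emptiness are all correct. (A small quibble: the parenthetical ``including the full block of tied values'' is unnecessary and, taken literally, could push the minimum cover past $f$; any prefix of a fixed sorted order already satisfies condition~(i) of the definition.)

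The genuine gap is in the proof of the central bound. After building the reduced graph with $F=S_1$ and $C_i=S_2$ you correctly conclude that $\{i\}$ is a source component, but you then assert that, because $\calV(G)-S_1-\{i\}$ is nonempty, ``the remainder contains at least one additional source component.'' That does not follow: a source of the meta-DAG restricted to the remaining SCCs need not be a source of the full meta-DAG, since it may be reachable from $\{i\}$. Nothing in your construction deletes the outgoing edges of $i$ (the sets $E_i$ remove only incoming edges of $i$, and you set $C_j=\emptyset$ for $j\ne i$), so it is entirely consistent that every surviving node is reachable from $i$ within $l$ hops avoiding $S_1$; then $\{i\}$ is the unique source component and Theorem~\ref{thm:nc2} yields no contradiction. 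The bound itself is true, and the repair is immediate if you bypass the reduced graph: apply Condition NC directly to the partition $L=\{i\}$, $F=S_1$, $R=\calV(G)-S_1-\{i\}$, $C=\emptyset$. Since $|L\cup C|=1\le f$, we cannot have $L\cup C\Rightarrow_l R$, so $R\cup C\Rightarrow_l L$ must hold, i.e.\ $\kappa_l(R,i)\ge f+1$ in $G_F$; but $S_2$ is an $l$-restricted $(R,i)$-cut in $G_F$ of size at most $f$, a contradiction. This is essentially the pair of partition arguments the paper itself uses. Note also that your chain $n-f-1\ge 2f\ge 1$ silently assumes $f\ge 1$; the degenerate case $f=0$ is handled by Corollary~\ref{cdegree} instead.
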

This theorem is proved by construction, i.e., an algorithm is constructed to find the sets $\calM_{is}[t]$, $\calM_{il}[t]$ for a given $\calM_i^{\prime}$. Details of the algorithm and its correctness proof can be found in Appendix \ref{trimmingProof}. With this trimming strategy at hand, 
we will prove that there exists an IABC algorithm -- particularly
{\em Algorithm 1} below -- that satisfies
the {\em validity} and {\em convergence} conditions provided that the
graph $G$ satisfies Condition NC. This implies that Condition NC is also sufficient. {\em Algorithm 1} has the three-step structure described
in Section~\ref{sec:iabc}.

\vspace*{8pt}\hrule
{\,

\bf Algorithm 1}
\vspace*{4pt}\hrule

\begin{enumerate}
\label{algorithm}
\item {\em Transmit step:} Transmit messages of the form $(v_i[t-1], \cdot)$ to nodes in $N_i^{l+}$. If node $i$ is an intermediate node of some message, then node $i$ forwards that message as instructed by the message path.
    When node $i$ expects to receive a message from a path but does not receive the message, the message value is assumed to be equal to some default message.

\item {\em Receive step:} Receive messages from $N_i^{l-}$.
%Denote by $\calM_i[t]$ the vector of received messages at iteration $t$. The size of vector $\calM_i[t]$ is $\sum_{j\in N_i^{l-}\cup\{i\}}|N_j^{-}|$.

\item {\em Update step:}
%For convenience, define $\hat{w}_i=\mathsf{value}(m_i[t-1]$ to be the value node
%$i$ ``receives'' from itself.
%Observe that
%if $p\in \{(i, i)\}\cup \calP^*$ does not contain any faulty node, then $w_p=m_p[t-1]$.

Define
\begin{eqnarray}
v_i[t] ~ = ~ Z_i(\calM_{i}[t]) ~ = a_{i}v_i[t-1]+\sum_{m\in \calM_{i}^*[t]} a_{i} \, w_m.
\label{e_Z}
\end{eqnarray}
where $w_m=\mathsf{value}(m) ~~~\text{and}~~~  a_{i} = \frac{1}{\lvert \calM_{i}^*[t]\rvert+1}$.
%Note that $\calM_{i}^*[t]= \calM_{i}[t] - \calM_{is}[t]-\calM_{il}[t]$, and $\big{(}v_i[t-1], (i,i)\big{)}\not\in \calM_{i}^*[t]$because $(i,i)\not\in\calE$.

\end{enumerate}
%\todo[inline]{To be modified to the relay message passing model.}
\hrule
\,

Recall $\calM_{i}^{*}[t]=\calM_i^{\prime}[t]-\calM_{is}[t]-\calM_{il}[t]$.
The ``weight'' of each term on the right-hand side of (\ref{e_Z}) is $a_i$, where $0<a_i\leq 1$, and these weights add to 1.
For future reference, let us define $\alpha$, which is used in Theorem \ref{claim_1}, as:
\begin{eqnarray}
\label{lowerbound}
\alpha=\min_{ i\in \calV-\calF} a_i.
\end{eqnarray}
In \emph {Algorithm 1}, each fault-free node $i$'s state, $v_i[t]$,  is updated as a convex combination of all the \emph{messages values} collected by node $i$ at round $t$. In particular, for each message $m\in \calM^{\prime}[t]$, its coefficient is $a_i$ if the message is in $\calM^*_i[t]$ or the message is sent via self-loop of node $i$; otherwise, the coefficient of $m$ is zero.
The update step in \emph {Algorithm 1} is a generalization of the update steps proposed in \cite{Vaidyamatrix,IBA_broadcast_Sundaram}, where the update summation is over all the incoming neighbors of node $i$ instead of over message routes. In \cite{Vaidyamatrix,IBA_broadcast_Sundaram}, only single-hop communication is allowed, i.e., $l=1$, and the fault-free node $i$ can receive only one message from its incoming neighbor. With multi-hop communication, fault-free node can possibly receive messages from a node via multiple routes. Our trim functions in \emph {Algorithm 1} take the possible multi-route messages into account. In fact, \emph {Algorithm 1} also works with multi-graphs.

%For future reference, let us define $\alpha$ as:
%\begin{eqnarray}
%\alpha = \min_{i\in \calV-\calF} ~a_i
%\label{e_alpha}
%\end{eqnarray}

\subsection{Matrix Representation of Algorithm 1}
\label{s_claim}

With our trimming function, the iterative update of the state
of a fault-free node $i$ admits a nice matrix representation of states evolution of fault-free nodes.  We use boldface upper case letters to denote matrices, rows of matrices, and their entries. For instance, $\bf{A}$ denotes a matrix, ${\bf A}_i$ denotes the $i$-th row of matrix $\bf{A}$, and ${\bf A}_{ij}$ denotes the element at the intersection of the $i$-th row and the $j$-th column of matrix $\bf{A}$. Some useful concepts and theorems are reviewed briefly in Appendix \ref{MatrixPreliminaries}.

\begin{definition}
\label{d_stochastic}
A vector is said to be {\em stochastic} if all the entries
of the vector are {\em non-negative}, and the entries add up to 1.
A matrix is said to be row stochastic if each row of the matrix is a
stochastic vector.
\end{definition}

Recall that $\calF$ is the set of faulty nodes and $|\calF|=\phi$.
Without loss of generality, suppose that nodes 1 through $(n-\phi)$ are
fault-free, and if $\phi>0$, nodes $(n-\phi+1)$ through $n$ are faulty.
Denote by ${\bf{v}}[0]\in \mathbb{R}^{n-\phi}$ the column vector consisting of the initial states of
all the {\em fault-free} nodes.
Denote by ${\bf{v}}[t]$, where $t\geq 1$, the column vector consisting of
the states of all the {\em fault-free} nodes
at the end of the $t$-th iteration, $t\geq 1$, where the $i$-th element
of vector ${\bf{v}}[t]$ is state $v_i[t]$. %The dimension of the column
%vector ${\bf{v}}[t]$ is $(n-\phi)$.
% \hrule
%\todo[inline]{Check this claim! the most important part!}
\begin{theorem}
\label{claim_1}
% \normalfont
{
We can express the iterative update of the state
of a fault-free node $i$ $~~(1\leq i\leq n-\phi)$
performed in (\ref{e_Z}) using the matrix form in (\ref{e_matrix_i})
below,
where ${\bf{M}}_i[t]$ satisfies the four conditions listed below.
In addition to $t$, the row vector ${\bf{M}}_i[t]$
may depend on the state vector ${\bf{v}}[t-1]$ as well as the
behavior of the faulty
nodes in $\calF$. For simplicity, the notation ${\bf{M}}_i[t]$ does not
explicitly represent this dependence.
\begin{eqnarray}
v_i[t] & = & {\bf{M}}_i[t] ~ {{\bf{v}}}[t-1]
\label{e_matrix_i}
\end{eqnarray}
}
\begin{enumerate}
\item ${\bf{M}}_i[t]$ is a {\em stochastic} row vector of size $(n-\phi)$.
% A vector is said to be {\em stochastic} if all the elements
% of the vector are {\em non-negative}, and the elements add up to 1.
Thus,
${\bf{M}}_{ij}[t]\geq 0$, where $1\leq j\leq n-\phi$, and
\[
\sum_{1\leq j\leq n-\phi}~{\bf{M}}_{ij}[t] ~ = ~ 1
\]

\item ${\bf{M}}_{ii}[t]\ge a_i\ge \alpha$.

\item ${\bf{M}}_{ij}[t]$ is non-zero
only if~~there exists a message $m\in \calM_i[t]$ such that $\mathsf{source}(m)=j$ and $\mathsf{destination}(m)=i$.
\item For any $t\geq 1$, there exists a reduced graph $\widetilde{G^l}_{\calF}\in R_\calF$ with adjacent matrix ${\bf{H}}[t]$ such that
$\beta \, {\bf{H}}[t] ~ \leq ~  {{\bf{M}}[t]}$, where $
\beta$ is some constant $0<\beta\le 1$ to be specified in Claim \ref{claimw}.
%\begin{proof}
%Observe that the $i$-th row of the transition matrix $\bf{M}[t]$ corresponds to the state update
%performed at fault-free node $i$. Recall from Claim \ref{claim_1} that
%the $\bf{M}_{ij}$ is non-zero {\bf only if} link $(j,i)\in\calE$.
%Also, by Claim \ref{claim_1},
%$\bf{M}_i[t]$ (i.e., the $i$-th row of $\bf{M}[t]$)
%contains at least $|N_i^-\cap\,(\calV-\calF)| - f+1$
%{\em non-trivial} elements corresponding
%to {\bf fault-free} incoming neighbors of node $i$ and itself (i.e., the
%diagonal element).
%
%Now observe that, for any subgraph $H\in R_\calF$, $i$-th row of
%$\bf{H}$ contains exactly $|N_i^-\cap\,(\calV-\calF)| - f+1$ non-zero elements,
%including the diagonal element.
%
%Considering the above two observations, and the definition of set $R_\calF$,
%the lemma follows.
%\end{proof}

%\item At least $|(N_i^{l-}\cup N_i^{l-})\cap\,(\calV-\calF)| - f+1$ elements in ${\bf{M}}_i[t]$
%are lower bounded by some constant $\beta>0$, to be defined later
%($\beta$ is independent of $i$).
%Note that $(N_i^{l-}\cup N_i^{l-})\cap\,(\calV-\calF)$ is the set of fault-free
%incoming neighbors of node $i$.
\end{enumerate}
\end{theorem}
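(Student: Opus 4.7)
The plan is to rewrite the update (\ref{e_Z}) as an explicit convex combination of the fault-free states $v_j[t-1]$, which immediately yields Conditions 1--3, and then to construct cover sets $C_i$ defining the reduced graph needed for Condition 4. Classify each $m \in \calM_i^*[t]$ as \emph{clean} when $\mathsf{source}(m) \notin \calF$ and $\calV(\mathsf{path}(m)) \cap \calF = \emptyset$, and \emph{dirty} otherwise. A clean message carries $w_m = v_{\mathsf{source}(m)}[t-1]$ untouched---already a fault-free state---so the crux is to write $w_m$ for dirty $m$ as a convex combination of two fault-free states.

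The key lemma is that for every dirty $m \in \calM_i^*[t]$ there exists a clean message in $\calM_i[t]$ with value at least $w_m$, and symmetrically one with value at most $w_m$. For the upper bound, I argue by contradiction: if no clean message in $\calM_i[t]$ has value $\geq w_m$, then every message in $S := \{m' \in \calM_i^{\prime}[t] : \mathsf{value}(m') \geq w_m\}$ is dirty, so $\calF$ covers $S$. Since $m \in \calM_i^*[t]$, property (i) in the definition of $\calM_{il}[t]$ gives $\mathsf{value}(m) \leq \mathsf{value}(m''')$ for every $m''' \in \calM_{il}[t]$, so $\calM_{il}[t] \subseteq S$, whence $\calF$ covers $\calM_{il}[t]$, forcing $|\calF| \geq |\calT^*(\calM_{il}[t])| = f$ and hence $|\calF| = f$. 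But $S$ itself satisfies both defining properties of $\calM_{il}[t]$---messages outside $S$ have value $< w_m \leq$ everything in $S$, and $S$ is covered by $\calF$ of size $f$ while containing $\calM_{il}[t]$, so its minimum cover equals $f$---yet $S \supsetneq \calM_{il}[t]$ (as $m \in S \setminus \calM_{il}[t]$), contradicting the maximality enforced by the construction in Theorem~\ref{trimming}. The lower bound follows symmetrically using $\calM_{is}[t]$.

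Using the lemma, I write each dirty $w_m = \lambda_m v_{j_m}[t-1] + (1-\lambda_m) v_{k_m}[t-1]$ for some $\lambda_m \in [0,1]$, where $j_m, k_m$ are fault-free sources of the two bounding clean messages; substituting into (\ref{e_Z}) and collecting the coefficient of each $v_j[t-1]$ as $\mathbf{M}_{ij}[t]$ gives the matrix form (\ref{e_matrix_i}). All coefficients are non-negative and sum to $a_i(1+|\calM_i^*[t]|)=1$ (Condition 1); the self-loop term contributes $a_i \geq \alpha$ to the diagonal (Condition 2); and every nonzero coefficient corresponds to a message in $\calM_i[t]$ from $j$ to $i$, since the bounding messages lie in $\calM_{is}[t] \cup \calM_{il}[t] \subseteq \calM_i[t]$ (Condition 3).

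For Condition 4, for each fault-free $i$ I choose a cover $C_i \subseteq N_i^{l-} - \{i\}$ of size at most $f$ such that every edge $(j,i)$ surviving in the resulting reduced graph $\widetilde{G^l}_\calF$ corresponds to a clean non-trimmed message from $j$ to $i$, each contributing weight $a_i$ to $\mathbf{M}_{ij}[t]$; taking $\beta := \alpha$ then yields $\beta \mathbf{H}[t] \leq \mathbf{M}[t]$. The construction exploits the min-cover-$=f$ structure of $\calM_{is}[t]$ and $\calM_{il}[t]$ together with the fact that $\calF$ (already excluded when forming any reduced graph) covers all dirty trimmed messages, so that only the clean trimmed portions need be covered by $C_i$, and these fit within budget $f$ by a careful accounting of how $\calF$ already consumes part of the two min-covers. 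The hardest step of the whole proof is the extension argument in the key lemma: it is precisely the tight interplay between $|\calT^*(\calM_{il}[t])| = f$, the bound $|\calF| \leq f$, and the maximality of $\calM_{il}[t]$ that squeezes out a clean bounding message, and this is what powers the convex decomposition on which everything else rests.
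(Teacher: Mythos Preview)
Your key lemma and the resulting verification of Conditions 1--3 are essentially correct and parallel the paper's approach. The genuine gap is in Condition 4.

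You propose to take $\beta=\alpha$ and choose $C_i$ so that every edge surviving in the reduced graph corresponds to a clean \emph{non-trimmed} message. That forces $C_i$ to cover \emph{all} clean trimmed messages, namely $\calS_{ig}[t]\cup\calL_{ig}[t]$. Your ``careful accounting'' claim---that $\calF$ already absorbs enough of the two min-covers to leave at most $f$ nodes for $C_i$---fails. Take $|\calF|=0$ (no node is actually faulty). Then every message is clean, $\calS_{ig}[t]=\calM_{is}[t]$ and $\calL_{ig}[t]=\calM_{il}[t]$, each with minimum cover exactly $f$, and these covers can be disjoint (e.g.\ single-hop messages from $2f$ distinct incoming neighbours). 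Covering $\calS_{ig}[t]\cup\calL_{ig}[t]$ then needs $2f$ nodes, but $|C_i|\le f$. In this same scenario there are no dirty messages to redistribute, so your $\mathbf{M}_{ij}[t]$ is literally $0$ for every trimmed source $j$, and no choice of $\beta>0$ can make $\beta\mathbf{H}[t]\le\mathbf{M}[t]$ hold for any reduced graph.

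The paper's remedy is precisely what your construction lacks: it does \emph{not} leave the trimmed clean messages at weight zero. Instead it writes each tampered (or, in Case~II, one artificially split clean) message as a convex combination of the \emph{averages} $\bar w_{is}$ and $\bar w_{il}$ over all of $\calS_{ig}[t]$ and $\calL_{ig}[t]$. Because one of the convex coefficients is always $\ge\tfrac12$, every message on one of the two sides receives weight at least $\beta=\tfrac{1}{16n^{2l}}$. The cover $C_i$ is then taken to be $\calT^*(\calM_{is}[t])$ or $\calT^*(\calM_{il}[t])$---whichever corresponds to the \emph{other} side---which has size exactly $f$. Thus only one trimmed side must be killed by $C_i$; the other side survives in the reduced graph with weight $\ge\beta$. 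Your proof needs this deliberate weight-spreading step (and the corresponding one-sided choice of $C_i$); without it, Condition 4 cannot be established.
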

%The proof of this theorem can be found in Appendix \ref{app:claim_1}. 
In Appendix \ref{app:claim_1}, we prove the correctness of Theorem \ref{claim_1} by constructing ${\bf{M}}_i[t]$
for $1\leq i\leq n-\phi$. Our proof follows the same line of analysis as in the proof of Claim 2 in \cite{Vaidyamatrix}. Due to the complexity (in particular, the dependency of message covers) brought up by messages relay, we divide the universe into six cases to consider.

\begin{theorem}
\label{t}
\emph {Algorithm 1} satisfies the validity and the convergence conditions.
\end{theorem}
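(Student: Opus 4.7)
The plan is to derive validity directly from the row-stochasticity of the transition matrix and to get convergence by showing that the product $\Phi(t+\tau,t):=\mathbf{M}[t+\tau]\cdots\mathbf{M}[t+1]$ contracts the spread $U[t]-\mu[t]$ by a fixed factor over any window of a fixed length $\tau$.

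\emph{Validity} is immediate from Theorem \ref{claim_1}: since each $\mathbf{M}_i[t]$ is a stochastic row vector, $v_i[t]=\mathbf{M}_i[t]\mathbf{v}[t-1]$ is a convex combination of the fault-free states at time $t-1$, so $\mu[t-1]\le v_i[t]\le U[t-1]$. Inducting on $t$ yields $\mu[0]\le\mu[t]$ and $U[t]\le U[0]$ for every $t>0$. For \emph{convergence}, I aim to prove there exist a positive integer $\tau$ and a constant $\gamma\in(0,1)$, depending only on $G$, $f$, $\alpha$ and $\beta$, such that for every $t\ge 0$ some column $j^\star$ of the row-stochastic matrix $\Phi(t+\tau,t)$ satisfies $\Phi(t+\tau,t)_{ij^\star}\ge\gamma$ for every fault-free $i$. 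A standard coupling argument (pair the rows attaining $U[t+\tau]$ and $\mu[t+\tau]$ and subtract off the $\gamma$-mass they both place on column $j^\star$) then gives
\[
U[t+\tau]-\mu[t+\tau]~\le~(1-\gamma)\bigl(U[t]-\mu[t]\bigr),
\]
and iterating this bound yields $U[t]-\mu[t]\to 0$.

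To locate such a column I would use that $R_\calF$ is finite, say $R_\calF=\{\widetilde{G^l}_{\calF,1},\ldots,\widetilde{G^l}_{\calF,K}\}$ with corresponding adjacency matrices $\mathbf{H}^{(1)},\ldots,\mathbf{H}^{(K)}$, and set $\tau=(n-\phi)K$. By pigeonhole, in any window $\{t+1,\ldots,t+\tau\}$ at least one reduced graph $\widetilde{G^l}_{\calF,k}$ is associated, via Theorem \ref{claim_1}(4), with at least $n-\phi$ of the matrices in that window. By Theorem \ref{thm:nc2} this graph has a unique source component $S_k$, so every fault-free node is reachable from any fixed $j^\star\in S_k$ along a directed path in $\widetilde{G^l}_{\calF,k}$ of length at most $n-\phi-1$. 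Combining one ``move'' along each edge of such a path (using the bound $\beta\mathbf{H}^{(k)}\le\mathbf{M}[\cdot]$ at the $n-\phi$ selected time steps) with ``stay-put'' steps (using the diagonal bound $\mathbf{M}_{ii}[\cdot]\ge\alpha$ at all the intervening steps) produces a positive contribution to $\Phi(t+\tau,t)_{ij^\star}$ of magnitude at least $(\min\{\alpha,\beta\})^{\tau}$, so $\gamma:=(\min\{\alpha,\beta\})^{\tau}$ works.

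The main obstacle will be making this chaining step precise: one has to verify that the $n-\phi$ occurrences of $\mathbf{H}^{(k)}$ in the window can always be used \emph{in order} to traverse a prescribed path from $j^\star$ to each fault-free $i$, interleaved with self-loops that preserve the current mass. The self-loop lower bound from Theorem \ref{claim_1}(2) is crucial here, as it lets a fault-free node retain at each step (up to the factor $\alpha$) any mass it has already collected on column $j^\star$. The uniqueness of the source component in Theorem \ref{thm:nc2} is equally indispensable: with two or more source components no single column could be made positive across every row of $\Phi(t+\tau,t)$ simultaneously, and the spread-contraction argument would fail. Once the uniform column bound is established, the geometric contraction above completes the proof.
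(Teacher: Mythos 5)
Your proposal is correct and follows essentially the same route as the paper: validity from row-stochasticity of $\mathbf{M}[t]$, and convergence by pigeonholing over the finite set $R_{\calF}$ so that some reduced graph's adjacency matrix occurs at least $n-\phi$ times in a window of length $(n-\phi)\lvert R_{\calF}\rvert$, then chaining the bound $\beta\mathbf{H}\le\mathbf{M}[\cdot]$ with the diagonal lower bound to produce a column of the windowed product that is uniformly positive (the paper's Lemmas \ref{l_product_H} and \ref{l_Q}). The only cosmetic difference is that you unpack the final contraction as an explicit coupling/spread argument, whereas the paper invokes the coefficient-of-ergodicity machinery ($\lambda$, scrambling matrices, and Claim \ref{claim_delta}); these are equivalent.
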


From the code of \emph {Algorithm 1}, we know that $v_i[t]=a_{i} v_i[t-1]+\sum_{m\in \calM_{i}^*[t]} a_{i} \, w_m$, where $a_i=\frac{1}{|\calM^*_i[t]|+1}$. Theorem \ref{claim_1} says that we can rewrite $a_{i} v_i[t-1]+\sum_{m\in \calM_{i}^*[t]} a_{i} \, w_m$ as
\[
\sum_{j\in \calV-\calF} {\bf M}_{ij}[t]v_j[t-1],
\]
where ${\bf M}_{ij}[t]$s together satisfy the preceding four conditions. By ``stacking'' (\ref{e_matrix_i}) for different
$i$, $1\leq i\leq n-\phi$, we can
represent the state update for all the fault-free nodes together
using (\ref{e_matrix})
below, where ${\bf{M}}[t]$ is a $(n-\phi)\times (n-\phi)$ row stochastic matrix, with its $i$-th row
being equal to ${\bf{M}}_i[t]$ in (\ref{e_matrix_i}).
\begin{eqnarray}
{\bf{v}}[t] & = & {\bf{M}}[t] ~ {\bf{v}}[t-1].
\label{e_matrix}
\end{eqnarray}
By repeated application of (\ref{e_matrix}), we obtain:
\begin{eqnarray*}
{\bf{v}}[t] & = & \left(\,\Pi_{\tau=1}^t {\bf{M}}[\tau]\,\right)\, {\bf{v}}[0].
\end{eqnarray*}
As the backward product $\Pi_{\tau=1}^t {\bf{M}}[\tau]$ is a row-stochastic matrix, it holds that $\mu[0]\le v_i[t]\le U[0]$ for all $i=1, \ldots, n-\phi$ and all $t$. Thus Algorithm 1 satisfies validity condition.

The convergence of $v_i[t]$ depends on the convergence of the backward product $\Pi_{\tau=1}^t {\bf{M}}[\tau]$. As a result of this, our convergence proof uses toolkit of weak-ergodic theory that is also adopted in prior work
(e.g., \cite{Jadbabaie2003,Benezit,vaidyaII,leblanc_HiCoNs}),
with some similarities to the arguments used in \cite{vaidyaII,leblanc_HiCoNs}. The last condition in Theorem \ref{claim_1} plays an important role in the proof. For completeness, we present the formal proof of Theorem \ref{t} in Appendix \ref{app:correctness}.

\section{Connection with existing work under unbounded path length}
In this section, we show that Condition NC is equivalent to the existing results on both undirected graphs and directed graphs.  

\subsection{Undirected graph under unbounded path length}
\label{sec: extension}
If $G$ is undirected, it has been shown in \cite{impossible_proof_lynch}, that $n\ge 3f+1$ and node-connectivity $2f+1$ are both necessary and sufficient for achieving Byzantine approximate consensus. We will show that when $l\ge l^*$, our Condition NC is equivalent to the above conditions.

\begin{theorem}
\label{equiUndirected}
  When $l\ge l^*$, if $G$ undirected, then $n\ge 3f+1$ and the   node-connectivity of $G$ is at least $2f+1$ if and only if $G$ satisfies Condition NC.
\end{theorem}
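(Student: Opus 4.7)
The plan is to prove both directions, noting throughout that since $l \ge l^*$ upper-bounds the length of every simple path in $G$ (and hence in any induced subgraph $G_F$), the $l$-restricted connectivity $\kappa_l$ coincides with the classical vertex-connectivity $\kappa$ on every relevant subgraph. The bound $n \ge 3f+1$ under Condition NC is already delivered by Corollary \ref{cdegree}, so the remaining task is to relate the rest of Condition NC to the bound $\kappa(G) \ge 2f+1$.

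For necessity I will argue by contrapositive. Assume $\kappa(G) \le 2f$ and choose a vertex cut $S$ with $|S| \le 2f$ whose removal splits $\calV(G)\setminus S$ into nonempty sets $A$ and $B$ with no edges between them. If $|S| \le f$, take $F = S$, $L = A$, $R = B$, $C = \emptyset$; then $G_F$ has no $L$--$R$ edges, so $\kappa_l(R, i) = 0$ for every $i \in L$ and symmetrically, refuting Condition NC. If $f < |S| \le 2f$, pick any $f$-subset $F \subseteq S$, set $C = S \setminus F$ (with $|C| \le f$), $L = A$, $R = B$. Now $C$ separates $L$ from $R$ in $G_F$, so every $(R \cup C)$-to-$i$ path in $G_F$ contains a vertex of $C$, either as the start (when the start lies in $C$) or as an internal transit vertex (when the start lies in $B = R$). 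In an internally-disjoint family of such paths, the starts are distinct and internal-vertex sets across paths are disjoint, so each vertex of $C$ is charged to at most one path. Hence $\kappa_l(R \cup C, i) \le |C| \le f$, and symmetrically $\kappa_l(L \cup C, j) \le f$ for every $j \in R$. This partition refutes Condition NC.

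For sufficiency, fix any admissible partition $L, C, R, F$. A simple count rules out both $|R \cup C| \le f$ and $|L \cup C| \le f$ simultaneously: adding those inequalities gives $|L|+|R|+2|C| \le 2f$, whence $n = |L|+|R|+|C|+|F| \le 2f + |F| \le 3f$, contradicting $n \ge 3f+1$. Without loss of generality $|R \cup C| \ge f+1$. Since deleting $|F| \le f$ vertices from a $(2f+1)$-connected graph yields a graph of connectivity at least $f+1$, $G_F$ is $(f+1)$-vertex-connected. I will then invoke the fan lemma in $G_F$: for any $i \in L$ and any $(f+1)$-subset $T \subseteq R \cup C$, there exist $f+1$ internally-disjoint paths from $i$ to $T$ ending at distinct vertices of $T$; reversing their traversal (legitimate since $G_F$ is undirected) yields $f+1$ internally-disjoint $(R \cup C)$-to-$i$ paths. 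Therefore $\kappa_l(R \cup C, i) = \kappa(R \cup C, i) \ge f+1$, i.e., $R \cup C \Rightarrow_l L$, verifying Condition NC.

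The main obstacle I expect is the bookkeeping in the second subcase of the necessity direction: one must see that placing the residual cut vertices into $C$ (rather than splitting them across $L$ and $R$) still forces $\kappa_l(R \cup C, i) \le |C|$, even though $C$ is part of the source set, via the distinctness-of-starts and internal-disjointness constraints on Menger-type path families. Once this subcase is handled, both directions reduce to standard applications of Menger's theorem and the fan lemma.
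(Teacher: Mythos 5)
Your proof is correct and follows essentially the same route as the paper's: necessity by exhibiting a partition built from a vertex cut $S$ of size at most $2f$ (taking $F=S$ when $|S|\le f$, and otherwise splitting $S$ into an $f$-subset $F$ and a remainder of size at most $f$), and sufficiency by the same counting argument that forces $|R\cup C|\ge f+1$ followed by a Menger-type argument. The only minor differences are that you exhibit $C=S\setminus F$ directly as an $(R\cup C,i)$-vertex cut of size at most $f$ where the paper reaches the same contradiction through an auxiliary vertex $y$ and Menger's theorem, and in the sufficiency direction you apply the fan lemma inside the $(f+1)$-connected graph $G_F$ where the paper applies the Expansion Lemma to $G$ augmented by a vertex adjacent to all of $R\cup C\cup F$ and then deletes $F$ from the resulting path system.
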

Informally, if the node-connectivity of $G$, denoted by $\kappa(G)$, is at most $2f$, then we are able to show that there exists a node partition $L, R, C, F$, where $L, R$ are both nonempty and $|F|\le f$, such that neither $L\cup C\Rightarrow_{l^*} R$ nor $R\cup C\Rightarrow_{l^*} L$ holds. Conversely, if $n\ge 3f+1$ and $\kappa(G)\ge 2f+1$, using Expansion Lemma we are able to show Condition NC holds. Formal proof is given in Appendix \ref{app:connection}. 

\subsection{Directed graph under unbounded path length}

Synchronous exact Byzantine consensus is considered in \cite{Tseng2014}.  
\begin{definition}[\cite{Tseng2014}]
Given disjoint subsets $A, B$, where $B$ is non-empty: \\
(i) We say $A\to B$ if and only if set $A$ contains at least $f+1$ distinct incoming neighbors of $B$. That is, $\left | \{i | ~(i,j)\in \calE, i\in A, j\in B\} \right |>f$.\\
(ii) We say $A\not\to B$ iff $A\to B$ is not true.
\end{definition}
A tight condition (both necessary and sufficient) over the graph structure is found in \cite{Tseng2014}.

\begin{theorem}[\cite{Tseng2014}]
Given a graph $G$, exact Byzantine consensus is solvable if and only if for any partition $L, C, R, F$ of $\calV(G)$, such that both $L$ and $R$ are non-empty, and $|F|\le f$, either $L\cup C\to R$, or $R\cup C\to L$.
\end{theorem}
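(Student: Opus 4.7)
The plan is to establish the two directions separately, following the standard impossibility/possibility blueprint for synchronous Byzantine consensus, and to frame both using the same partition $L,C,R,F$ that appears in the theorem statement.

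\textbf{Necessity ($\Rightarrow$).} I would argue by contrapositive with an indistinguishability (scenario) argument. Suppose some partition $L,C,R,F$ with $L,R$ nonempty, $|F|\le f$ satisfies both $L\cup C\not\to R$ and $R\cup C\not\to L$; then the set $S_R$ of nodes in $L\cup C$ with an outgoing edge into $R$ has size at most $f$, and symmetrically for a set $S_L\subseteq R\cup C$ of size at most $f$. Build three executions of any purported algorithm: (a) all inputs are $0$, the faulty set is $F\cup S_R$, and those nodes behave toward $R$ as if $R$'s inputs were $1$; (b) all inputs are $1$, the faulty set is $F\cup S_L$, behaving toward $L$ as if $L$'s inputs were $0$; (c) a real mixed execution where $L$ gets $0$, $R$ gets $1$, and only $F$ is faulty, with the members of $F$ splitting their behavior to match (a) toward $R$ and (b) toward $L$. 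Correct nodes in $R$ see the same transcript in (a) and (c) and so must decide $0$ by Validity applied to (a); correct nodes in $L$ see the same transcript in (b) and (c) and must decide $1$. This contradicts Agreement in (c). The combinatorial heart of the argument is that a boundary of size at most $f$ can be completely controlled by the adversary together with the $f$ Byzantine nodes, exactly as captured by $\not\to$.

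\textbf{Sufficiency ($\Leftarrow$).} I would construct an algorithm by bootstrapping a reliable source-broadcast primitive and then applying a deterministic tally. The primitive must guarantee that, for every source $s$, all correct nodes eventually commit to one common value attributed to $s$, equal to $s$'s input when $s$ is correct. Given such a primitive, every correct node holds the same length-$n$ vector and can emit a fixed deterministic function of it (e.g., the smallest value among an a priori chosen set of at least $2f+1$ distinguished sources, or the majority over all sources) to obtain Agreement, Validity, and Termination. I would implement the primitive by Dolev-style flooding along every path of length at most $l^*$ in $G$, with an acceptance rule stating that $u$ commits to value $v$ for source $s$ as soon as the set of senders of $v$-carrying paths cannot be covered by $f$ nodes — essentially a message-cover argument in the spirit of the minimal-cover machinery used in Section~\ref{sec:sufficiency}, specialized to exact (rather than approximate) agreement.

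\textbf{The main step and the obstacle.} The crux is proving that the partition condition is exactly what is needed for the broadcast to succeed. The argument I would write is inductive on rounds: fix a correct source $s$ and, at any round $t$, let $R$ be $\{s\}$ together with all correct nodes already committed to $s$'s true value, $L$ the correct nodes still uncommitted or misled, $C=\emptyset$, and $F$ the Byzantine set. Apply the hypothesis to this partition: $L\cup C\to R$ is ruled out because no correct node in $R$ can be talked out of the true value once committed, so $R\cup C\to L$ must hold, producing some $v\in L$ with at least $f+1$ correct in-neighbors already carrying the truth — forcing $v$ to commit at round $t+1$. The main obstacle will be designing the acceptance rule precisely enough that this invariant genuinely propagates (rather than stalling because of equivocation along multi-hop paths) and bounding the number of rounds needed; this is where the directed, non-Menger nature of the condition forces one to replace node-disjoint-paths reasoning with the cover-based reasoning, and is substantially more delicate than the undirected $\kappa\ge 2f+1$ analogue.
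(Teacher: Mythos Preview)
This theorem is not proved in the present paper; it is quoted from \cite{Tseng2014} as an established result and invoked only as a black box (together with the equivalent ``Condition Propagate'' formulation, also imported from that reference) in the proof of Theorem~\ref{equivaDirected}. There is therefore no in-paper proof to compare your proposal against.

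On the merits of the sketch itself: the two-direction blueprint is the standard one, and your sufficiency outline (flooding-based reliable broadcast with a cover-style acceptance rule, followed by a deterministic tally) is in the right spirit. Your necessity argument, however, contains a concrete gap. In executions (a) and (b) you take the faulty set to be $F\cup S_R$ and $F\cup S_L$ respectively, but since $|F|\le f$ and $|S_R|,|S_L|\le f$, each of these unions can have cardinality up to $2f$; neither is then an admissible execution of an $f$-resilient protocol, so validity cannot be invoked there. The whole point of isolating $F$ in the four-way partition is that $F$ alone, of size at most $f$, is the Byzantine set in every scenario; the indistinguishability has to be engineered using only $F$'s equivocation, exploiting that the honest boundary $S_L$ (resp.\ $S_R$) has size at most $f$ and hence, from the receiving side's local view, cannot be distinguished from a set of $f$ Byzantine nodes. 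Getting this right --- with only $f$ faults per execution --- is precisely the delicate part of the directed-graph lower bound, and your current construction does not yet achieve it.
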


We term this condition as Condition 1. Note that in order for $A\to B$ to hold, we only require that there are at least $f+1$ incoming neighbors of set $B$ in set $A$. It is possible that each node in $B$ has at most $f$ incoming neighbors in $A$. As a result of this observation, our Condition NC with $l=1$ is strictly stronger than Condition 1. However, it can be shown that our Condition NC with $l\ge l^*$ is equivalent to Condition 1. 

\begin{theorem}
\label{equivaDirected}
Condition NC is equivalent to Condition 1 when $l\ge l^*$.
\end{theorem}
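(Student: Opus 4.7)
The plan is to prove the two implications separately. The forward direction, Condition NC $\Rightarrow$ Condition 1, follows from a Menger-type argument, while the converse is a contrapositive: assuming Condition NC fails on some partition, I construct a new partition witnessing the failure of Condition 1.

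For the forward direction, fix an arbitrary partition $(L,C,R,F)$ and apply Condition NC. Without loss of generality some $i\in L$ satisfies $\kappa_{l}(R\cup C, i)\ge f+1$ in $G_F$, and since $l\ge l^*$ this restricted connectivity coincides with the unrestricted one. Menger's theorem then supplies $f+1$ internally vertex-disjoint paths in $G_F$ from distinct starts in $R\cup C$ to $i$. Along each path there is a unique first entry into $L$ (since the start lies in $R\cup C$ and the end lies in $L$); let $v_k$ denote the vertex immediately preceding that entry. Then $v_k\in R\cup C$ and its successor lies in $L$, so $v_k$ is an incoming neighbor of $L$ inside $R\cup C$. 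If $v_k=v_{k'}$ for $k\ne k'$, this common vertex sits on two paths and differs from $i$, contradicting internal vertex-disjointness. Hence the $v_k$ are pairwise distinct, producing the $f+1$ distinct incoming neighbors of $L$ inside $R\cup C$ required for $R\cup C\to L$.

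For the converse I would argue the contrapositive. Suppose Condition NC fails for some $(L,C,R,F)$; then both $R\cup C\not\Rightarrow_{l^*} L$ and $L\cup C\not\Rightarrow_{l^*} R$ hold, which for $l\ge l^*$ means $\kappa(R\cup C, i)\le f$ for every $i\in L$ and $\kappa(L\cup C, j)\le f$ for every $j\in R$ in $G_F$. Fix any $i^*\in L$ and $j^*\in R$, and by Menger's theorem pick vertex cuts $S, T\subseteq\calV(G_F)$ of size at most $f$ separating $R\cup C$ from $i^*$ and $L\cup C$ from $j^*$ respectively in $G_F$. Let $A$ consist of the vertices that can reach $i^*$ in $G_F\setminus S$ (in particular $i^*\in A$) and define $B$ analogously from $j^*$ and $T$. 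The cut property forces $A\subseteq L$ and $B\subseteq R$, so $A\cap B=\emptyset$ and both sets are nonempty.

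I would then consider the partition $F^{*}=F$, $L^{*}=A$, $R^{*}=B$, $C^{*}=\calV(G_F)\setminus(A\cup B)$, which has $|F^{*}|\le f$ and nonempty $L^{*}, R^{*}$. The critical observation is that any incoming neighbor of $L^{*}$ lying outside $L^{*}$ must be in $S$: if $v\in\calV(G_F)\setminus A$, $v\notin S$, and $v$ has an edge into $A$, then this edge survives in $G_F\setminus S$ and exhibits $v$ as an ancestor of $i^*$, contradicting $v\notin A$. Consequently $R^{*}\cup C^{*}$ supplies at most $|S|\le f$ distinct incoming neighbors of $L^{*}$, and by symmetry $L^{*}\cup C^{*}$ supplies at most $|T|\le f$ distinct incoming neighbors of $R^{*}$. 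Both halves of Condition 1 fail on $(L^{*},C^{*},R^{*},F^{*})$, completing the contrapositive.

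The step I expect to be the main obstacle is ensuring that a single new partition falsifies both halves of Condition 1 at once; shrinking only $L$ to $A$ controls only the $R\cup C\to L$ side. The resolution uses that Condition NC failing is a conjunction, which provides $i^*, j^*$ with cuts $S, T$ on both sides simultaneously, and the automatic disjointness $A\cap B=\emptyset$ (from $A\subseteq L$, $B\subseteq R$) lets both shrinkages coexist as $L^{*}$ and $R^{*}$ inside one partition.
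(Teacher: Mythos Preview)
Your forward direction (Condition NC $\Rightarrow$ Condition 1) is correct and is essentially the paper's argument, just with the Menger step and the distinctness of the $v_k$ spelled out explicitly.

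Your converse is also correct, but it is genuinely different from the paper's. The paper does not produce a new partition witnessing the failure of Condition 1 directly. Instead it invokes an intermediate condition (``Condition Propagate'' from \cite{Tseng2014}), shows that if Condition NC fails on $(L,C,R,F)$ then $R\cup C\overset{\calV-F}{\not\rightsquigarrow}L$ and $L\overset{\calV-F}{\not\rightsquigarrow}R\cup C$, and then appeals to the equivalence of Condition Propagate and Condition 1 established in \cite{Tseng2014}. Your route is more elementary and fully self-contained: the shrinkage of $L$ to $A$ (the ancestor set of $i^*$ in $G_F\setminus S$) and of $R$ to $B$ converts the \emph{path}-cut information supplied by the failure of Condition NC into \emph{neighbor}-count information, which is exactly what Condition 1 speaks about. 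The key technical point you isolate---that $A\subseteq L$, $B\subseteq R$ forces $A\cap B=\emptyset$, so both shrinkages can live in one partition---is precisely what makes the direct argument go through without the external equivalence. The trade-off is that the paper's proof is shorter once the cited result is granted, whereas yours requires no outside input.
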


An alternative condition is shown in \cite{Tseng2014} to be equivalent to Condition 1. We use this condition as a bridging to show the equivalence of Condition 1 and Condition NC.

\section{Discussion and Conclusion}
\label{sec:conclusion}

Throughout this paper, we assume that faulty nodes are only able to tamper message values, leaving
message paths unchanged. However, even when faulty nodes are able to tamper message paths or even fake and transmit non-existing messages,
as long as (i) the number of faked messages is finite (each faulty node $k\in \calF$ cannot create too many non-existing messages);
and (ii) for each message $m$ tampered/faked by the faulty node $k$,  $\mathsf{path}(m)$ must satisfy $k\in \calV(\mathsf{path}(m))$, i.e., the faulty node $k$ cannot conceal itself from the message path,
using the same line of arguments as in Section \ref{sec:necessary} and Section \ref{sec:sufficiency}, it can be shown that the Condition NC is also necessary and sufficient for the existence of approximate consensus under the relaxed model.

In this paper, we unify two streams of work by assuming that each node knows the topology of up to its $l$--th neighborhood and can send message to nodes that are up to $l$ hops away, where $l\ge 1$. We prove a family of necessary and sufficient conditions for the existence
of {\em iterative}\, algorithms that achieve {\em approximate Byzantine consensus}
in arbitrary directed graphs.
The class of iterative algorithms considered in this paper ensures
that, after each iteration of the algorithm, the state of each fault-free node remains
in the {\em convex hull} of the states of the fault-free nodes at the end of
the previous iteration. %The following {\em convergence} requirement
%is imposed:
%for any $\epsilon>0$, after a sufficiently large number of iterations,
%the states of the fault-free nodes are guaranteed to be within $\epsilon$
%of each other. \\

\bibliographystyle{plain}
%\bibliography{synbib}
\bibliography{IBA}

\newpage
\appendix

\setlength {\parskip}{6pt}

\centerline{\Large\bf Appendices}
\section{Necessity of Condition NC}\label{app:necessary}
%\begin{theorem}
%Suppose that a correct IABC algorithm exists for $G$. Then $G$ satisfies Condition NC.
%\end{theorem}

\begin{proof}[Proof of Theorem \ref{thm:nc}]
Theorem \ref{thm:nc} states that if a correct IABC algorithm exists for $G$,  then $G$ satisfies:
For any node partition $L, C, R, F$ of $G$ such that $L\not=\O, R\not=\O$ and $|F|\le f$, in the induced subgraph $G_F$,
at least one of the two conditions below must be true: (i) $R\cup C\Rightarrow_l L$; (ii) $L\cup C\Rightarrow_l R$.

%From node $i$'s perspective, all allowed (e.g., compatible with local structures ) configurations of the underlying communication network are possible. One such configuration is one that all incoming neighbors have no overlapping incoming neighborhoods. This configuration is assumed throughout the following analysis. Since a correct IABC has to work under any allowed communication network configuration, the necessary condition under the specified expanding configuration is also a necessary condition for the general setting.
We prove this theorem by contradiction. Let us assume that a correct IABC exists, and there exists a partition $L, C, R, F$ of $\calV(G)$ such that $L\not=\O, R\not=\O$ and $|F|\le f$, but neither $R\cup C\Rightarrow_l L$ nor $L\cup C\Rightarrow_l R$ holds, i.e.,  $R\cup C\nRightarrow_l L$ and $L\cup C\nRightarrow_l R$. %Thus, for any $i\in L$, $|N_i^{l-}\cap (C\cup R)|<f+1$, and for $j\in R$, $|N_j^{-}\cap (C\cup L)|<f+1$.
Consider the case when all nodes in $F$, if $F\not=\O$, are faulty, and the other nodes in sets $L, C, R$ are fault-free. Note that the fault-free nodes are not aware of the identities of the faulty nodes. In addition, assume (i) each node in $L$ has initial input $\mu$, (ii) each node in $R$ has initial input $U$, such that $U>\mu+\epsilon$ for some given constant $\epsilon$, and (iii) each node in $C$, if $C\not=\O$, has initial input in the interval $[\mu, U]$.

In the \textit{Transmit step} of iteration one, suppose that each faulty node $k\in F$ sends $w=\mu^{-}<\mu$ to nodes in $N_k^{l+}\cap L$, sends $w=U^{+}>U$ to nodes in $N_k^{l+}\cap R$, and sends some arbitrary value in the interval $[\mu, U]$ to nodes in $N_k^{l+}\cap C$. For message $m$ such that the faulty node $k$ is in its transmission path, i.e.,
$k\in \calV(\mathsf{path}(m))$, if $\mathsf{destination}(m)\in L$, node $k$ resets $\mathsf{value}(m)=\mu^-$; if $\mathsf{destination}(m)\in R$, node $k$ resets $\mathsf{value}(m)=U^+$; if
$\mathsf{destination}(m)\in C$, node $k$ resets $\mathsf{value}(m)$ to be some arbitrary value in $[\mu, U]$.

%Add an additional node $y$ to $\calV(G)$, and connect $y$ to all nodes in $R\cup C$. Denote the resulting graph as $G^{\prime}$, where $\calV(G^{\prime})=\calV(G)\cup \{y\}$.
Consider any node $i\in L$.
Since $|F|\le f$, we know $|N_i^{l-}\cap F|\le f$.
In addition,  $C\cup R\nRightarrow_l L$ holds in $G_F$ implies $\kappa_l(C\cup R, i)\le f$. Let $S_l$ be a minimum restricted $(C\cup R, i)$--cut in $G_F$.
From the perspective of node $i$, there exist two possible cases:

\begin{enumerate}[label=\emph{(\alph*)}]
\item Both $S_l$ and $N_i^{l-}\cap F$ are non-empty: We know $|N_i^{l-}\cap F|\le f$ and $|S_l|\le f$.  From node $i$'s perspective, two scenarios are possible: (1) nodes in $N_i^{l-}\cap F$ are faulty, all the messages relayed via them are tampered and the other nodes are fault-free, and (2) nodes in $S_l$ are faulty and the other nodes are fault-free.

In scenario (1), from node $i$'s perspective, the untampered values are in the interval $[\mu, U]$. By validity condition, $v_i[1]\ge \mu$. On the other hand, in scenario (2), the untampered values are $\mu^{-}$ and $\mu$, where $\mu^{-}<\mu$; so $v_i[1]\le \mu$, according to validity condition. Since node $i$ does not know whether the correct scenario is (1) or (2), it must update its state to satisfy the validity condition in both cases. Thus, it follows that $v_i[1]=\mu$.

\item At most one of $S_l$ and $N_i^{l-}\cap F$ is non-empty: Thus, $|S_l\cup (N_i^{l-}\cap F)|\le f$. From node $i$'s perspective, it is possible that the nodes in $S_l\cup (N_i^{l-}\cap F)$ are all faulty, the messages relayed via nodes in $S_l\cup (N_i^{l-}\cap F)$ are tampered while the rest of the nodes are fault-free. In this situation, the untampered values received by node $i$ (which are all from nodes in $N_i^{l-}\cap L$) are all $\mu$, and therefore, $v_i[1]$ must be set to $\mu$ as per the validity condition.
\end{enumerate}

At the end of iteration 1: for each node $i$ in $L$ $v_i[1]=\mu$; similarly, for each node $j$ in $R$, $v_j[1]=U$; if $C\not=\O$, for each node $i$ in $C$, $v_i[1]\in [\mu, U]$. All these conditions are identical to the condition when $t=0$. Then by a repeated application of of above argument, it follows that for any $t\ge 0$, $v_i[t]=\mu$ for all $i\in L$, $v_j[t]=U$ for all $j\in R$ and $v_k[t]\in [\mu, U]$ for all $k\in C$, if $C\not=\O$.

Since $L$ and $R$ both contain fault-free nodes, the convergence requirement is not satisfied. This contradicts the assumption that a correct iterative algorithm exists.
\end{proof}

\subsection{Lower bound on graph size and nodes' incoming degrees} \label{app:cdegree}

\begin{proof}[Proof of Corollary \ref{cdegree}]
Corollary \ref{cdegree} states that if $G$ satisfies Condition NC, then
 $n$ must be at least $3f+1$, and each node must have at least $2f+1$ incoming neighbors other than itself, i.e., $|N_i^{-}-\{i\}|\ge 2f+1$.

The main
techniques used in this proof are fairly routine, and are given here
largely for both concreteness and completeness.

We first show the claim that $n\ge 3f+1$.
For $f=0$, $n\ge 3f+1=1$ is trivially true. For $f>0$, the proof is by contradiction. Suppose that $2\le n\le 3f$. In this case, we can partition $\calV(G)$ into sets $L, R, C, F$ such that $1\le |L|\le f$, $1\le |R|\le f$, $0\le |F|\le f$ and $|C|=0$, i.e., $C$ is empty. Since $1\le |L\cup C|=|L|\le f$ and $1\le |R\cup C|=|R|\le f$, we have $L\cup C\not\Rightarrow_l R$ and $R\cup C\not\Rightarrow_l L$, respectively in $G_F$. This contradicts the assumption that $G$ satisfies Condition NC. Thus, $n\ge 3f+1$.

It remains to show $|N_i^{-}-\{i\}|\ge 2f+1$.
%
%
% Let $N_i^{-}=\{j: (j, i)\in \calE(G), ~\text{and}~ i\not=j\}$ be the set of incoming neighbors of node $i$ other than node $i$ itself, i.e., $N_i^{-}=N_i^{1-}-\{i\}$.
Suppose that, contrary to our claim, there exists a node $i$ such that $|N_i^{-}-\{i\}|\le 2f$.
Define set $L=\{i\}$ and partition $N_i^{-}-\{i\}$ into two sets $F$ and $H$
such that $|H|=\lfloor |N_i^{-}-\{i\}|/2\rfloor\leq f$
and $|F|=\lceil |N_i^{-}-\{i\}|/2\rceil\leq f$. Note that $H=\O, F=\O$ if and only if $f=0$.
Define $R=\calV(G)-F-L=\calV(G)-F-\{i\}$ and $C=\O$. Since $|\calV(G)| = n \geq \max(2,3f+1)$, $R$ is non-empty. From the construction of $R$, we have $N_i^-\cap R=H$, and $|N_i^-\cap R|=|H|\leq f$.
Since $L=\{i\}$, $|N_i^-\cap R|\leq f$ and $C=\O$, it follows that $R\cup C\not\Rightarrow_l L$.
On the other hand, as $|L|=1<f+1$, we have $L\cup C\not\Rightarrow_l R$. This violates the assumption that $G$ satisfies Condition NC. The proof is complete.
\end{proof}

\subsection{Lower bound on graph density}
\begin{proof}[Proof of Proposition \ref{density}]
Proposition \ref{density} states that: For $f=1$ and $l_0=1$, there exists a graph $G$ for any $n\ge 3f+1=4$ such that
(i) $|N_i^{-}|=4$ for all $i\in \calV(G)$; and
(ii) $|\calE(G)|=4n$.

We prove this proposition by inducting on $n$.
In the complete graph with $n=4$, $|N_i^{-}|=4$ ( including $i$ itself ) for all $i\in \calV(G)$ and the total number of edges is $16$. So the base case easily follows. Assume that the proposition holds for $n>4$. Let $G$ be a graph with $|\calV(G)|=n$,  $|N_i^{-}|=4$ for all $i\in \calV(G)$ and $|\calE(G)|=4n$. Let $x\notin \calV(G)$, add self-loop to $x$ and connect arbitrary $3$ nodes in $G$ to node $x$. Denote the resulting graph as $G^{\prime}$. Note that the only outgoing edge of $x$ is its self-loop.  Let $L, R, C$ and $F$ be an arbitrary node partition of $G^{\prime}$ such that $L, R$ are nonempty and $|F|\le 1$.

For the case when $L=\{x\}$, since $N_x^{-}=4$ and $|F|\le 1$, we know $R\cup C\Rightarrow_1 L$. Similarly we can show the case when $R=\{x\}$. When $L\not=\{x\}$ and $R\not=\{x\}$, let $L^{\prime}=L-\{x\}$, $C^{\prime}=C-\{x\}$, $R^{\prime}=R-\{x\}$ and $F^{\prime}=F-\{x\}$, then the obtained $L^{\prime}, R^{\prime}, C^{\prime}$ and $F^{\prime}$ is a node partition of the original graph $G$ such that $L^{\prime}, R^{\prime}$ are nonempty and $|F^{\prime}|\le 1$. Since $G$ satisfies Condition NC, then either $L^{\prime}\cup C^{\prime}\Rightarrow_1 R^{\prime}$ or $R^{\prime}\cup C^{\prime}\Rightarrow_1 L^{\prime}$. As $G^{\prime}$ inherits every edge in $G$, we have either $L\cup C\Rightarrow_1 R$ or $R\cup C\Rightarrow_1 L$ in $G^{\prime}$. This completes the induction.
\end{proof}

\subsection{Equivalence of Condition NC and single source component condition}\label{app:equi}

\begin{proof}[Proof of Theorem \ref{thm:nc2}]
Theorem \ref{thm:nc2} states that graph $G$ satisfies Condition NC if and only if %for any $F \subseteq \calV(G)$ such that $|F| \leq f$,
every reduced graph $\widetilde{G^l}_F$ obtained as per Definition \ref{def:reduced}
must contain exactly one {\em source component}.

We first show that if graph $G$ satisfies Condition NC, then every reduced graph of $G^l$ contains exactly one source component.

 For any reduced graph $\widetilde{G^l}_F$, the meta-graph $(\widetilde{G^l}_F)^{SCC}$ is a DAG and finite. Thus, at least one source component must exist in $\widetilde{G^l}_F$.
 We now prove that $\widetilde{G^l}_F$ cannot
contain more than one source component. The proof is by contradiction.
Suppose that there exists a set $F\subseteq \calV(G)$ with $|F|\leq f$, and a reduced graph
$\widetilde{G^l}_F$ corresponding to $F$, such
that $\widetilde{G^l}_F$ contains at least two source components, say $K_1$ and $K_2$, respectively.
Let $L=K_1$, $R=K_2$, and $C=\calV(G)-F-L-R$. Then $L, R, C$ together with the given $F$ form a node partition of $\calV(G)$ such that $L\not=\O, R\not=\O$ and $|F|\le f$.

Since graph $G$ satisfies Condition NC, without loss of generality, assume that $R\cup C \Rightarrow_l L$, i.e., there exists a node $i\in L$ such that $\kappa_l(R\cup C, i)\ge f+1$ in $G_F$.
On the other hand,
since $L$ is a source component in $\widetilde{G^l}_F$, by the definition of reduced graph, we know all paths from $R\cup C$ to node $i$ of length at most $l$ in $G$ are covered by $C_i\cup F$, where $C_i$ is defined preceding Definition \ref{def:reduced}.
Thus, $C_i$ is a restricted $(R\cup C, i)$--cut of $G_F$. However, by construction of $\widetilde{G^l}_F$, the size of $C_i$ is at most $f$. So we arrive at a contradiction.

To complete the equivalence proof it remains to show that if every reduced graph contains exactly one source component, then the graph must satisfy Condition NC.

Suppose, on the contrary, that $G$ does not satisfy Condition NC. Then there exists a node partition $L, R, C$ and $F$ of $G$ with $L, R$ are nonempty and $|F|\le f$ such that $L\cup C\not\Rightarrow_l R$ and $R\cup C\not\Rightarrow_l L$ in $G_F$. By the definition of the relation $\not\Rightarrow_l$, there is no path of length at most $l$ from $L\cup C$ to a node in $R$, and no path of length at most $l$ from $R\cup C$ to a node in $L$. This further implies that no nodes in $R\cup C$ can reach a node in $L$ in $\widetilde{G^l}_F$ and no nodes in $L\cup C$ can reach a node in $R$ in $\widetilde{G^l}_F$. Thus both $L$ and $R$ are source components, contradicting the condition that there is only one source component in every $\widetilde{G^l}_F$.

\end{proof}

%\begin{theorem}
%Given graph $G$, suppose that for any $F \subseteq \calV(G)$ every
%reduced graph $\widetilde{G^l}_F$ obtained as per Definition \ref{def:reduced}
%must contain exactly one {\em source component}, then $G$ satisfies Condition NC.
%\end{theorem}
%\begin{proof}
%
%
%
%\end{proof}

\section{Sufficiency of Condition NC}\label{sec:sufficiencyProof}

\subsection{The trimming function is well-defined}
\label{trimmingProof}
\begin{proof}[Proof of Theorem \ref{trimming}]
Theorem \ref{trimming} states that if graph $G$ satisfies Condition NC, then the sets of messages $\calM_{is}[t]$, $\calM_{il}[t]$ are well-defined and  $\calM_{i}^{*}[t]$ is nonempty.

For ease of exposition, with a slight abuse of notation, we drop the time indices of $\calM_{i}^{\prime}[t]$, $\calM_{is}[t]$, $\calM_{il}[t]$ and $\calM_{i}^{*}[t]$, respectively.
From Corollary \ref{cdegree}, we know $|N_i^--\{i\}|\ge 2f+1$. Since $|\calT^*(\calM_{is})|=f$ and $|\calT^*(\calM_{il})|=f$, the message from at least one incoming neighbor of node $i$ is not covered by $\calT^*(\calM_{is})\cup \calT^*(\calM_{il})$. So $\calM_{i}^{*}$ is nonempty. %In addition, this argument is not dependent on time $t$. Thus,  $\calM_{i}^{*}[t]$ is nonempty for each $i\in \calV$ and for all time $t$.

We prove the existence of $\calM_{is}$ and $\calM_{il}$ by construction. The set $\calM_{is}$ can be constructed using the following algorithm, which can be easily adapted for the construction of set $\calM_{il}$. For clarity of proof, we construct $\calM_{is}$ and $\calM_{il}$ sequentially, although they can be found in parallel.

As before, sort the messages in $\calM_i^{\prime}$ in an increasing order according to their messages values. Initialize $\calM_{is}\gets \O, \, Q\gets\O$ and $ \calM \gets \calM_{i}^{\prime}$. At each round, let $m_s$ be a message with the smallest value in $\calM$, and update $Q$, $\calM$ as follows,
\begin{align*}
&Q\gets Q\cup \{m_s\};\\
&\calM\gets \calM-\{m_s\}.
\end{align*}
If $|\calT^*(Q)|\ge f+1$, set $\calM_{is}\gets Q-m_s$ and return $\calM_{is}$; otherwise, repeat this procedure.

If the algorithm terminates, then by the code, it is easy to see that the returned $\calM_{is}$ satisfies the following conditions: For all $m\in \calM_i^{\prime}-\calM_{is}$ and $m^{\prime}\in \calM_{is}$ we have $\mathsf{value}(m)\ge \mathsf{value}(m^{\prime})$;
and the cardinality of a minimum cover of $\calM_{is}$ is exactly $f$, i.e., $|\calT^*(\calM_{is})|=f$.
It remains to show this algorithm terminates. Suppose this algorithm does not terminate.
The problem of finding a minimum cover of a set of messages, i.e., computing $\calT^*(Q)$, can be converted to the problem of finding a minimum cut of a vertex pair, by adding a new vertex $y$ and connecting $y$ to every vertex in $\calV(G)-\{i\}$. The latter problem can be solved in polynomial time. Thus, non-termination implies that $|\calT^*(\calM_i^{\prime})|\le f$, which further implies that
the $l$--restricted $(\calV(G)-\{i\}, i)$--connectivity is less than or equal to $f$. On the other hand, consider the node partition that $L=\{i\}$, $R=\calV(G)-\{i\}$, and $C=F=\O$, neither $L\cup C\Rightarrow_l R$ nor $R\cup C\Rightarrow_l L$ holds. This contradicts the assumption that $G$ satisfies Condition NC. So the above algorithm terminates.

We can adapt the above procedure to construct $\calM_{il}$ by modifying the initialization step to be $Q\gets \O$,  $\calM\gets \calM_i^{\prime}-\calM_{is}$.
Termination can be shown similarly. Suppose this algorithm does not terminate. Non-termination implies that $|\calT^*(\calM_i^{\prime}-\calM_{is})|\le f$, which further implies that in the node partition $L=\{i\}$, $F=\calT^*(\calM_{is})$, $R=\calV(G)-F-L$, $C=\O$, the $l$--restricted $(R\cup C, \{i\})$--connectivity is no more than $f$, i.e., $R\cup C\nRightarrow_l L$. In addition, since $|L|=1$, $L\cup C\nRightarrow_l R$. This contradicts the assumption that $G$ satisfies Condition NC.
Therefore, $\calM_{is}$ and $\calM_{il}$ are well-defined.

\end{proof}

\subsection{Matrix Preliminaries}\label{MatrixPreliminaries}

%We use boldface upper case letters to denote matrices,
%rows of matrices, and their entries. For instance,
%$\bf{A}$ denotes a matrix, ${\bf A}_i$ denotes the $i$-th row of
%matrix $\bf{A}$, and ${\bf A}_{ij}$ denotes the element at the
%intersection of the $i$-th row and the $j$-th column
%of matrix $\bf{A}$.

For a row stochastic matrix $\bf{A}$,
 coefficients of ergodicity $\delta(\bf{A})$ and $\lambda(\bf{A})$ are defined as
\cite{Wolfowitz}:
\begin{align}
\delta({\bf{A}}) & :=   \max_j ~ \max_{i_1,i_2}~ | {\bf{A}}_{i_1\,j}-{\bf{A}}_{i_2\,j} |, \label{e_delta} \\
\lambda({\bf{A}}) & :=  1 - \min_{i_1,i_2} \sum_j \min({\bf{A}}_{i_1\,j} ~, {\bf{A}}_{i_2\,j}). \label{e_lambda}
\end{align}
It  is easy to see that  $0\leq \delta({\bf{A}}) \leq 1$ and $0\leq \lambda({\bf{A}}) \leq 1$, and that the rows are all identical if and and only if $\delta(\textbf{A})=0$. Additionally, $\lambda({\bf{A}}) =0$ if and only if $\delta({\bf{A}}) = 0$.

The next result from \cite{Hajnal58} establishes a relation between the coefficient of ergodicity $\delta(\cdot)$ of a product of row stochastic matrices, and the coefficients of ergodicity $\lambda(\cdot)$ of the individual matrices defining the product.

\begin{claim}
\label{claim_delta}
For any $p$ square row stochastic matrices ${\bf{Q}}(1),{\bf{Q}}(2),\dots ,{\bf{Q}}(p)$,
\begin{align}
\delta({\bf{Q}}(1){\bf{Q}}(2)\cdots {\bf{Q}}(p)) ~\leq ~
 \Pi_{i=1}^p ~ \lambda({\bf{Q}}(i)).
\end{align}
\end{claim}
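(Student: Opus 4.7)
The plan is a two-step reduction followed by induction. First I would prove a single-matrix comparison $\delta(\mathbf{A})\le \lambda(\mathbf{A})$, and then the two-matrix submultiplicative inequality $\delta(\mathbf{A}\mathbf{B})\le \lambda(\mathbf{A})\,\delta(\mathbf{B})$. Once these are in hand, a straightforward induction on $p$ produces the claim.

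For the first step, fix any pair of rows $i_1,i_2$ and any column $j^*$ achieving the maximum in the definition of $\delta(\mathbf{A})$; WLOG $\mathbf{A}_{i_1 j^*}\ge \mathbf{A}_{i_2 j^*}$. Because both rows are stochastic, $1-\sum_k \min(\mathbf{A}_{i_1 k},\mathbf{A}_{i_2 k}) = \sum_k \bigl(\mathbf{A}_{i_1 k}-\min(\mathbf{A}_{i_1 k},\mathbf{A}_{i_2 k})\bigr)$, and every summand is nonnegative while the $k=j^*$ summand equals $\mathbf{A}_{i_1 j^*}-\mathbf{A}_{i_2 j^*}=|\mathbf{A}_{i_1 j^*}-\mathbf{A}_{i_2 j^*}|$. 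Hence $\delta(\mathbf{A})\le 1-\sum_k \min(\mathbf{A}_{i_1 k},\mathbf{A}_{i_2 k})\le \lambda(\mathbf{A})$.

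For the two-matrix inequality, fix rows $i_1,i_2$ of $\mathbf{A}\mathbf{B}$ and set $\alpha_k=\mathbf{A}_{i_1 k}-\mathbf{A}_{i_2 k}$, which satisfies $\sum_k\alpha_k=0$. Let $s=\sum_k \max(\alpha_k,0) = 1-\sum_k \min(\mathbf{A}_{i_1 k},\mathbf{A}_{i_2 k})\le \lambda(\mathbf{A})$, and define the probability vectors $p_k=\alpha_k/s$ on $\{k:\alpha_k>0\}$ and $q_k=-\alpha_k/s$ on $\{k:\alpha_k<0\}$. Then
\[
(\mathbf{A}\mathbf{B})_{i_1 j}-(\mathbf{A}\mathbf{B})_{i_2 j}
=\sum_k \alpha_k\mathbf{B}_{kj}=s\Bigl(\sum_{k_1}p_{k_1}\mathbf{B}_{k_1 j}-\sum_{k_2}q_{k_2}\mathbf{B}_{k_2 j}\Bigr).
\]
Since both inner sums equal $1$ when summed against $p,q$, I can rewrite the difference as $s\sum_{k_1,k_2}p_{k_1}q_{k_2}(\mathbf{B}_{k_1 j}-\mathbf{B}_{k_2 j})$ and bound each coordinate difference by $\delta(\mathbf{B})$. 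Taking max over $i_1,i_2,j$ yields $\delta(\mathbf{A}\mathbf{B})\le s\,\delta(\mathbf{B})\le \lambda(\mathbf{A})\,\delta(\mathbf{B})$.

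Finally, iterating the two-matrix inequality gives $\delta(\mathbf{Q}(1)\cdots\mathbf{Q}(p))\le \lambda(\mathbf{Q}(1))\cdots\lambda(\mathbf{Q}(p-1))\,\delta(\mathbf{Q}(p))$, and the first step bounds the last factor by $\lambda(\mathbf{Q}(p))$, yielding the claim. The main obstacle is the bookkeeping in the two-matrix step, namely recognizing that the signed difference of rows $\alpha_k$ admits a clean decomposition into two probability vectors scaled by the same mass $s$; once that identity is in place, the rest is routine. Everything else (the induction, the $\delta\le\lambda$ comparison, the stochasticity manipulations) is mechanical.
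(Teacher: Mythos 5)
Your proof is correct. Note that the paper does not actually prove this claim; it simply cites Hajnal (1958), so there is no in-paper argument to compare against. What you have reconstructed is essentially the classical proof of submultiplicativity of the Hajnal/Dobrushin coefficient: the key identity $\sum_k \max(\alpha_k,0) = 1-\sum_k\min(\mathbf{A}_{i_1 k},\mathbf{A}_{i_2 k})$ lets you split the signed row difference into two probability vectors carrying equal mass $s$, after which the telescoping double sum $\sum_{k_1,k_2}p_{k_1}q_{k_2}(\mathbf{B}_{k_1 j}-\mathbf{B}_{k_2 j})$ gives the bound $\delta(\mathbf{A}\mathbf{B})\le\lambda(\mathbf{A})\,\delta(\mathbf{B})$, and your first step $\delta\le\lambda$ closes the induction. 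The only (cosmetic) omission is the degenerate case $s=0$: there the two rows of $\mathbf{A}$ coincide, the corresponding rows of $\mathbf{A}\mathbf{B}$ coincide, and the inequality holds trivially, so the normalization by $s$ is never actually a problem. You also implicitly use that a product of row stochastic matrices is row stochastic to keep the induction going; that is standard but worth a sentence.
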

Claim \ref{claim_delta} is proved in \cite{Hajnal58}. It implies that
if, for all $i$, $\lambda({\bf{Q}}(i))\leq 1-\gamma$ for some $\gamma>0$, then $\delta({\bf{Q}}(1){\bf{Q}}(2)\cdots {\bf{Q}}(p))$ will approach zero as $p$ approaches $\infty$.

\begin{definition}
A row stochastic
 matrix ${\bf{H}}$ is said to be a {\em scrambling}\, matrix, if $\lambda({\bf{H}})<1$
{\normalfont \cite{Hajnal58,Wolfowitz}}.
\end{definition}

In a scrambling matrix ${\bf{H}}$, since $\lambda({\bf{H}})<1$, for each pair of
rows $i_1$ and $i_2$, there exists a column $j$ (which may depend on
$i_1$ and $i_2$) such that
 ${\bf{H}}_{i_1\,j}>0$ and ${\bf{H}}_{i_2\,j}>0$, and vice-versa \cite{Hajnal58,Wolfowitz}.
As a special case, if any one column of a row stochastic matrix ${\bf{H}}$
contains only non-zero entries that are lower bounded by some
constant $\gamma>0$, then ${\bf{H}}$ must be scrambling, and $\lambda({\bf{H}})\leq 1-\gamma$.

\begin{definition}
For matrices ${\bf{A}}$ and ${\bf{B}}$ of identical size, and
a scalar $\gamma$, ${\bf{A}}\leq \gamma \, {\bf{B}}$ provided
that ${\bf{A}}_{ij}\leq \gamma\, {\bf{B}}_{ij}$ for all $i,j$.
\end{definition}

\subsection{Matrix representation}\label{app:claim_1}
Some relevant corollaries and concepts are needed before we are able to proceed to the proof of Theorem \ref{claim_1}.  

\begin{corollary}
\label{claim_suff}
Suppose that graph $G$ satisfies Condition NC. Then it follows that in each reduced graph $\widetilde{G^l}_F \in R_F$,
there exists at least one node that has directed paths to all the nodes in $\widetilde{G^l}_F$.
%(consisting of edges in $\calE_F^2$).
\end{corollary}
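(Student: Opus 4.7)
The plan is to deduce this corollary directly from Theorem \ref{thm:nc2}, which has already done the heavy lifting. Since $G$ satisfies Condition NC, Theorem \ref{thm:nc2} tells us that every reduced graph $\widetilde{G^l}_F \in R_F$ contains exactly one source component. Call it $K^\ast$. My goal is to show that any node $v \in K^\ast$ has directed paths in $\widetilde{G^l}_F$ to every node of the reduced graph.

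First, I will observe that any node in $K^\ast$ can reach every other node of $K^\ast$ itself, simply because a strongly connected component is, by definition, internally strongly connected. So it suffices to show that $v$ can reach every node outside $K^\ast$. To this end, let $u \in \calV(\widetilde{G^l}_F)$ with $u \notin K^\ast$, and let $K_u$ be the strongly connected component of $\widetilde{G^l}_F$ that contains $u$. Because every node in $K_u$ can reach every other node in $K_u$, it is enough to exhibit a directed path from $K^\ast$ to $K_u$ in the meta-graph $(\widetilde{G^l}_F)^{SCC}$.

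The key combinatorial fact I will invoke here is the standard one: in a finite DAG with a unique source, every node is reachable from that source. To see this for $(\widetilde{G^l}_F)^{SCC}$, start from $K_u$ and repeatedly follow an incoming edge. Since the meta-graph is a finite DAG, this back-walk cannot revisit any vertex, so it must terminate at a vertex with no incoming edge, i.e., a source component. By Theorem \ref{thm:nc2} the only source component is $K^\ast$, so the walk terminates at $K^\ast$. Reversing the walk gives a directed path in $(\widetilde{G^l}_F)^{SCC}$ from $K^\ast$ to $K_u$, which lifts to a directed path in $\widetilde{G^l}_F$ from some node of $K^\ast$ (hence from $v$, via strong connectivity of $K^\ast$) to $u$.

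There is no real obstacle here; the content of the corollary is already essentially contained in Theorem \ref{thm:nc2}, and the remaining step is the elementary observation about reachability in a finite DAG with a unique source. The only point requiring a line of care is to note that the meta-graph $(\widetilde{G^l}_F)^{SCC}$ is itself finite and acyclic (as recalled in Definition \ref{def:decompose}), so that the back-walk argument terminates.
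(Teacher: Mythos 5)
Your proposal is correct and matches the paper's approach: the paper simply states that the corollary ``follows immediately from Theorem \ref{thm:nc2}'', and your argument spells out exactly that deduction (unique source component, plus the standard fact that in a finite DAG every vertex is reachable from the unique source, lifted through the SCC meta-graph). No discrepancies to report.
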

Corollary \ref{claim_suff} follows immediately from Theorem \ref{thm:nc2}.
\begin{corollary}
\label{l_one_column}
Suppose that $G$ satisfies Condition NC. Let $|F|=\phi$, for any $\widetilde{G^l}_F\in R_F$ with ${\bf{H}}$ as the adjacency matrix, ${\bf{H}}^{n-\phi}$ has at least one non-zero column.
\end{corollary}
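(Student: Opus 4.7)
The plan is to combine Corollary \ref{claim_suff} with the observation that self-loops at every fault-free node survive the reduction, so that any directed path in $\widetilde{G^l}_F$ can be padded into a directed walk of any prescribed length at least that of the path.

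First I would invoke Corollary \ref{claim_suff}: since $G$ satisfies Condition NC, the reduced graph $\widetilde{G^l}_F$ contains at least one node $v^\ast$ that admits a directed path to every node in $\widetilde{G^l}_F$. Since $|\calV(\widetilde{G^l}_F)| = n - \phi$, each such $v^\ast, u$--path can be chosen to have length at most $n-\phi-1$, say $d_u \le n-\phi-1$.

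Next I would check that the self-loop $(i,i)$ is an edge of $\widetilde{G^l}_F$ for every $i \in \calV(G) - F$. The corresponding path in $G$ is the single-vertex path at $i$, whose vertex set is $\{i\}$. This is not covered by $F$ (because $i \notin F$), and it is not covered by $C_i \subseteq N_i^{l-} - \{i\}$ either. So the self-loop at $i$ is neither in $E$ nor in $E_i$, and hence it survives in $\widetilde{G^l}_F$. Consequently, ${\bf H}_{ii} \ne 0$ for every fault-free $i$, where ${\bf H}$ is the adjacency matrix of $\widetilde{G^l}_F$ (with the convention ${\bf H}_{ij} \ne 0$ iff there is an edge from $j$ to $i$, consistent with the update $v_i[t]=\sum_j {\bf M}_{ij}[t]\,v_j[t-1]$).

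Finally, for each $u \in \calV(\widetilde{G^l}_F)$, concatenate the $v^\ast, u$--path of length $d_u$ with $(n-\phi) - d_u$ self-loops at $u$ to obtain a directed walk of length exactly $n-\phi$ from $v^\ast$ to $u$. This walk witnesses $({\bf H}^{n-\phi})_{u\, v^\ast} > 0$ for every $u$, so the column of ${\bf H}^{n-\phi}$ indexed by $v^\ast$ is non-zero (in fact, entrywise positive), which is what the corollary claims. The only mild subtlety is fixing the row/column convention and double-checking the preservation of self-loops; once those are settled, the statement is a one-line consequence of Corollary \ref{claim_suff}.
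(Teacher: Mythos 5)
Your argument is correct and follows essentially the same route as the paper: invoke Corollary \ref{claim_suff} to obtain a node with directed paths of length at most $n-\phi-1$ to all others, then use the surviving self-loops to pad these into walks of length exactly $n-\phi$, making the corresponding column of ${\bf H}^{n-\phi}$ positive. The only difference is that you explicitly verify that self-loops are never removed in the reduction (since $i\notin F$ and $C_i\subseteq N_i^{l-}-\{i\}$), a detail the paper leaves implicit here and only states later when analyzing the product of the ${\bf H}[t]$ matrices.
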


\begin{proof}
By Corollary \ref{claim_suff}, in graph $\widetilde{G^l}_F$ there exists at least one node, say
node $k$, that has a directed path in $\widetilde{G^l}_F$ to all the remaining nodes in $\calV_F$, i.e., $\calV(G)-F$.
Since the length of the path from $k$ to any other node in $\widetilde{G^l}_F$ can contain
at most $n-\phi-1$ directed edges,
the $k$-th column of matrix ${{\bf{H}}^{n-\phi}}$ will
be non-zero.\footnote{That is, all the entries of the column will be
non-zero (more precisely, positive, since the entries of matrix $\bf{H}$
are non-negative).
Also, such a non-zero column will exist in ${\bf H}^{n-\phi-1}$ too.
We use the loose bound of $n-\phi$ to simplify the presentation. }
\end{proof}

%\begin{proof}
%By Corollary \ref{claim_suff}, in graph $\widetilde{G^l}_F$ there exists at least one node, say
%node $k$, that has a directed path in $\widetilde{G^l}_F$ to all the remaining nodes in $\calV_F$, i.e., $\calV(G)-F$.
%Since the length of the path from $k$ to any other node in $\widetilde{G^l}_F$ can contain
%at most $n-\phi-1$ directed edges,
%the $k$-th column of matrix ${{\bf{H}}^{n-\phi}}$ will
%be non-zero.\footnote{That is, all the entries of the column will be
%non-zero (more precisely, positive, since the entries of matrix $\bf{H}$
%are non-negative).
%Also, such a non-zero column will exist in ${\bf H}^{n-\phi-1}$ too.
%We use the loose bound of $n-\phi$ to simplify the presentation. }
%\end{proof}
\begin{definition}
We will say that an entry of a matrix is ``non-trivial'' if it is lower
bounded by $\beta$, where $\beta$ is some constant to be defined later.
\end{definition}

%For future reference, let us define $\alpha$, which is used in Theorem \ref{claim_1}, as:
%\begin{eqnarray}
%\label{lowerbound}
%\alpha=\min_{ i\in \calV-\calF} a_i.
%\end{eqnarray}

%Figure \ref{f_sets} illustrates the various sets used here.
%Some of the sets in this figure are not yet defined, and will be defined
%later in the paper.
%\begin{figure}
%\centering
%\includegraphics[width=0.7\textwidth]{sets.eps}
%\caption{Illustration of sets $\calV$, $\calF$, $N_i^-$,
%$N_i^*[t]$, $L^*$ and $S^*$}
%\label{f_sets}
%\end{figure}

%In the sequential, we use $N_i^{l-}$ to denote $N_i^{l-}\cup N_i^{l-}$ for short.

\begin{proof}[Proof of Theorem \ref{claim_1}]
 
Recall that nodes 1 through $n-\phi$ are fault-free, and the remaining
$\phi$ nodes ($\phi\leq f$) are faulty.
Consider a fault-free node $i$ performing the {\em update step}
in \emph {Algorithm 1}.
Recall that $\calM_{is}[t]$ and $\calM_{il}[t]$ messages are eliminated from $\calM_i[t]$. Let $\calS_{ig}[t]\subseteq \calM_{is}[t]$ and $\calL_{ig}[t]\subseteq \calM_{il}[t]$, respectively, be the sets of removed messages that are not covered by faulty nodes.
%Let $\calP_{is}[t]\triangleq \{\mathsf{path}(m)|~m\in \calM_{is}[t]\}$, $\calP_{il}[t]\triangleq \{\mathsf{path}(m)|~m\in \calM_{il}[t]\}$, and
%$\calP_i^{*}[t]\triangleq \{\mathsf{path}(m)|~m\in \calM_{i}^*[t]\}$, respectively.
%
%In addition, define $\calS_{ig}[t]\subseteq \calM_{is}[t]$ and $\calL_{ig}[t]\subseteq \calM_{il}[t]$, such that $\calS_{ig}[t]=\{m\mid m\in \calM_{is}[t] \& \mathsf{path}(m)\cap \calF=\O\}$ and $\calL_{ig}[t]=\{m\mid m\in \calM_{il}[t] \& \mathsf{path}(m)\cap \calF=\O\}$, respectively.
Let $\calP_i^{*}[t]$ be the set of paths corresponding to all the messages in $\calM^{*}_i[t]$.
\emph{Untampered message representation} of the evolution of $v_i$ and construction of ${\bf{M}}_i[t]$ differ somewhat depending on whether sets $\calL_{ig}[t], \calS_{ig}[t]$ and $\calP_i^{*}[t]\cap \calF$ are empty or not, where $\calP_i^{*}[t]\cap \calF=\O$ means that no message in $\calM_i^*[t]$ has been tampered by faulty nodes and $\calP_i^{*}[t]\cap \calF\not=\O$ means that there exists a message that is tampered by faulty nodes. It is possible that $\calT^*(\calM_{is}[t])=\calT^*(\calM_{il}[t])=\calF$, which means all messages in $\calM_{is}[t]$ and $\calM_{il}[t]$ are tampered by faulty nodes, i.e., $\calS_{ig}[t]=\O$ and $\calL_{ig}[t]=\O$. We divide the possibilities into six cases:

\begin{enumerate}
\item Case I: $\calS_{ig}[t]\not=\O, \calL_{ig}[t]\not=\O$ and $\calP_i^{*}[t]\cap \calF\not=\O$.
\item Case II: $\calS_{ig}[t]\not=\O, \calL_{ig}[t]\not=\O$ and $\calP_i^{*}[t]\cap \calF=\O$.
\item Case III: one of $\calS_{ig}[t], \calL_{ig}[t]$ is empty and $\calP_i^{*}[t]\cap \calF\not=\O$.
\item Case IV: one of $\calS_{ig}[t], \calL_{ig}[t]$ is empty and $\calP_i^{*}[t]\cap \calF=\O$.
\item Case V: $\calS_{ig}[t]=\O, \calL_{ig}[t]=\O$ and $\calP_i^{*}[t]\cap \calF\not=\O$.
\item Case VI: $\calS_{ig}[t]=\O, \calL_{ig}[t]=\O$ and $\calP_i^{*}[t]\cap \calF=\O$.
\end{enumerate}

We first describe the construction of ${\bf M}_i[t]$ in case I, when $\calS_{ig}[t]\not=\O, \calL_{ig}[t]\not=\O$ and $\calP_i^{*}[t]\cap \calF\not=\O$.
Let $\bar{w}_{is}[t]$ and $\bar{w}_{il}[t]$ be defined as shown below. Recall that $w_m=\mathsf{value}(m)$.

%Recall that all the messages in $\calS_{ig}[t]$ and $\calL_{ig}[t]$ are not tampered by any faulty node, thus $w_m[t]=v_j[t-1]$, where $m\in \calS_{ig}[t]\cup\calL_{ig}[t]$, and $\mathsf{source}(m)=j$.

\begin{eqnarray}
\bar{w}_{is}[t]=\frac{\sum_{m\in \calS_{ig}[t]}w_{m}}{|\calS_{ig}[t]|} \quad ~ \text{and}~\quad
\bar{w}_{il}[t]=\frac{\sum_{m\in \calL_{ig}[t]}w_{m}}{|\calL_{ig}[t]|}.
\end{eqnarray}

By the definitions of $\calS_{ig}[t]$ and $\calL_{ig}[t]$, $\bar{w}_{is}\le w_{m^{\prime}}\le \bar{w}_{il}$, for each message $m^{\prime}\in \calM^*_i[t]$. Thus, for each message $m^{\prime}$, we can find convex coefficient $\gamma_{m^{\prime}}$, where $0\le \gamma_{m^{\prime}} \le 1$, such that
\begin{align*}
w_{m^{\prime}}&=\gamma_{m^{\prime}}\bar{w}_{is}+(1-\gamma_{m^{\prime}})\bar{w}_{il}\\
&=\frac{\gamma_{m^{\prime}}}{|\calS_{ig}[t]|}\sum_{m\in \calS_{ig}[t]}w_m+\frac{1-\gamma_{m^{\prime}}}{|\calL_{ig}[t]|}\sum_{m\in \calL_{ig}[t]}w_m.
\end{align*}

Recall that in \emph {Algorithm 1}, $v_i[t]=a_iv_i[t-1]+\sum_{m\in \calM_i^*[t]}a_iw_m$, where $a_i=\frac{1}{|\calM_i^*[t]|+1}$. In case I, since $\calP_i^{*}[t]\cap \calF\not=\O$, there exist messages in $\calM_i^{*}[t]$ that are tampered by faulty nodes. We need to replace these ``bad messages" by ``good messages" in the evolution of $v_i$. In particular,

\begin{align}
v_i[t]&=a_iv_i[t-1]+\sum_{m\in \calM_i^*[t]}a_iw_m\\
&=a_iv_i[t-1]+\sum_{m\in \calM_i^*[t]:~\calV(\mathsf{path}(m))\cap \calF=\O~}a_iw_m
+\sum_{m\in \calM_i^*[t]:~ \calV(\mathsf{path}(m))\cap \calF\not=\O~}a_iw_m\\
&=a_iv_i[t-1]+\sum_{m\in \calM_i^*[t]:~  \calV(\mathsf{path}(m))\cap \calF=\O~}a_iw_m\\
&\quad+\sum_{m\in \calM_i^*[t]:~\calV(\mathsf{path}(m))\cap \calF\not=\O~}a_i(\frac{\gamma_{m}}{|\calS_{ig}[t]|}\sum_{m^{\prime}\in \calS_{ig}[t]}w_{m^{\prime}}+\frac{1-\gamma_{m}}{|\calL_{ig}[t]|}\sum_{m^{\prime}\in \calL_{ig}[t]}w_{m^{\prime}})\\
&=a_iv_i[t-1]+\sum_{m\in \calM_i^*[t]:~ \calV(\mathsf{path}(m))\cap \calF=\O~}a_iw_m\\
&\quad+\sum_{m^{\prime}\in \calS_{ig}[t]}\big{(}\sum_{m\in \calM_i^*[t]:~ \calV(\mathsf{path}(m))\cap \calF\not=\O~}\frac{a_i\gamma_m}{|\calS_{ig}[t]|} \big{)}w_{m^{\prime}}\\
&\quad+\sum_{m^{\prime}\in \calL_{ig}[t]}\big{(}\sum_{m\in \calM_i^*[t]:~ \calV(\mathsf{path}(m))\cap \calF\not=\O~}\frac{a_i(1-\gamma_m)}{|\calL_{ig}[t]|} \big{)}w_{m^{\prime}}.
\end{align}
That is, $v_i[t]$ can be represented as a convex combination of values of untampered messages collected at iteration $t$, where $v_i[t-1]=\mathsf{value}(v_i[t-1], (i, i))$.
For future reference, we refer to the above convex combination as \emph{untampered message representation of $v_i[t]$} in case I and the convex coefficient of each message in the untampered message representation as \emph{message weight}.

Note that if $m$ is an untampered message in $\calM_i^*[t]$ or
$m\in \calS_{ig}[t]\cup \calL_{ig}[t]$, then $w_m=v_j[t-1]$ holds, where node $j$ is the source of message $m$, i.e., $\mathsf{source}(m)=j$. $v_i[t]$ can be further rewritten as follows, where $\mathbbm{1}\{x\}=1$ if $x$ is true, and $\mathbbm{1}\{x\}=0$, otherwise.

\begin{align*}
v_i[t]&=\sum_{j\in \calV-\calF}v_j[t-1]\Big{(} a_i\mathbbm{1}\{j=i\}+
\sum_{m\in \calM_i^*[t]:~ \calV(\mathsf{path}(m))\cap \calF=\O~}a_i\mathbbm{1}\{\mathsf{source}(m)=j\}\\
&\quad +\sum_{m^{\prime}\in \calS_{ig}[t]}\big{(}\sum_{m\in \calM_i^*[t]:~ \calV(\mathsf{path}(m))\cap \calF\not=\O~}
\frac{a_i\gamma_{m}}{|\calS_{ig}[t]|}\mathbbm{1}\{\mathsf{source}(m^{\prime})=j\}\big{)}\\
&\quad+
\sum_{m^{\prime}\in \calL_{ig}[t]}\big{(}\sum_{m\in \calM_i^*[t]:~\calV(\mathsf{path}(m))\cap \calF\not=\O~}\frac{a_i(1-\gamma_{m})}{|\calL_{ig}[t]|}\mathbbm{1}\{\mathsf{source}(m^{\prime})=j\}\big{)}\Big{)},
\end{align*}
% where equality (a) is true because for each message $m\in \calM_i^*[t]$ with $\calV(\mathsf{path}(m))\cap \calF=\O$, we have $w_m=v_j[t-1]$, where $j=\mathsf{source}(m)$; for each message $m\in \calM_i^*[t]$ with $\calV(\mathsf{path}(m))\cap \calF\not=\O$, $w_m$ can be rewritten as (14) and (15).

Thus, for each node $i, j\in \calV-\calF$, define the entry ${\bf M}_{ij}[t]$ as follows,

\begin{align*}
      {\bf{M}}_{ij}[t]&=a_i\mathbbm{1}\{j=i\}+
\sum_{m\in \calM_i^*[t]:~ \calV(\mathsf{path}(m))\cap \calF=\O~}a_i\mathbbm{1}\{\mathsf{source}(m)=j\}\\
&\quad +\sum_{m^{\prime}\in \calS_{ig}[t]}\big{(}\sum_{m\in \calM_i^*[t]:~ \calV(\mathsf{path}(m))\cap \calF\not=\O~}
\frac{a_i\gamma_{m}}{|\calS_{ig}[t]|}\mathbbm{1}\{\mathsf{source}(m^{\prime})=j\}\big{)}\\
&\quad+
\sum_{m^{\prime}\in \calL_{ig}[t]}\big{(}\sum_{m\in \calM_i^*[t]:~ \calV(\mathsf{path}(m))\cap \calF\not=\O~}\frac{a_i(1-\gamma_{m})}{|\calL_{ig}[t]|}\mathbbm{1}\{\mathsf{source}(m^{\prime})=j\}\big{)}.
\end{align*}
The third condition in Theorem \ref{claim_1}
trivially follows from the above construction.
 By above definition, ${\bf M}_{ij}\ge a_i$, where ${\bf M}_{ij}> a_i$ holds when there exists a nontrivial cycle (not a self-loop) of length at most $l$ that contains node $i$ and no faulty nodes. In addition, $a_i\ge \alpha$ by (\ref{lowerbound}). Thus, ${\bf M}_{ii}[t]\ge \alpha.$
The second condition holds.
 Now we show that ${\bf{M}}_{i}[t]$ is a stochastic vector. It is easy to see that ${\bf{M}}_{ij}[t]\ge 0$. In addition, we have

\begin{align*}
      \sum_{j\in \calV-\calF}{\bf{M}}_{ij}[t]&=\sum_{j\in \calV-\calF}
      \Big{(} a_i\mathbbm{1}\{j=i\}+
\sum_{m\in \calM_i^*[t]:~\calV(\mathsf{path}(m))\cap \calF=\O~}a_i\mathbbm{1}\{\mathsf{source}(m)=j\}\\
&\quad +\sum_{m^{\prime}\in \calS_{ig}[t]}\big{(}\sum_{m\in \calM_i^*[t]:~\calV(\mathsf{path}(m))\cap \calF\not=\O~}
\frac{a_i\gamma_{m}}{|\calS_{ig}[t]|}\mathbbm{1}\{\mathsf{source}(m^{\prime})=j\}\big{)}\\
&\quad+
\sum_{m^{\prime}\in \calL_{ig}[t]}\big{(}\sum_{m\in \calM_i^*[t]:~\calV(\mathsf{path}(m))\cap \calF\not=\O~}\frac{a_i(1-\gamma_{m})}{|\calL_{ig}[t]|}\mathbbm{1}\{\mathsf{source}(m^{\prime})=j\}\big{)}\Big{)}\\
      &=a_i\sum_{j\in \calV-\calF}\mathbbm{1}\{i=j\}+\sum_{m\in \calM_i^*[t]:~\mathsf{path}(m)\cap \calF=\O~}a_i \sum_{j\in \calV-\calF}\mathbbm{1}\{ \mathsf{source}(m)=j\}\\
      &\quad+\sum_{m\in \calM_i^*[t]:~\mathsf{path}(m)\cap \calF\not=\O}\Big{(}\frac{a_i\gamma_m}{|\calS_{ig}[t]|} \sum_{ m^{\prime}\in \calS_{ig}[t]}\sum_{j\in \calV-\calF}\mathbbm{1}\{ \mathsf{source}(m^{\prime})=j\}\Big{)}\\
      &\quad + \sum_{m\in \calM_i^*[t]:~\mathsf{path}(m)\cap \calF\not=\O}\Big{(}\frac{a_i(1-\gamma_m)}{|\calL_{ig}[t]|}\sum_{m^{\prime}~\in \calL_{ig}[t]}\sum_{j\in \calV-\calF}\mathbbm{1}\{\mathsf{source}(m^{\prime})=j\}\Big{)}\\
      &=a_i+\sum_{m\in \calM_i^*[t]:~\mathsf{path}(m)\cap \calF=\O~}a_i\\
      &\quad+\sum_{m\in \calM_i^*[t]:~\mathsf{path}(m)\cap \calF\not=\O} \frac{a_i\gamma_m}{|\calS_{ig}[t]|} \sum_{m^{\prime}~\in \calS_{ig}[t]}1\\
      &\quad + \sum_{m\in \calM_i^*[t]:~\mathsf{path}(m)\cap \calF\not=\O}\frac{a_i(1-\gamma_m)}{|\calL_{ig}[t]|}\sum_{m^{\prime}~\in \calL_{ig}[t]}1\\
      &=a_i+\sum_{m\in \calM_i^*[t]:~\mathsf{path}(m)\cap \calF=\O~}a_i+\sum_{m\in \calM_i^*[t]:~\mathsf{path}(m)\cap \calF\not=\O} \frac{a_i\gamma_m}{|\calS_{ig}[t]|} |\calS_{ig}[t]|\\
      &\quad + \sum_{m\in \calM_i^*[t]:~\mathsf{path}(m)\cap \calF\not=\O}\frac{a_i(1-\gamma_m)}{|\calL_{ig}[t]|}|\calL_{ig}[t]|\\
      &=a_i+\sum_{m\in \calM_i^*[t]:~\mathsf{path}(m)\cap \calF=\O~}a_i+\sum_{m\in \calM_i^*[t]:~\mathsf{path}(m)\cap \calF\not=\O}a_i\\
      &=a_i(|\calM^*_i[t]|+1)\\
      &=1.
\end{align*}
So ${\bf{M}}_{i}[t]$ is row stochastic.

In case II, since $\calP_i^{*}[t]\cap \calF=\O$, all messages in $\calM^*_i[t]$ are untampered by faulty nodes. Let $m_0$ be an arbitrary message in $\calM^*_i[t]$, with $\mathsf{source}(m_0)=j^*$. In order to guarantee condition 4) holds, we rewrite $v_i[t]$ as follows,
\begin{align*}
v_i[t]&=a_iv_i[t-1]+\sum_{m\in \calM_i^*[t]}a_iw_m\\
      &=a_iv_i[t-1]+a_iw_{m_0}+\sum_{m\in \calM_i^*[t]-\{m_0\}}a_iw_m\\
      &=a_iv_i[t-1]+\frac{1}{2}a_iw_{m_0}+\frac{1}{2}a_iw_{m_0}+\sum_{m\in \calM_i^*[t]-\{m_0\}}a_iw_m\\
      &=a_iv_i[t-1]+\frac{1}{2}a_iw_{m_0}+\frac{1}{2}a_i(\frac{\gamma_{m_0}}{|\calS_{ig}[t]|}\sum_{m^{\prime}\in \calS_{ig}[t]}w_{m^{\prime}}+\frac{1-\gamma_{m_0}}{|\calL_{ig}[t]|}\sum_{m^{\prime}\in \calL_{ig}[t]}w_{m^{\prime}})\\
      &\quad+\sum_{m\in \calM_i^*[t]-\{m_0\}}a_iw_m\\
      &=a_iv_i[t-1]+\frac{1}{2}a_iw_{m_0}+\sum_{m^{\prime}\in \calS_{ig}[t]}\frac{a_i\gamma_{m_0}}{2|\calS_{ig}[t]|}w_{m^{\prime}}+\sum_{m^{\prime}\in \calL_{ig}[t]}\frac{a_i(1-\gamma_{m_0})}{2|\calL_{ig}[t]|}w_{m^{\prime}}\\
      &\quad+\sum_{m\in \calM_i^*[t]-\{m_0\}}a_iw_m.
\end{align*}
Note that we did not use the above trick in case I. This is because, in case I, by substituting tampered messages in $\calM_i^*[t]$ by untampered messages in $\calS_{ig}[t]$ and $\calL_{ig}[t]$, as will be seen later, condition 4) is automatically guaranteed.

We refer to the above convex combination as the \emph{untampered message representation of $v_i[t]$} in case II. And the convex coefficient of each message in the above representation as \emph{weight assigned} to that message. Combining the coefficients of messages according to message sources, it is obtained that

\begin{align*}
v_i[t]&=\sum_{j\in \calV-\calF}v_j[t-1]\Big{(}a_i\mathbbm{1}\{i=j\}
      +\frac{1}{2}a_i\mathbbm{1}\{j=j^*\}+\sum_{m\in \calM_i^*[t]-\{m_0\}}a_i\mathbbm{1}\{\mathsf{source}(m)=j\}\\
      &\quad +\frac{a_i\gamma_{m_0}}{2|\calS_{ig}[t]|}\sum_{m^{\prime}\in \calS_{ig}[t]}\mathbbm{1}\{\mathsf{source}(m^{\prime})=j\}+\frac{a_i(1-\gamma_{m_0})}{2|\calL_{ig}[t]|}\sum_{m^{\prime}\in \calL_{ig}[t]}\mathbbm{1}\{\mathsf{source}(m^{\prime})=j\}
      \Big{)}.
\end{align*}
Thus, define ${\bf M}_{ij}$ by
\begin{align*}
{\bf M}_{ij}&=a_i\mathbbm{1}\{i=j\}
      +\frac{1}{2}a_i\mathbbm{1}\{j=j^*\}+\sum_{m\in \calM_i^*[t]-\{m_0\}}a_i\mathbbm{1}\{\mathsf{source}(m)=j\}\\
      &\quad +\frac{a_i\gamma_{m_0}}{2|\calS_{ig}[t]|}\sum_{m^{\prime}\in \calS_{ig}[t]}\mathbbm{1}\{\mathsf{source}(m^{\prime})=j\}+\frac{a_i(1-\gamma_{m_0})}{2|\calL_{ig}[t]|}\sum_{m^{\prime}\in \calL_{ig}[t]}\mathbbm{1}\{\mathsf{source}(m^{\prime})=j\}.
\end{align*}
Follow the same line as in the proof of case I, it can be shown that the above ${\bf M}_{ij}$ satisfies conditions 1), 2) and 3).

In case III, case IV, case V and case VI, at least one of $\calS_{ig}[t]$ and $\calL_{ig}[t]$ is empty, without loss of generality, assume that $\calS_{ig}[t]$ is empty. By the definition of $\calS_{ig}[t]$, we know that the set $\calM_{is}[t]$ is covered by $\calF$. On the other hand, by the definition of $\calM_{is}[t]$, a minimum cover of $\calM_{is}[t]$ is of size $f$. Since $|\calF|\le f$, then we know $\calF$ is a minimum cover of $\calM_{is}[t]$ and $|\calF|=f$.
 From the definition of $\calM_{is}[t]$, we know there exists a message with the smallest value in $\calM^*_i[t]$, denoted by $m_s$ is not covered by $\calF$. So, we can
  use singleton $\{m_s\}$ to mimic the role of $\calS_{ig}[t]$ in cases I and II. Similarly, we can use the same trick when $\calL_{ig}[t]$ is empty. The \emph{untampered message representation of $v_i[t]$} and \emph{message weight} are defined similarly as that in case I and case II.

To show the above constructions satisfy the last condition in Theorem \ref{claim_1}, we need the following claim.

\begin{claim}
\label{claimw}
For node $i\in \calV-\calF$, in the untampered message representation of $v_i[t]$, at most one of the sets $\calS_{ig}[t]$ and $\calL_{ig}[t]$ contains messages with assigned weights less than $\beta$, where $\beta=\frac{1}{16n^{2l}}$.
\end{claim}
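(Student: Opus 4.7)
\noindent\textbf{Proof proposal for Claim \ref{claimw}.} The plan is to dispose of the vacuous cases first and then handle the two substantive ones (Cases I and II) by a pigeonhole argument on the total weight that the untampered message representation places on $\calS_{ig}[t] \cup \calL_{ig}[t]$, combined with a crude bound on how many messages node $i$ can receive in a single iteration.

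To set the stage I would first record the size bounds. Each message arriving at node $i$ travels along a path of length at most $l$ ending at $i$, so the number of such messages, and hence the cardinalities $|\calM_i^{\prime}[t]|$, $|\calM_i^*[t]|$, $|\calS_{ig}[t]|$ and $|\calL_{ig}[t]|$, are all bounded by $n^l$. This in turn forces $a_i = 1/(|\calM_i^*[t]|+1) \ge 1/(n^l+1)$, which is the only place the factor $n^l$ will enter. Next, in Cases III, IV, V and VI at least one of $\calS_{ig}[t]$ and $\calL_{ig}[t]$ is empty by hypothesis, so the claim is vacuously true in those cases: an empty set contains no message, let alone one with weight $<\beta$.

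For Case I I would read off from the untampered representation that every message in $\calS_{ig}[t]$ is assigned the same coefficient
\[
A_S \;:=\; \frac{a_i}{|\calS_{ig}[t]|}\sum_{m\in \calM_i^*[t]:\ \calV(\mathsf{path}(m))\cap\calF\ne\emptyset} \gamma_m ,
\]
and every message in $\calL_{ig}[t]$ is assigned the same coefficient $A_L$ obtained by replacing $\gamma_m$ by $1-\gamma_m$ and $|\calS_{ig}[t]|$ by $|\calL_{ig}[t]|$. The key identity is
\[
A_S\,|\calS_{ig}[t]| \;+\; A_L\,|\calL_{ig}[t]| \;=\; a_i\cdot\#\{\text{tampered } m\in\calM_i^*[t]\} \;\ge\; a_i,
\]
the last inequality holding because Case I asserts $\calP_i^*[t]\cap\calF\ne\emptyset$. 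Pigeonhole then gives $\max(A_S|\calS_{ig}[t]|,\, A_L|\calL_{ig}[t]|)\ge a_i/2$, so dividing by the corresponding cardinality ($\le n^l$) yields a per-message weight at least $a_i/(2n^l) \ge 1/(2(n^l+1)n^l) \ge 1/(4n^{2l}) > \beta$ on at least one of the two sets, which is exactly the claim.

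Case II is handled the same way after observing that the splitting $a_i w_{m_0} = \tfrac12 a_i w_{m_0} + \tfrac12 a_i w_{m_0}$ deposits exactly $a_i/2$ of coefficient into $\calS_{ig}[t]\cup\calL_{ig}[t]$, partitioned between the two sides by $\gamma_{m_0}$ and $1-\gamma_{m_0}$. One of these fractions is at least $1/2$, so the corresponding common per-message weight is at least $a_i/(4n^l)\ge 1/(8n^{2l}) > \beta$. The main subtlety I anticipate is the accounting observation that in Case I all messages within $\calS_{ig}[t]$ share the \emph{same} coefficient (and likewise within $\calL_{ig}[t]$), so that the pigeonhole inequality on the sums translates cleanly into a uniform lower bound on the common weight in one of the two sets; once that is noted, the remaining work is the routine path-count bookkeeping that fixes the constant $16$ in $\beta=1/(16n^{2l})$.
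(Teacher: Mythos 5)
Your proposal is correct and follows essentially the same route as the paper: dispose of Cases III--VI where at least one of $\calS_{ig}[t],\calL_{ig}[t]$ is empty, bound the number of received messages by a power of $n$ to lower-bound $a_i$ and upper-bound $|\calS_{ig}[t]|,|\calL_{ig}[t]|$, and then use the observation that one of $\gamma$ and $1-\gamma$ contributes at least half of the redistributed weight to get a uniform per-message lower bound on one of the two sets (your explicit pigeonhole on $\sum_m \gamma_m$ versus $\sum_m(1-\gamma_m)$ in Case I is exactly what the paper's ``similarly'' elides). One small correction: the number of messages is bounded by $n+n^2+\cdots+n^l\le 2n^{l}$ (for $n\ge 2$), not by $n^{l}$, but the factor of $2$ is absorbed by the slack in $\beta=\frac{1}{16n^{2l}}$, so your conclusion stands.
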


\begin{proof}
An untampered message is either in $\calM^*_i[t]$ or in $\calS_{ig}[t]\cup\calL_{ig}[t]$.

For case V and case VI, both $\calS_{ig}[t]$ and $\calL_{ig}[t]$ are empty, all untampered messages are contained in $\calM^*_i[t]$.
For each untampered message in $\calM_i^*[t]$, its weight in the untampered message representation is $a_i=\frac{1}{|\calM^*_i[t]|+1}$. In $\calM_i[t]$, there are at most $n$ messages were transmitted via one hop, at most $n^2$ messages were transmitted via two hops. In general, $\calM_i[t]$ contains at most $n^d$ messages that were transmitted via $d$ hops, where $d$ is an integer in $\{1,\ldots, l\}$. Thus,
\begin{align*}
|\calM^*_i[t]|+1&\le |\calM_i[t]|\\
&\le n+n^2+\ldots+n^l\\
&=\frac{n(n^l-1)}{l^*}\\
&\overset{(a)}{\le} \frac{n(n^l-1)}{\frac{n}{2}}\\
&\le 2n^{l}.
\end{align*}
Inequality $(a)$ is true because $n\ge 2$. Thus, $a_i\ge \frac{1}{2n^l}$. In cases V and VI, as both $\calS_{ig}[t]$ and $\calL_{ig}[t]$ are empty, all untampered messages are with weight no less than $\frac{1}{2n^l}$.

For case III and case IV, WLOG, assume $\calS_{ig}[t]$ is empty.
An untampered message is either in $\calM_i^*[t]$ or in $\calL_{ig}[t]$. Since for each untampered message in $\calM_i^*[t]$, the weight assigned to it in the untampered message representation of $v_i[t]$ is at least $\frac{1}{2n^l}$. Thus, only $\calL_{ig}[t]$ may contain untampered messages with assigned weights less than $\frac{1}{2n^l}$.

For case II, both $\calS_{ig}[t]$ and $\calL_{ig}[t]$ are nonempty, an untampered message is in one of $\calM_i^*[t]$, $\calS_{ig}[t]$ and $\calL_{ig}[t]$. In the untampered message representation of $v_i[t]$,  either $\gamma_{m_0}\ge \frac{1}{2}$ or $1-\gamma_{m_0}\ge \frac{1}{2}$.  WLOG, assume that $\gamma_{m_0}\ge \frac{1}{2}$, which  implies that for each message in $\calS_{ig}[t]$, the assigned weight is at least $\frac{a_i}{4|\calS_{ig}[t]|}\ge \frac{1}{16n^{2l}}$, since $|\calS_{ig}[t]|\le |\calM_i[t]|\le 2n^l$. Let $\beta=\frac{1}{16n^{2l}}$, then we can conclude that only $\calL_{ig}[t]$ may contain untampered messages with assigned weights less than $\beta$.

It can be shown similarly that the above claim also holds for case I.

%Let $\calT_{\calS_{ig}[t]}^{*}$ and $\calT_{\calL_{ig}[t]}^*$ be minimal covers of sets $\calS_{ig}[t]$ and $\calL_{ig}[t]$, respectively. Since $\calS_{ig}[t]\subseteq \calM_{is}[t]$ and $\calL_{ig}[t]\subseteq \calM_{il}[t]$, by the definitions of $ \calM_{is}[t]$ and $ \calM_{il}[t]$, it can be obtained that $|\calT_{\calS_{ig}[t]}^{*}|\le f$ and $|\calT_{\calL_{ig}[t]}^*|\le f$. The assumption that $\gamma_{m_0}\ge \frac{1}{2}$
\end{proof}
%Similarly, we can show the above properties hold for the remaining five cases.
Now we are ready to show the following property is also true.

\begin{claim}
\label{claimrg}
For any $t\geq 1$, there exists a reduced graph $\widetilde{G^l}_{\calF}\in R_\calF$ such that
$\beta \, {\bf{H}}[t] ~ \leq ~  {{\bf{M}}[t]}$.
\end{claim}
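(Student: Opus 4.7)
The strategy is to exhibit an explicit choice of cover sets $\{C_i\}_{i \in \calV - \calF}$ yielding a reduced graph $\widetilde{G^l}_{\calF} \in R_{\calF}$ whose adjacency matrix $\mathbf{H}[t]$ satisfies the desired entrywise bound. The guiding principle is to use $C_i$ to delete precisely those incoming paths whose corresponding messages contribute ``light'' weight (below $\beta$) to $\mathbf{M}_{ij}[t]$, so that every surviving edge of $\widetilde{G^l}_{\calF}$ corresponds to a message contributing at least $\beta$ to the untampered message representation used in the proof of Theorem~\ref{claim_1}.

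The construction proceeds node by node. For each fault-free $i$, invoke Claim~\ref{claimw}, which guarantees that among the sets $\calS_{ig}[t]$ and $\calL_{ig}[t]$, at most one contains messages whose assigned weights are less than $\beta$. If $\calS_{ig}[t]$ is the offending set, take $C_i := \calT^*(\calM_{is}[t])$; in every other configuration (including the symmetric case, the case where both sets contain only heavy weights, and the boundary cases III--VI), take $C_i := \calT^*(\calM_{il}[t])$. By the defining property of $\calM_{is}[t]$ and $\calM_{il}[t]$, either minimum cover has size exactly $f$, so $|C_i| = f$; and because every member lies on a $(\cdot, i)$-path of length at most $l$ in $G$ (and the min-cut formulation in Appendix~\ref{trimmingProof} excludes $i$ itself), we have $C_i \subseteq N_i^{l-} - \{i\}$. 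Taking $F := \calF$, this data specifies a reduced graph $\widetilde{G^l}_{\calF} \in R_{\calF}$ through Definition~\ref{def:reduced}.

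To establish the entrywise bound, fix fault-free nodes $i$ and $j$ and consider any edge of $\widetilde{G^l}_{\calF}$ directed from $j$ to $i$. By the one-to-one correspondence between edges of $G^l$ and paths of length at most $l$ in $G$, this edge is associated with a path $P$ whose vertex set is disjoint from both $\calF$ and $C_i$, and hence with an untampered message $m \in \calM_i[t]$ from $j$ to $i$ whose path is not covered by $C_i$. The trim step partitions the untampered messages destined for $i$ into $\calM_i^*[t]$, $\calS_{ig}[t]$, and $\calL_{ig}[t]$; by our choice of $C_i$, which was set to cover whichever of $\calS_{ig}[t], \calL_{ig}[t]$ might contain light messages, $m$ must either lie in $\calM_i^*[t]$ (with assigned weight $a_i \ge 1/(2n^l) \ge \beta$, as in the proof of Claim~\ref{claimw}) or lie in the surviving one of $\calS_{ig}[t], \calL_{ig}[t]$ (with assigned weight at least $\beta$ by Claim~\ref{claimw}). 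Summing the contributions of all such edges from $j$ to $i$ into $\mathbf{M}_{ij}[t]$ yields $\mathbf{M}_{ij}[t] \ge \beta\,\mathbf{H}_{ij}[t]$; the diagonal entries are covered by condition~2 of Theorem~\ref{claim_1}, which already gives $\mathbf{M}_{ii}[t] \ge a_i \ge \beta$.

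The main obstacle is verifying the second paragraph uniformly across the six cases of the $\mathbf{M}_i[t]$ construction, since the roles of $\calS_{ig}[t]$ and $\calL_{ig}[t]$ degenerate when one (or both) is empty. In the boundary cases III--VI, the set whose cover we pick may already be entirely covered by $\calF$, so $C_i$ simply duplicates deletions performed by $F$; the reduced graph remains valid and the bound holds trivially on the additionally removed edges. A minor care is also needed in the summation, where each path's contribution is attributed to its source via the indicator $\mathbbm{1}\{\mathsf{source}(m) = j\}$ appearing in the definition of $\mathbf{M}_{ij}[t]$, but this bookkeeping is immediate from the formulas displayed in the proof of Theorem~\ref{claim_1}.
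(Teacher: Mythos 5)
Your proposal follows essentially the same route as the paper: you select each $C_i$ as the minimum cover of whichever of $\calM_{is}[t]$, $\calM_{il}[t]$ may carry weights below $\beta$ (as identified by Claim~\ref{claimw}), form the corresponding reduced graph per Definition~\ref{def:reduced}, and check the entrywise bound edge by edge. The argument is correct, and your explicit verification of the bound and of the boundary cases III--VI merely spells out what the paper leaves as ``it is easy to see.''
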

\begin{proof}
 We construct the desired reduced graph $\widetilde{G^l}_{\calF}$ as follows.
 Let
 \[
 E=\{e\in \calE(G^{l}):~\calV(P(e))\cap \calF\not=\O\}
 \]
  be the set of edges in $G^{l}$ that are covered by node set $\calF$.

For a fault-free node $i$: (i) if both $\calS_{ig}[t]$ and $\calL_{ig}[t]$ are empty, then choose $C_i=\O$; (ii) if one of $\calS_{ig}[t]$ and $\calL_{ig}[t]$ is empty, WLOG, assume that $\calS_{ig}[t]$ is empty, then choose $C_i=\calT^*(\calM_{il}[t])$; (iii) if both $\calS_{ig}[t]$ and $\calL_{ig}[t]$ are nonempty, WLOG, assume that the weight assigned to every message in $\calS_{ig}[t]$ is lower bounded by $\beta$, then choose $C_i=\calT^*(\calM_{il}[t])$. Let
\[
E_i=\{e\in \calE(G^{l}):~ e~\text{is an incoming edge of node $i$ in}~G^{l}~~ \text{and}~ \calV(P(e))\cap C_i\not=\O\}
\]
 be the set of incoming edges of node $i$ in $G^{l}$ that are covered by node set $C_i$.

Set $\calV(\widetilde{G^l}_{\calF})=\calV(G)-\calF$. And let $\calE(\widetilde{G^l}_{\calF})=\calE(\widetilde{G^l})-E-\cup_{i\in\calV-\calF}E_i$.

From claim \ref{claimw}, for node $i$, at most one of the sets $\calS_{ig}[t]$ and $\calL_{ig}[t]$ contains messages with assigned weights less than $\beta$. Then it is easy to see that the adjacency matrix of the obtained reduced graph, ${\bf{H}}[t]$, has the property that $ {\beta \,\bf{H}}[t] ~ \leq ~  {{\bf{M}}[t]}$.

\end{proof}

\end{proof}

\subsection{Correctness of Algorithm 1}\label{app:correctness}

\begin{lemma}
\label{l_product_H}
In the product below of ${\bf H}[t]$ matrices for consecutive
$\tau(n-\phi)$ iterations, at least one column is non-zero.
\[
\Pi_{t=z}^{z+\tau(n-\phi)-1} \, {\bf H}[t]
\]
\end{lemma}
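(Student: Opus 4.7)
The plan is a pigeonhole argument combined with an entry-wise matrix domination trick, leveraging the self-loops guaranteed by the communication model. First, I would clarify that $\tau$ in the statement should be taken to be $|R_\calF|$, the (finite) number of distinct reduced graphs of $G^l$ associated with the fault set $\calF$. By Theorem \ref{claim_1}, each $\mathbf{H}[t]$ is the adjacency matrix of some reduced graph in $R_\calF$. Hence, by the pigeonhole principle, among the $\tau(n-\phi)$ matrices $\mathbf{H}[z], \mathbf{H}[z+1], \ldots, \mathbf{H}[z+\tau(n-\phi)-1]$, some fixed reduced graph's adjacency matrix, call it $\mathbf{H}^*$, must appear at least $n-\phi$ times.

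The key structural observation is that every $\mathbf{H}[t]$ satisfies $\mathbf{H}[t] \geq \mathbf{I}$ entry-wise. Indeed, the communication model specifies $(i,i) \in \calE(G)$ for every $i$, so the trivial path from a fault-free node $i$ to itself survives in every reduced graph: it is not covered by $F$ (since $i \notin F$) nor by $C_i \subseteq N_i^{l-} - \{i\}$. Combined with non-negativity, this yields $\mathbf{H}[t] \geq \mathbf{I}$.

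Using this domination, I would write
\[
\prod_{t=z}^{z+\tau(n-\phi)-1} \mathbf{H}[t] \;=\; \mathbf{A}_0 \, \mathbf{H}^* \, \mathbf{A}_1 \, \mathbf{H}^* \cdots \mathbf{H}^* \, \mathbf{A}_{n-\phi},
\]
where each $\mathbf{A}_k$ is the product of those $\mathbf{H}[t]$ matrices (possibly none, in which case $\mathbf{A}_k = \mathbf{I}$) sitting between consecutive occurrences of $\mathbf{H}^*$. Since every factor appearing in $\mathbf{A}_k$ is non-negative and dominates $\mathbf{I}$, iterating the rule ``$\mathbf{P}\,\mathbf{X}\,\mathbf{Q} \geq \mathbf{P}\,\mathbf{I}\,\mathbf{Q}$ whenever $\mathbf{P}, \mathbf{Q}$ are non-negative and $\mathbf{X} \geq \mathbf{I}$'' gives $\mathbf{A}_k \geq \mathbf{I}$, and hence
\[
\prod_{t=z}^{z+\tau(n-\phi)-1} \mathbf{H}[t] \;\geq\; (\mathbf{H}^*)^{n-\phi}.
\]
Applying Corollary \ref{l_one_column}, $(\mathbf{H}^*)^{n-\phi}$ has at least one entry-wise positive column, and this column remains non-zero in the dominating product.

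The main obstacle is the non-commutativity of matrix multiplication, which forbids simply permuting the $n-\phi$ copies of $\mathbf{H}^*$ together to form $(\mathbf{H}^*)^{n-\phi}$ directly. The self-loop observation resolves this cleanly: it gives the entry-wise lower bound $\mathbf{H}[t] \geq \mathbf{I}$ that lets us effectively ``discard'' the intervening matrices without weakening the inequality. Once that bound is in place, Corollary \ref{l_one_column} yields the conclusion immediately.
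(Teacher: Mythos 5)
Your proposal is correct and follows essentially the same route as the paper: pigeonhole to extract $n-\phi$ copies of a single reduced-graph matrix $\mathbf{H}^*$, then use the fact that every $\mathbf{H}[t]$ has a non-zero diagonal (your $\mathbf{H}[t]\ge \mathbf{I}$ observation, justified by the surviving self-loops) to conclude via Corollary \ref{l_one_column}. The only difference is that you spell out the entry-wise domination $\prod_t \mathbf{H}[t] \ge (\mathbf{H}^*)^{n-\phi}$ explicitly, where the paper leaves that step implicit.
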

\begin{proof}
Since the above product consists of $\tau(n-\phi)$ matrices
in $R_F$,
at least one of the $\tau$ distinct connectivity matrices
in $R_\calF$, say matrix ${\bf H}_*$, will appear in the above
product at least $n-\phi$ times.

Now observe that: (i)
By Lemma \ref{l_one_column}, ${\bf H}_*^{n-\phi}$ contains a non-zero
column, say the $k$-th column is non-zero,
and (ii) all the ${\bf H}[t]$ matrices in the product contain a non-zero diagonal.
These two observations together imply that the $k$-th column in the above product
is non-zero.
\end{proof}

Let us now define a sequence of matrices ${\bf Q}(i)$ such that
each of these matrices is a product of $\tau(n-\phi)$ of the
${\bf M}[t]$ matrices. Specifically,
\[
{\bf Q}(i) ~=~ \Pi_{t=(i-1)\tau(n-\phi)+1}^{i\tau(n-\phi)} ~ {\bf M}[t]
\]
Observe that
\begin{eqnarray}
{\bf v}[k\tau(n-\phi)] & = & \left(\, \Pi_{i=1}^k ~ {\bf Q}(i) \,\right)~{\bf v}[0]
\end{eqnarray}

\begin{lemma}
\label{l_Q}
For $i\geq 1$, ${\bf Q}(i)$ is a scrambling row stochastic matrix,
and $\lambda({\bf Q}(i))$ is bounded from above by a constant
smaller than 1.
\end{lemma}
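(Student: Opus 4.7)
The plan is to combine three ingredients already in place: (i) each ${\bf M}[t]$ is row stochastic, (ii) the entry-wise domination $\beta\,{\bf H}[t]\le {\bf M}[t]$ from the fourth condition of Theorem \ref{claim_1}, and (iii) the ``non-zero column'' conclusion of Lemma \ref{l_product_H}. Row-stochasticity of ${\bf Q}(i)$ is immediate, since a product of row stochastic matrices is row stochastic; so the real content is the scrambling statement with a uniform bound on $\lambda({\bf Q}(i))$.

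Next, I would lift the pointwise bound $\beta\,{\bf H}[t]\le {\bf M}[t]$ to the product. Since all matrices involved have nonnegative entries, multiplying inequalities of nonnegative matrices preserves them, so
\[
\beta^{\tau(n-\phi)}\prod_{t=(i-1)\tau(n-\phi)+1}^{i\tau(n-\phi)}{\bf H}[t] \;\le\; \prod_{t=(i-1)\tau(n-\phi)+1}^{i\tau(n-\phi)}{\bf M}[t] \;=\; {\bf Q}(i).
\]
By Lemma \ref{l_product_H}, the $\bf H$-product on the left has at least one non-zero column, say column $k$. Each ${\bf H}[t]$ is a $0/1$ adjacency matrix with a nonzero diagonal (coming from self-loops), so every entry of the product is a nonnegative integer, and every \emph{nonzero} entry is at least $1$. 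Therefore every entry in the $k$-th column of ${\bf Q}(i)$ is bounded below by $\gamma:=\beta^{\tau(n-\phi)}>0$.

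Finally, I would invoke the elementary fact recalled in Appendix \ref{MatrixPreliminaries}: if a row stochastic matrix has a column all of whose entries are at least $\gamma>0$, then for every pair of rows $i_1,i_2$ that column contributes at least $\gamma$ to $\sum_j\min({\bf Q}(i)_{i_1 j},{\bf Q}(i)_{i_2 j})$, so $\lambda({\bf Q}(i))\le 1-\gamma<1$. Since $\gamma$ depends only on $\beta$, $\tau$, $n$, $\phi$, the bound is independent of $i$, giving a uniform constant strictly smaller than $1$, as required.

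The main obstacle is ensuring that the constant $\gamma$ really is independent of $i$, and this reduces to verifying that Lemma \ref{l_product_H} applies to \emph{every} window of length $\tau(n-\phi)$. This in turn requires $\tau$ to be chosen so that the pigeonhole step in Lemma \ref{l_product_H} goes through uniformly; taking $\tau=|R_\calF|$ (finite because $G$ is finite) suffices, since then in any window of $\tau(n-\phi)$ consecutive iterations at least one reduced graph in $R_\calF$ is selected as the witness of condition~4 of Theorem \ref{claim_1} at least $n-\phi$ times, and the argument of Lemma \ref{l_product_H} delivers the required non-zero column. No further computation should be needed.
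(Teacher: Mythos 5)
Your argument is correct and follows essentially the same route as the paper: row-stochasticity from the product structure, the entrywise bound $\beta^{\tau(n-\phi)}\prod_t {\bf H}[t]\le {\bf Q}(i)$ lifted from condition~4 of Theorem~\ref{claim_1}, the non-zero column from Lemma~\ref{l_product_H} with entries at least $1$, and the resulting uniform bound $\lambda({\bf Q}(i))\le 1-\beta^{\tau(n-\phi)}<1$. Your closing remark on choosing $\tau=|R_\calF|$ matches the pigeonhole step already built into the paper's proof of Lemma~\ref{l_product_H}, so no gap remains.
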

\begin{proof}

${\bf Q}(i)$ is a product of row stochastic matrices (${\bf M}[t]$), therefore,
${\bf Q}(i)$ is row stochastic.

From Lemma \ref{claimrg}, for each $t$,
\[
\beta \, {\bf H}[t] ~ \leq ~ {\bf M}[t]
\]
Therefore,
\[
\beta^{\tau(n-\phi)} ~ \Pi_{t=(i-1)\tau(n-\phi)+1}^{i\tau(n-\phi)} ~ {\bf H}[t] ~ \leq
~ {\bf Q}(i)
\]
By using $z=(i-1)(n-\phi)+1$ in Lemma \ref{l_product_H},
we conclude that the matrix product on the left side
of the above inequality contains a non-zero column. Therefore, ${\bf Q}(i)$ contains
a non-zero column as well. Therefore, ${\bf Q}(i)$ is a scrambling matrix.

Observe that $\tau(n-\phi)$ is finite, therefore, $\beta^{\tau(n-\phi)}$
is non-zero. Since the non-zero terms in ${\bf H}[t]$ matrices are all 1,
the non-zero entries in $\Pi_{t=(i-1)\tau(n-\phi)+1}^{i\tau(n-\phi)} {\bf H}[t]$
must each be $\geq$ 1. Therefore, there exists a non-zero column in ${\bf Q}(i)$
with all the entries in the column being $\geq \beta^{\tau(n-\phi)}$.
Therefore $\lambda({\bf Q}(i))\leq 1-\beta^{\tau(n-\phi)}$.
\end{proof}

\begin{proof}[Proof of Theorem \ref{t}]

Since ${\bf v}[t]={\bf M}[t]\,v[t-1]$, and ${\bf M}[t]$ is a row stochastic matrix, it
follows that
\emph {Algorithm 1} satisfies the validity condition.

By Claim \ref{claim_delta},
\begin{eqnarray}
\lim_{t\rightarrow \infty} \delta(\Pi_{i=1}^t {\bf M}[t])
& \leq & \lim_{t\rightarrow\infty} \Pi_{i=1}^t \lambda({\bf M}[t]) \\
& \leq & \lim_{i\rightarrow\infty} \Pi_{i=1}^{\lfloor\frac{t}{\tau(n-\phi)}\rfloor} \lambda({\bf Q}(i)) \\
& = & 0
\end{eqnarray}
The above argument makes use of the facts that
$\lambda({\bf M}[t])\leq 1$ and $\lambda({\bf Q}(i))\leq (1-\beta^{\tau(n-\phi)})<1$.
Thus, the rows of $\Pi_{i=1}^t {\bf M}[t]$ become identical in the limit.
This observation, and the fact that ${\bf v}[t]=(\Pi_{i=1}^t {\bf M}[i]){\bf v}[t-1]$ together imply that
the state of the fault-free nodes satisfies the
convergence condition.

Now, the validity and convergence conditions
together imply that
there exists a positive scalar $c$ such that
\[
\lim_{t\rightarrow\infty}
{\bf v}[t] ~ = ~ \lim_{t\rightarrow\infty} \left( \Pi_{i=1}^t {\bf M}[i]) \right)\,
{\bf v}[0] ~ = ~ c\,{\bf 1}
\]
where {\bf 1} denotes a column with all its entries being 1.
\end{proof}

\section{Connection to existing work}\label{app:connection}
\subsection{Undirected graph when $l\ge l^*$}
\begin{proof}[Proof of Theorem \ref{equiUndirected}]
First we show ``Condition NC implies $n\ge 3f+1$ and node connectivity at least $2f+1$". It has already been shown in Corollary \ref{cdegree} that $n\ge 3f+1$. It remains to show the node connectivity of $G$ is at least $2f+1$.
We prove this by contradiction. Suppose the node-connectivity is no more than $2f$. Let $S$ be a min cut of $G$, then $|S|\le 2f$.
Let $K_1$ and $K_2$ be two connected components in $G_S$, the subgraph of $G$ induced by node set $\calV(G)-S$.

Construct a node partition of $G$ as follows:
Let $L=K_1, R=K_2$ and $C=\calV-F-L-R$, where (1) if $|S|\ge f+1$, let $F\subseteq S$ such that $|F|=f$; (2) otherwise, let $F=S$.  For the later case, there is no path between $L\cup C$ and $R$ in $G_F$, then $\kappa(L\cup C, i)=0\le f$ for any $i\in R$ in $G_F$. Similarly, $\kappa(R\cup C, j)=0\le f$ for any $j\in L$. On the other hand, we know that $G$ satisfies Condition NC. Thus, we arrive at a contradiction.

For the former case, i.e., $F\subset S$, since $G$ satisfies Condition NC, WLOG, assume $R\cup C\Rightarrow_{l^*} L$ in $G_F$, i.e., there exists a node $i\in L$ such that there are at least $f+1$ disjoint paths from set $R\cup C$ to node $i$ in $G_F$. Add an additional node $y$ and connect node $y$ to all nodes in $R\cup C$.
Denote the resulting graph by $G_F^{\prime}$.  From Menger's Theorem we know that a min $y, i$-cut in graph $G_F^{\prime}$ has size at least $f+1$. On the other hand, since $S$ is a cut of $G$, then we know $S-F$ is a $y, i$--cut in $G_F^{\prime}$. In addition, we know $|S-F|=|S|-|F|\le 2f-f\le f$. Thus we arrive at a contradiction.

Next we show that ``$n\ge 3f+1$ and $2f+1$ node-connectivity also imply Condition NC". Consider an arbitrary node partition $L, R, C, F$ such that $L\not=\O, R\not=\O$ and $|F|\le f$. Since $n\ge 3f+1$ and $|F|\le f$, either $|L\cup C|\ge f+1$ or $|R\cup C|\ge f+1$. WLOG, assume that $|R\cup C|\ge f+1$. Add a node $y$ connecting to all nodes in $R\cup C \cup F$ and denote the newly obtained graph by $G^{\prime\prime}$. By Expansion Lemma\footnote{\textbf{Expansion Lemma}: If $G$ is a $k$-connected graph, and $G^{\prime}$ is formed from $G$ by adding a vertex $y$ having at least $k$ neighbors in $G$, then $G^{\prime}$ is $k$-connected. }, $G^{\prime\prime}$ is $|F|+f+1$ connected. Thus, fix $i\in L$. There are at least $|F|+f+1$ internally disjoint $y, i$--paths. So there are at least $f+1$ internally disjoint $y, i$--paths in $G^{\prime\prime}_F$. Thus $R\cup C\Rightarrow_{l^*} L$ in $G_F$. Since this holds for all partitions of the form $L, R, C, F$ where $L\not=\O$, $R\not=\O$ and $|F|\le f$, then we conclude that Condition NC holds. This completes the proof.

\end{proof}

\subsection{Directed graph when $l\ge l^*$}

We first state the alternative condition of Condition 1.
\begin{definition}
Given disjoint subsets $A, B, F$ of $\calV(G)$ such that $|F|\le f$, set $A$ is said to propagate in $\calV-\calF$ to set $B$ if either (i) $B=\O$, or (ii) for each node $b\in B$, there exist at least $f+1$ disjoint $(A, b)$--paths excluding $F$.
\end{definition}

We will denote the fact that set $A$ propagates in $\calV-\calF$ to set $B$ by the notation
\begin{align*}
A\overset{\calV-\calF}{\rightsquigarrow}B.
\end{align*}

When it is not true that $A\overset{\calV-\calF}{\rightsquigarrow}B$, we will denote that fact by
\begin{align*}
A\overset{\calV-\calF}{\not\rightsquigarrow}B.
\end{align*}

%\begin{theorem}
%Byzantine consensus is possible in $G$ if and only if for any node partition $A, B, F$ of $\calV$, where $A$ and $B$ are both non-empty, and $|F|\le f$, either $A\overset{\calV-F}{\rightsquigarrow}B$ or $B\overset{\calV-F}{\rightsquigarrow}A$.
%\end{theorem}

%For ease of future reference, we term the condition in the above theorem as Condition Propagate.

%\begin{definition}
%Given disjoint subsets $A, B$, where $B$ is non-empty: \\
%(i) We say $A\to B$ iff set $A$ contains at least $f+1$ distinct incoming neighbors of $B$. That is, $$\left | \{i | ~(i,j)\in \calE, i\in A, j\in B\} \right |>f$$.\\
%(ii) We say $A\not\to B$ iff $A\to B$ is not true.
%\end{definition}

\begin{theorem}
Given graph $G$, for any node partition $A, B, F$ of $\calV$, where $A$ and $B$ are both non-empty, and $|F|\le f$, then either $A\overset{\calV-F}{\rightsquigarrow}B$ or $B\overset{\calV-F}{\rightsquigarrow}A$ holds $\iff$ for any partition $L, C, R, F$ of $\calV$, such that both $L$ and $R$ are non-empty, and $|F|\le f$, either $L\cup C\to R$, or $R\cup C\to L$.
\end{theorem}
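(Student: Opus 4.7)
The plan is to prove the two implications separately; throughout I work inside $G-F$, invoke Menger's theorem, and exploit the cut structure hiding behind a failure of propagation.

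For the forward direction $(*)\Rightarrow(**)$, take any partition $L,C,R,F$ with $L,R\neq\emptyset$ and $|F|\le f$, and apply the propagation hypothesis to $A=L$, $B=R\cup C$ with the same $F$. Either $L\overset{\calV-F}{\rightsquigarrow} R\cup C$ or $R\cup C\overset{\calV-F}{\rightsquigarrow}L$. In the first case, fix any $b\in R$ and take $f+1$ internally vertex-disjoint paths from $L$ to $b$ in $G-F$; on each such path let $u$ be the last vertex that lies in $L\cup C$. Such a $u$ exists (the starting vertex is in $L$), satisfies $u\neq b$ (since $b\in R$), and its successor on the path sits in $(\calV-F)-(L\cup C)=R$, so $u$ is a direct in-neighbor of $R$ inside $L\cup C$. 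The $f+1$ values of $u$ from the different paths are pairwise distinct because the paths meet only at $b$, giving $L\cup C\to R$. The other subcase gives $R\cup C\to L$ symmetrically.

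For the converse $(**)\Rightarrow(*)$, I argue by contradiction. Assume there is a partition $A,B,F$ with $A,B\neq\emptyset$, $|F|\le f$, violating propagation in both directions. Menger's theorem in $G-F$ yields $b_0\in B$ together with a vertex cut $S_b\subseteq B-\{b_0\}$ of size $\le f$ separating $A$ from $b_0$, and symmetrically $a_0\in A$ with $S_a\subseteq A-\{a_0\}$ of size $\le f$ separating $B$ from $a_0$. Let $L'\subseteq B-S_b$ be the set of vertices (including $b_0$) that reach $b_0$ in $G-F-S_b$; the cut property forces $L'\cap A=\emptyset$. Define $R'\subseteq A-S_a$ symmetrically around $a_0$. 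I now feed Condition~1 the partition
\[
L_1=L',\qquad R_1=R',\qquad F_1=F,\qquad C_1=S_a\cup S_b\cup(A-R'-S_a)\cup(B-L'-S_b),
\]
which is a valid partition of $\calV$ with $L_1,R_1\neq\emptyset$ and $|F_1|\le f$.

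Condition~1 then forces either $L_1\cup C_1\to R_1$ or $R_1\cup C_1\to L_1$. For the first alternative, $L_1\cup C_1=B\cup S_a\cup(A-R'-S_a)$; I classify the direct in-neighbors of $R'$ in this set: one in $B$ would extend a $B$-to-$a_0$ path in $G-F-S_a$ via $R'$, contradicting the cut property of $S_a$; one in $A-R'-S_a$ would itself reach $a_0$ in $G-F-S_a$ and thus belong to $R'$, another contradiction; so only nodes in $S_a$ can contribute, giving at most $|S_a|\le f<f+1$. The second alternative is defeated identically using $S_b$ and $L'$. The main obstacle is exactly this partition design: the naive choice $F_1=S_a\cup S_b$ can have $|F_1|$ as large as $2f$, so instead I keep $F_1=F$ and bury $S_a,S_b$ inside $C_1$, relying on the fact that the cut properties live in $G-F-S_a$ and $G-F-S_b$ rather than $G-F_1$ to rule out any dangerous in-neighbor. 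Once the partition is chosen correctly, the local-to-global reduction is just case analysis.
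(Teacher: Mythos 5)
First, a point of comparison: the paper does not actually prove this theorem. It is imported verbatim from \cite{Tseng2014} and used only as a bridge between Condition 1 and Condition NC, so there is no in-paper argument to measure yours against; your proposal supplies a self-contained proof in the spirit of the original source. Your forward direction is correct and complete: applying propagation to $A=L$, $B=R\cup C$ and extracting, on each of the $f+1$ pairwise disjoint paths, the last vertex lying in $L\cup C$ (whose successor must then lie in $\calV-F-(L\cup C)=R$) is exactly the right way to convert disjoint paths into $f+1$ distinct in-neighbors, and the disjointness of the paths away from the terminal vertex guarantees these in-neighbors are distinct.

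The converse has the right architecture (Menger's theorem, the reachability sets $L',R'$ behind the two separators, and a partition that funnels every surviving in-neighbor into a set of size at most $f$), but one intermediate claim is false as stated: the fan version of Menger's theorem does \emph{not} give a separator $S_b\subseteq B-\{b_0\}$; a minimum $(A,b_0)$-separator may contain vertices of $A$. (Take $A=\{a\}$ with a direct edge $a\to b_0$ and many internally disjoint $a$-to-$b_0$ routes through $B$: the only small separator is $\{a\}$, and no subset of $B-\{b_0\}$ separates $A$ from $b_0$ at all.) This is not cosmetic, because your explicit formula $C_1=S_a\cup S_b\cup(A-R'-S_a)\cup(B-L'-S_b)$ can then intersect $R_1=R'$ (when $S_b\cap R'\not=\O$), in which case $(L_1,C_1,R_1,F_1)$ is not a partition and Condition 1 cannot be invoked. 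The repair is routine and preserves your idea: keep $L'$ and $R'$ exactly as you defined them --- they automatically satisfy $L'\subseteq B-S_b$ and $R'\subseteq A-S_a$, since any vertex of $A-S_b$ reaching $b_0$ in $G-F-S_b$ would contradict the cut property of $S_b$ --- and simply set $C_1:=\calV-F-L'-R'$. In the in-neighbor classification, split $L_1\cup C_1$ as $(B-S_a)\cup(A-R'-S_a)\cup S_a$ rather than $B\cup(A-R'-S_a)\cup S_a$: a neighbor in $B-S_a$ still extends to a $(B,a_0)$-path in $G-F$ avoiding $S_a$, a neighbor in $A-R'-S_a$ still forces itself into $R'$, and all surviving in-neighbors lie in $S_a$, giving the bound $|S_a|\le f<f+1$; the second alternative is symmetric with $S_b$ and $L'$. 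With that correction the argument is complete.
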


For ease of future reference, we term the first condition in the above theorem as Condition Propagate.

%
%\section{Equivalence of the above two conditions with Condition NC}
%
%\begin{claim}
%The network in Fig. 1 \cite{DBLP:journals/corr/abs-1208-5075} satisfies Condition NC for $l\ge l^*$, where $l^*$ is the longest path in the network.
%\end{claim}
%
%This claim can be shown via enumeration. Thus the two-clique network does not serve as a counter example for Condition NC to hold.

%\begin{theorem}
%Given a graph $G$, Condition NC $\iff$ Condition Propagate $\iff$ Condition 1.
%\end{theorem}

\begin{proof}[Proof of Theorem \ref{equivaDirected}]
We first show that Condition NC implies Condition 1.

For any node partition $L, C, R, F$ of $G$ such that $L\not=\O, R\not=\O$ and $|F|\le f$, in the induced subgraph $G_F$,
at least one of the two conditions below must be true: (i) $R\cup C\Rightarrow_l L$; (ii) $L\cup C\Rightarrow_l R$. Without loss of generality, assume that $R\cup C\Rightarrow_l L$ and node $i\in L$ has at least $f+1$ disjoint paths from $R\cup C$. For each such path, there exist at least an edge that goes from $R\cup C$ to a node in $L$. Since all the paths considered are disjoint, thus $R\cup C$ contains at least $f+1$ incoming neighbors of $L$.

We next show that Condition Propagate implies Condition NC. We prove this by contradiction. Suppose, on the contrary, that Condition NC does not hold. There exists a partition $L, C, R, F$ of $G$ such that $L\not=\O, R\not=\O$ and $|F|\le f$, in the induced subgraph $G_F$,
 (i) $R\cup C\not\Rightarrow_l L$; (ii) $L\cup C\not\Rightarrow_l R$. For each node $i$ in $L$, there are at most $f$ disjoint $(R\cup C, i)$ paths excluding $F$. Thus $R\cup C \overset{\calV-\calF}{\not\rightsquigarrow}L$.

On the other hand, as $L\cup C\not\Rightarrow_l R$, for each node $j\in R$, there are at most $f$ disjoint paths from $L\cup C$ to $j$ excluding $F$, which further implies that there are at most $f$ disjoint paths from $L$ to $j$ excluding $F$.
Thus, $L \overset{\calV-\calF}{\not\rightsquigarrow}R\cup C$. This contradicts the assumption that Condition Propagate holds. Thus we conclude that Condition Propagate implies Condition NC.

In addition, we know Condition Propagate $\iff$ Condition 1. Therefore, Condition NC $\iff$ Condition Propagate $\iff$ Condition 1.
\end{proof}

%We can also prove that Condition 1 $\Rightarrow$ Condition NC directly.

%\begin{claim}
%For a given graph $G$, if Condition 1 holds then Condition NC also holds.
%\end{claim}
%
%\begin{proof}
%Suppose on the contrary, Condition NC does not hold. There exists a node partition $L, R, C, F$ of $\calV$ with $L, R$ are non-empty and $|F|\le f$ such that (i) $R\cup C\not\Rightarrow_l L$; (ii) $L\cup C\not\Rightarrow_l R$; (iii) either $L\cup C \to R$ or $R\cup C\to L$. WLOG, we assume that $R\cup C\to L$. Let $x_0$ be a node that has an incoming neighbor in $R\cup C$. Since $R\cup C\not\Rightarrow_l L$, we have $\left |N_{x_0}^-\cap \pth{C\cup R}\right |\le f$. We update the node partition as follows.
%
%\begin{align*}
%L_1\gets L-\{x_0\}, \quad \quad &C_1\gets C_1\cup \{x_0\}\\
%R_1\gets R, \quad  \quad &F_1\gets F.
%\end{align*}
%We observe that $R_1\cup C_1\not\Rightarrow_l L_1$. Suppose on the contrary that $R_1\cup C_1\Rightarrow_l L_1$, there exists a node $y\in L_1=L-\{x_0\}$ that has at least $f+1$ disjoint paths from $R_1\cup C_1=R\cup C\cup \{x_0\}$ excluding $F$. On the other hand, from $R\cup C\not\Rightarrow_l L$ we know $y$ has at most $f$ disjoint paths from $R\cup C$
%
%
%
%
%
%
%
%
%
%If $R_1\cup C_1\to L_1$, then we repeat the procedure. Since $R\cup C\not\Rightarrow_l L$
%

%
%
%\end{proof}

%\end{proof [Theorem \ref{t}]}

\end{document}